\DeclareFontFamily{U}{rsfs}{\skewchar\font127 }
\DeclareFontShape{U}{rsfs}{m}{n}{%
   <-6> rsfs5
   <6-8> rsfs7
   <8-> rsfs10
}{}
\newenvironment{breakablealgorithm}
{
		\begin{center}
			\refstepcounter{algorithm}
			\hrule height.8pt depth0pt \kern2pt
			\renewcommand{\caption}[2][\relax]{
				{\raggedright\textbf{\ALG@name~\thealgorithm} ##2\par}%
				\ifx\relax##1\relax 
				\addcontentsline{loa}{algorithm}{\protect\numberline{\thealgorithm}##2}%
				\else 
				\addcontentsline{loa}{algorithm}{\protect\numberline{\thealgorithm}##1}%
				\fi
				\kern2pt\hrule\kern2pt
			}
		}{
		\kern2pt\hrule\relax
	\end{center}
}
\DeclareFontFamily{U}{rsfs}{\skewchar\font127 }
\DeclareFontShape{U}{rsfs}{m}{n}{%
   <-6.5> rsfs5
   <6.5-8> rsfs7
   <8-> rsfs10
}{}
\newtheorem{theorem}{Theorem}
\newtheorem{lemma}{Lemma}
\newtheorem{definition}{Definition}
\newtheorem{problem}{Problem}
\DeclareFontFamily{U}{rsfs}{\skewchar\font127 }
\DeclareFontShape{U}{rsfs}{m}{n}{%
   <-6> rsfs5
   <6-8> rsfs7
   <8-> rsfs10
}{}
\begin{document}

\begin{sloppypar}
\title{EVA-S2PMLP: Secure and Scalable Two-Party MLP via Spatial Transformation}

\author{Shizhao Peng*}
\orcid{0000-0001-6333-5703}
\affiliation{%
  \institution{Beihang University}
  \streetaddress{37 Xueyuan Road}
  \city{Beijing}
  \country{China}
  \postcode{100191}
}
\email{by1806167@buaa.edu.cn}

\author{Shoumo Li}
\orcid{0009-0005-9228-2651}
\affiliation{%
  \institution{Beihang University}
  \streetaddress{37 Xueyuan Road}
  \city{Beijing}
  \country{China}
  \postcode{100191}
}
\email{22371327@buaa.edu.cn}

\author{Tianle Tao}
\orcid{0009-0005-1218-1923}
\affiliation{%
  \institution{Beihang University}
  \streetaddress{37 Xueyuan Road}
  \city{Beijing}
  \country{China}
  \postcode{100191}
}
\email{taotianle@buaa.edu.cn}

\begin{abstract}
Privacy-preserving neural network training in vertically partitioned scenarios is vital for secure collaborative modeling across institutions. This paper presents \textbf{EVA-S2PMLP}, an Efficient, Verifiable, and Accurate Secure Two-Party Multi-Layer Perceptron framework that introduces spatial-scale optimization for enhanced privacy and performance. To enable reliable computation under real-number domain, EVA-S2PMLP proposes a secure transformation pipeline that maps scalar inputs to vector and matrix spaces while preserving correctness. The framework includes a suite of atomic protocols for linear and non-linear secure computations, with modular support for secure activation, matrix-vector operations, and loss evaluation. Theoretical analysis confirms the reliability, security, and asymptotic complexity of each protocol. Extensive experiments show that EVA-S2PMLP achieves high inference accuracy and significantly reduced communication overhead, with up to $12.3\times$ improvement over baselines. Evaluation on benchmark datasets demonstrates that the framework maintains model utility while ensuring strict data confidentiality, making it a practical solution for privacy-preserving neural network training in finance, healthcare, and cross-organizational AI applications.
\end{abstract}

\maketitle


\section{Introduction}
In the past decade, a wide range of privacy-preserving machine learning frameworks have been proposed to enable secure training and inference of neural networks under secure multi-party computation (SMPC). These frameworks span various adversarial models, protocol abstractions, and system architectures.
From the perspective of computation paradigm and protocol primitives, SecureML~\cite{mohassel2017secureml} was an early attempt under the 2-party setting, integrating multiple SMPC techniques such as secret sharing for linear operations, garbled circuits for Boolean operations, and oblivious transfer or homomorphic encryption for preprocessing. However, limited protocol optimization led to high overhead and reduced precision.
Subsequent works aimed to improve performance and scalability. ABY3~\cite{mohassel2018aby3} introduced a three-party protocol supporting conversions among arithmetic, Boolean, and Yao circuits, and provided an efficient secret-shared multiplication protocol. SecureNN~\cite{wagh2019securenn} proposed optimized ReLU and maxpool implementations without garbled circuits, improving communication efficiency. FALCON~\cite{wagh2020falcon} and BLAZE~\cite{Patra_Suresh_2020} built on these ideas to further reduce communication cost and improve precision using optimized truncation protocols.
Other frameworks focused on usability and integration with existing ML ecosystems. TF-Encrypted~\cite{dahl2018private} supported TensorFlow-based secure training, while EzPC~\cite{Chandran_Gupta_Rastogi_Sharma_Tripathi_2019} provided a high-level language to abstract cryptographic complexity. CrypTen~\cite{knott2021crypten} and SecretFlow~\cite{ma2023secretflow} offered GPU acceleration and modular cryptographic backends.
Robustness under adversarial models also received attention. FLASH~\cite{Byali_Chaudhari_Patra_Suresh_2020}, Trident~\cite{Rachuri_Suresh_Chaudhari_2020}, and SWIFT~\cite{Koti_Pancholi_Patra_Suresh_2021} explored efficient secure inner product protocols under malicious or dishonest majority models. MP-SPDZ~\cite{keller2020mp} consolidated various 2PC/3PC protocols in both semi-honest and malicious settings.
Meanwhile, federated learning frameworks such as FATE~\cite{Liu_Fan_Chen_Xu_Yang_2021} and PySyft~\cite{Ziller_2021} addressed privacy at the system level by combining SMPC with differential privacy or homomorphic encryption. TenSEAL~\cite{benaissa2021tenseal} offered CKKS-based encrypted training, while Pencil~\cite{liu2024pencil} leveraged GPU-based homomorphic encryption in a vertical FL setting.
These frameworks collectively highlight the trend towards higher efficiency, better usability, and stronger adversarial resilience in privacy-preserving machine learning.
\section{Related Work}\label{Related Work}
\subsection{Data Disguising Techniques}
\noindent S2PM and S3PM are the foundational linear computation protocols within our framework, from which all other sub-protocols can be derived. Existing research on secure matrix multiplication primarily utilizes SMPC and data disguising techniques.
Frameworks such as Sharemind \cite{du2001privacy,bogdanov2008sharemind} achieve secure matrix multiplication by decomposing matrices into vector dot products $\alpha_i \cdot \beta_i$. Each entry $\alpha_i \cdot \beta_i$ is computed locally, and re-sharing in the ring $\mathbb{Z}_{2^{32}}$ is completed through six Du-Atallah protocols.
Other studies, including \cite{braun2022motion,wagh2019securenn,tan2021cryptgpu}, reduce the linear complexity of matrix multiplication by precomputing random triples $\langle a \rangle^s, \langle b \rangle^s, \langle c \rangle^s$ (where $S$ denotes additive secret sharing over $\mathbb{Z}_{2^l}$). 
In \cite{miller2021simple,furukawa2017high,guo2020efficient}, optimizations to the underlying ZeroShare protocol employ AES as a PRNG in ECB mode to perform tensor multiplication, represented as \textbf{$\langle z \rangle = \langle x \cdot y \rangle$}. 
The DeepSecure and XOR-GC frameworks \cite{10.1145/3195970.3196023,kolesnikov2008improved} utilize custom libraries and standard logic synthesis tools to parallelize matrix multiplication using GC in logic gate operations, enhancing computational efficiency. 
ABY3\cite{mohassel2018aby3} combines GC and SS methods, introducing a technique for rapid conversion between arithmetic, binary, and Yao's 3PC representations, achieving up to a $50\times$  reduction in communication for matrix multiplications.
Other approaches, including \cite{benaissa2021tenseal}, use CKKS encryption for vector-matrix multiplication, expanding ciphertext slots by replicating input vectors to accommodate matrix operations.
LibOTe \cite{libOTe} implements a highly efficient 1-out-of-n OT by adjusting the Diffie-Hellman key-exchange and optimizing matrix linear operations with a $64 \times 64 \rightarrow 128$-bit serial multiplier in $\mathbb{F}_{2^{255}-19}$ for precision.
The SPDZ framework and its upgrades \cite{keller2018overdrive,baum2019using} enhance efficiency and provable security by integrating hidden key generation protocols for BGV public keys, combining the strengths of HE, OT, and SS.
Frameworks like SecretFlow, Secure ML, Chameleon, and Delphi \cite{mishra2020delphi,ma2023secretflow} integrate sequential interactive GMW, fixed-point ASS, precomputed OT, and an optimized STP-based vector dot product protocol for matrix multiplication, achieving significant improvements in communication overhead and efficiency.\\

\subsection{Non-linear Operations in SMPC}
\noindent Nikolaenko \cite{nikolaenko2013privacy} proposes a server-based privacy-preserving linear regression protocol for horizontally partitioned data, combining linearly homomorphic encryption (LHE) and GC. 

For vertically partitioned data, Giacomelli and Gasc{\'o}n \cite{gascon2016privacy,giacomelli2018privacy} utilize Yao's circuit protocol and LHE, incurring high overhead and limited non-linear computation precision. 
In contrast, ABY3 \cite{mohassel2018aby3} reduces regression communication complexity using delayed re-sharing techniques. Building on this, Mohassel further improves linear regression accuracy with an approximate fixed-point multiplication method that avoids Boolean truncation \cite{mohassel2017secureml}.
Gilad \cite{gilad2019secure} presents the first three-server linear regression model, yet heavy GC use limits scalability due to high communication costs.
Liu \cite{liu2024pencil} combines DP with HE and SS to support linear regression across vertically and horizontally partitioned models, protecting model parameter privacy.
Rathee and Tan \cite{tan2021cryptgpu,rathee2020cryptflow2} leverage GPU acceleration and fixed-point arithmetic over shares, using 2-out-of-3 replicated secret sharing to support secure regression across three-party servers.
Ma \cite{ma2023secretflow} introduces the first SPU-based, MPC-enabled PPML framework, with compiler optimizations that substantially enhance training efficiency and usability in secure regression.

\section{System Framework and Objectives}\label{Preliminaries}
In this section, we present the design of the proposed EVA-S2PMLP framework for privacy-preserving multi-layer perceptron (MLP) training and inference under the two-party computation (2PC) setting.

\subsection{System Architecture of EVA-S2PMLP}
As shown in \autoref{fig:S2PMLP-Framework}, the EVA-S2PMLP framework consists of four secure low-level primitives (S2PM, S2PHP, S2PSCR, and S2PHHP), four secure high-level primitives (S2PHM, S2PRL, S2PSM, and S2PG-MLP), and two 2PC-based MLP protocols (S2PMLP-TR for training and S2PMLP-PR for prediction).

Upon receiving an MLP training or prediction request from the client, the computation task is distributed to two data owners. Each data owner locally encodes its private data and participates in the protocol by securely feeding the input into the corresponding MLP pipeline. These pipelines internally invoke the secure matrix multiplication and activation primitives to perform the forward and backward propagation steps. To ensure privacy, a data disguising technique is employed before entering the protocol, which randomizes the input data with pre-generated masks, preventing any raw data leakage during the interactive computation phase.

\begin{figure}[ht]
  \centering
  \includegraphics[width=1.0\hsize]{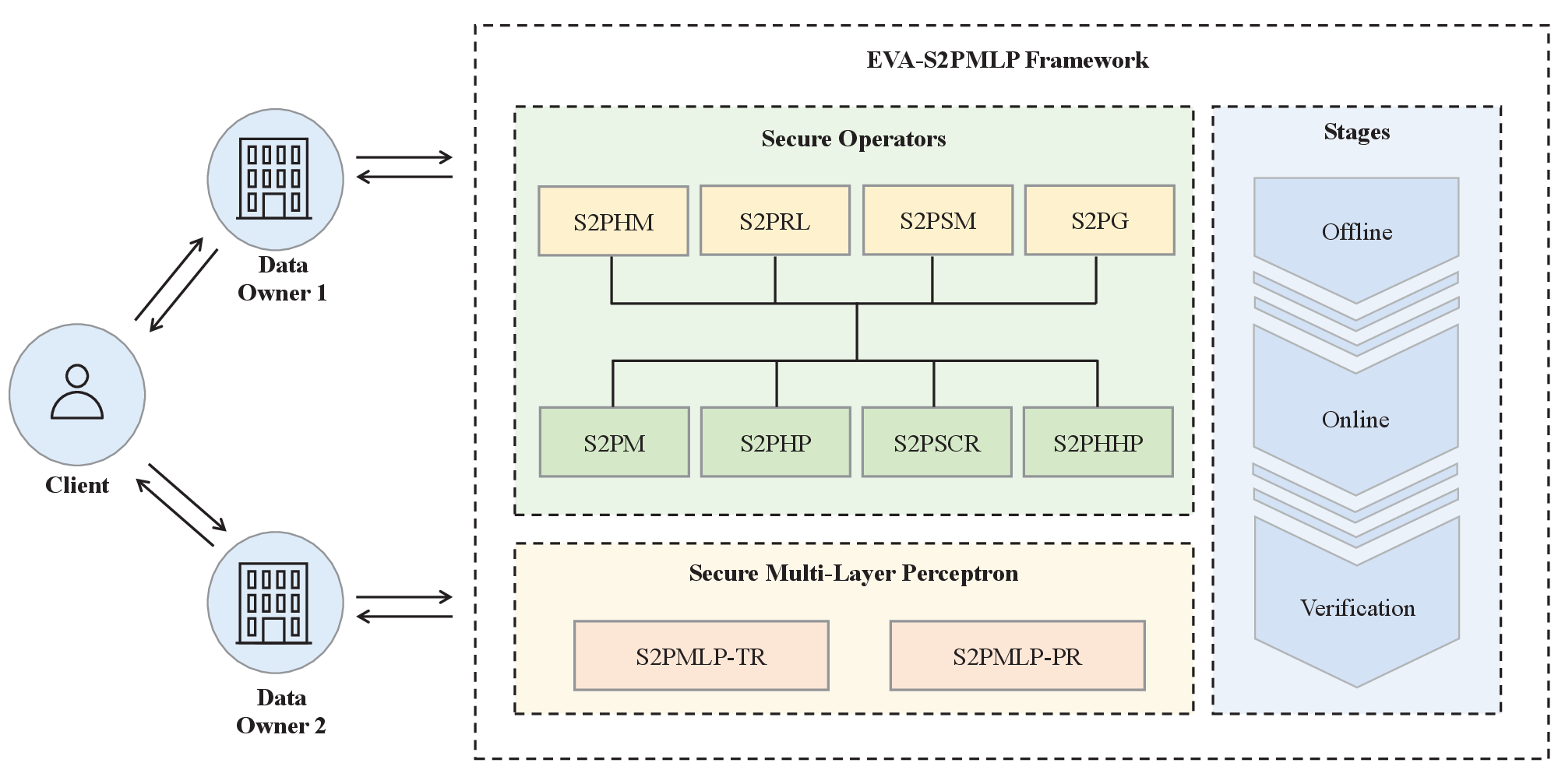}
  \Description{System architecture diagram of the EVA-S2PMLP framework.}
  \vspace{-0.6cm}
  \caption{System architecture of the EVA-S2PMLP framework}
  \label{fig:S2PMLP-Framework}
\end{figure}

Each protocol and operator in EVA-S2PMLP executes through three stages: (1) \textbf{Preprocessing}, where offline computations such as random matrix generation and format alignment are performed locally without communication; (2) \textbf{Online Computing}, which involves the core interactive execution between the two parties according to the predefined algorithm; and (3) \textbf{Verification}, where additional checks are performed using verification matrices to ensure the integrity of results. This design ensures correctness and verifiability while maintaining strong privacy guarantees throughout the lifecycle of model training or inference.

\subsection{Security Model}
\noindent For the definition of security, we follow a semi-honest model of S2PC using the criteria of computational indistinguishability between the view of ideal-world and the simulated views of real-world on a finite filed, and extended it to the scenario of a real number filed.

\begin{definition}[Semi-honest adversaries model \cite{evans2018pragmatic}]\label{def1}
    In a semi-honest adversary model, it is hypothesized that all participants follow the exact protocol during computation but may use input and intermediate results of their own to infer others’ original data.
\end{definition}

\begin{definition}[Computational Indistinguishability \cite{goldreich2004foundations}]\label{def2}
     A probability ensemble $X = \{X(a, n)\}_{a\in\{0,1\}^{*};n\in \mathbb{N}}$ is an infinite sequence of random variables indexed by $a\in \{0, 1\}^{*}$ and $n\in \mathbb{N}$. In the context of secure computation, the $a$ represents the parties’ inputs and $n$ denotes the security parameter. Two probability ensembles $X = \{X(a, n)\}_{a\in\{0,1\}^{*};n\in \mathbb{N}}$ and $Y = \{Y(a, n)\}_{a\in \{0,1\}^{*};n\in \mathbb{N}}$ are said to be computationally indistinguishable, denoted by $X\overset{c}{\equiv}Y$, if for every non-uniform polynomial-time algorithm D there exists a negligible function $\mu(\cdot)$ such that for every $a\in \{0, 1\}^{*}$ and every $n\in \mathbb{N}$,
\begin{align}
    |Pr[D(X(a,n))=1]-Pr[D(Y(a,n))=1]|\leq \mu(n)  
\end{align}
\end{definition}

\begin{definition}[Privacy in Semi-honest 2-Party Computation \cite{lindell2017simulate}]\label{def3}
    Let $f:\{0,1\}^{*}\times \{0,1\}^{*}\mapsto \{0,1\}^{*}\times\{0,1\}^{*}$ be a functionality, where $f_1(x,y)$ (resp.,$f_2(x,y)$) denotes the first(resp.,second) element of $f(x,y)$ and $\pi$ be a two-party protocol for computing $f$. The view of the first(resp., second) party during an execution of $\pi$ on $(x,y)$, denoted $VIEW^{\pi}_{1}(x, y)$ (resp., $VIEW^{\pi}_{2}(x, y)$), is $(x,r^1,m^1_1,\cdots,m^1_t)$ (resp.,$(y,r^2,m^2_1,\cdots ,m^2_t )$), where $r^1$ (resp.,$r^2$) represents the outcome of the first (resp.,second) party’s internal coin tosses, and $m^1_i$(resp.,$m^2_i$) represents the $i^{th}$ message it has received. The output of the first (resp.,second) party during an execution of $\pi$ on $(x,y)$, denoted $OUTPUT^{\pi}_1(x,y)$ (resp.,$OUTPUT^{\pi}_2(x,y)$), is implicit in the party’s view of the execution.
We say that $\pi$ privately computes $f(x,y)$ if there exist polynomial time algorithms, denoted $S_1$ and $S_2$ such that:
\begin{align}
 &\{(S_1(x,f_1(x, y)),f_2(x,y))\}_{x,y\in \{0,1\}^{*}}\nonumber \\
 &\overset{c}{\equiv} \{(VIEW^{\pi}_1(x, y), OUTPUT^{\pi}_2(x, y))\}_{x,y\in \{0,1\}^{*}} \nonumber
 \end{align}  
 \begin{align}
 &\{(f_1(x, y), S_2(y, f_2(x, y)))\}_{x,y\in \{0,1\}^{*}}\nonumber \\
 &\overset{c}{\equiv} \{(OUTPUT^{\pi}_1(x, y), VIEW^{\pi}_2(x, y))\}_{x,y\in \{0,1\}^{*}}
 \end{align}
 where $\overset{c}{\equiv}$ denotes computational indistinguishability and $|x|=|y|$. We stress that above $VIEW^{\pi}_{1}(x,y)$ and $VIEW^{\pi}_{2}(x,y)$, $OUTPUT^{\pi}_1(x,y)$ and $OUTPUT^{\pi}_2(x, y)$ are related random variables, defined as a function of the same random execution.
\end{definition}

\begin{definition}[Privacy in Semi-honest 3-party computation]\label{def4}
Let $f=(f_1,f_2,f_3)$ be a functionality. We say that $\pi$ privately computes $f(x,y,z)$ if there exist polynomial time algorithms, denoted $S_1$, $S_2$ and $S_3$ such that:
\begin{align}
 & \{(S_1(x,f_1(x, y, z)),f_2(x,y,z),f_3(x,y,z))\}_{x,y,z\in \{0,1\}^{*}}\nonumber \\
 & \overset{c}{\equiv} \{(VIEW^{\pi}_1(x, y, z), OUTPUT^{\pi}_2(x, y, z),\nonumber \\
 & OUTPUT^{\pi}_3(x, y, z)\}_{x,y,z\in \{0,1\}^{*}} \nonumber
 \end{align}  
 \begin{align}
 & \{(f_1(x, y, z),S_2(y,f_2(x,y,z)),f_3(x,y,z))\}_{x,y,z\in \{0,1\}^{*}}\nonumber \\
 & \overset{c}{\equiv} \{(OUTPUT^{\pi}_1(x, y, z), VIEW^{\pi}_2(x, y, z),\nonumber \\
 & OUTPUT^{\pi}_3(x, y, z)\}_{x,y,z\in \{0,1\}^{*}} \nonumber
 \end{align}  
 \begin{align}
 & \{(f_1(x, y, z),f_2(x,y,z),S_3(z,f_3(x,y,z)))\}_{x,y,z\in \{0,1\}^*{}}\nonumber \\
 & \overset{c}{\equiv} \{(OUTPUT^{\pi}_1(x, y, z), OUTPUT^{\pi}_2(x, y, z)\nonumber \\
 & VIEW^{\pi}_3(x, y, z)\}_{x,y,z\in \{0,1\}^{*}} 
 \end{align}
where, again, $\overset{c}{\equiv}$ denotes computational indistinguishability and $|x|=|y|=|z|$. $VIEW^{\pi}_{1}(x,y,z)$. $VIEW^{\pi}_{2}(x,y,z)$ and $VIEW^{\pi}_{3}(x,y,z)$, $OUTPUT^{\pi}_1(x,y,z)$, $OUTPUT^{\pi}_2(x, y, z)$ and $OUTPUT^{\pi}_3(x, y, z)$ are related random variables, defined as a function of the same random execution.
\end{definition}

\noindent This definition is for the general case of the real-ideal security paradigm defined in a formal language and for deterministic functions, as long as they can ensure that the messages $\{S_i(x,f_i(x)), (i\in 1,2...)\}$ generated by the simulator in the ideal-world are distinguishable form $\{view_i^\pi(x), (i\in 1,2...)\}$ in the real-world, then it can be shown that a protocol privately computes $f$ in a finite field. Furthermore, a heuristic model defined on a real number field is introduced as follows \cite{du2004privacy}:
\begin{definition}[Security Model in field of real number]\label{def5}
    All inputs in this model are in the real number field $\mathbb{R}$. Let $I_A$ and $I_B$ represent Alice's and Bob's private inputs, and $O_A$ and $O_B$ represent Alice's and Bob's outputs, respectively. Let $\pi$ denote the two-party computation involving Alice and Bob, where $(O_A, O_B)=\pi(I_A, I_B)$. Protocol $\pi$ is considered secure against dishonest Bob if there is an infinite number of $(I^{*}_A, O^{*}_A)$ pairs in $(\mathbb{R},\mathbb{R})$ such that $(O^{*}_A,O_B)=\pi(I^{*}_A,I_B)$. A protocol $\pi$ is considered secure against dishonest Alice if there is an infinite number of $(I^{*}_B, O^{*}_B)$ pairs in $(\mathbb{R},\mathbb{R})$ such that $(O_A, O^{*}_B)=\pi(I_A, I^{*}_B)$.
\end{definition}

\noindent A protocol is considered secure in the field of real numbers if, for any input/output combination $(I,O)$ from one party, there are an infinite number of alternative inputs in $\mathbb{R}$ from the second party that will result in $O$ from the first party's perspective given its own input $I$. From the adversary's point of view, this infinite number of the other party’s input/output represents a kind of stochastic indistinguishability in real number field, which is similar to computational indistinguishability in the real-ideal paradigm. Moreover, a simulator in the ideal world is limited to merely accessing the corrupted parties' input and output. In other words, the protocol $\pi$ is said to securely compute $f$ in the field of real numbers if, and only if, computational indistinguishability is achieved with any inputs from non-adversaries over a real number field, and the final outputs generated by the simulator are constant and independent from all inputs except for the adversaries.

\section{PROPOSED WORK}\label{Proposed-Work}
This section primarily introduces the detailed procedures of the basic protocols in the EVA-S2PMLP framework, including S2PRIP, S2PDRL, S2PRL, S2PHP, S2PHHP, S2PSCR, S2PSM, and S2PG-MLP. In addition, we provide a correctness analysis for these protocols.

\subsection{Secure Two-Party Row Inner Product Protocol (S2PRIP)}
The problem definition of S2PRIP is as follows:
\begin{problem}[Secure Two-Party Row Inner Product Protocol]\label{Problem-S2PRIP}
    Alice has a private matrix $A$, and Bob has a private matrix $B$, both of which have dimensions $n \times m$. They aim to perform a secure two-party row-wise inner product computation such that Alice obtains a single-column matrix $V_a$ and Bob obtains a single-column matrix $V_b$, satisfying $V_a + V_b = A \circledast B$.
\end{problem}

\textbf{Description of the S2PRIP Protocol:}
Similar to the S2PM protocol described in the appendix, the proposed S2PRIP protocol consists of three phases: the preprocessing phase (see Algorithm \ref{alg:S2PRIP-Preprocessing}), the online computation phase (see Algorithm \ref{alg:S2PRIP-Computing}), and the result verification phase (see Algorithm \ref{alg:S2PRIP-Verification}).

\textbf{Preprocessing Phase:} In Algorithm~\ref{alg:S2PRIP-Preprocessing}, the computation server (CS) generates a set of random private matrices $(R_a, r_a)$ and $(R_b, r_b)$ for Alice and Bob, respectively, to mask their input matrices $A$ and $B$. Additionally, for the purpose of subsequent result verification (see Algorithm~\ref{alg:S2PRIP-Verification}), the standard matrix $S_t = R_a \circledast R_b$ is also sent to Alice and Bob.

\begin{breakablealgorithm}
  \caption{S2PRIP CS Preprocessing Phase}
  \label{alg:S2PRIP-Preprocessing}
  \begin{algorithmic}[1] 
      \Require $n, m$
      \Ensure Alice $\Leftarrow$ $(R_a, r_a, S_t)$, Bob $\Leftarrow$ $(R_b, r_b, S_t)$
    
      \State  $R_a \gets$ generate random matrix \Comment{$R_a \in \mathbb{R}^{n \times m}$}
      \State  $R_b \gets$ generate random matrix \Comment{$R_b \in \mathbb{R}^{n \times m}$}
      \State  $S_t \gets R_a \circledast R_b$ \Comment{$S_t \in \mathbb{R}^{n \times 1}$}
      \State  $r_a, r_b \gets$ generate random matrices such that $r_a + r_b = S_t$ 
      \Statex \Comment{$r_a, r_b \in \mathbb{R}^{n \times 1}$}
      \State  Alice $\gets (R_a, r_a, S_t)$ \;
      \State  Bob $\gets (R_b, r_b, S_t)$ \;
      \State \Return $(R_a, r_a, S_t)$, $(R_b, r_b, S_t)$ \;
  \end{algorithmic}
\end{breakablealgorithm}

\textbf{Online Phase:} After the CS preprocessing phase, the online phase consists of a series of matrix computations, as described in Algorithm \ref{alg:S2PRIP-Computing}. Note that the final row-wise inner product $A \circledast B$ is masked by $V_a$ and $V_b$ to prevent Alice or Bob from knowing the actual result. The correctness of the result can be easily proven:  
$V_a + V_b = [(\hat{A} \circledast B + (r_a - V_b)) + r_a - (R_a \circledast \hat{B})] + V_b = [A \circledast B - V_b + (r_a + r_b - R_a \circledast R_b)] + V_b = A \circledast B$.

\begin{breakablealgorithm}
  \caption{S2PRIP Online Computation Phase}
  \label{alg:S2PRIP-Computing}
  \begin{algorithmic}[1] 
      \Require{$A \in \mathbb{R}^{n \times m}$ and $B \in \mathbb{R}^{n \times m}$}
      \Ensure{Alice $\Leftarrow (V_a, VF_a)$, Bob $\Leftarrow (V_b, VF_b)$}
    
      \State $\hat{A} = A + R_a$ and send $\hat{A} \Rightarrow$ Bob \Comment{$\hat{A} \in \mathbb{R}^{n \times m}$}
      \State $\hat{B} = B + R_b$ and send $\hat{B} \Rightarrow$ Alice \Comment{$\hat{B} \in \mathbb{R}^{n \times m}$}
      \State $V_b \gets$ generate random matrix \Comment{$V_b \in \mathbb{R}^{n \times 1}$}
      \State $VF_b = V_b - \hat{A} \circledast B$ \Comment{$VF_b \in \mathbb{R}^{n \times 1}$}
      \State $T = r_b - VF_b$ \Comment{$T \in \mathbb{R}^{n \times 1}$}
      \State Send $(VF_b, T) \Rightarrow$ Alice
      \State $V_a = T + r_a - (R_a \circledast \hat{B})$ \Comment{$V_a \in \mathbb{R}^{n \times 1}$}
      \State $VF_a = V_a + R_a \circledast \hat{B}$ and send $VF_a \Rightarrow$ Bob \Comment{$VF_a \in \mathbb{R}^{n \times 1}$}
      \State \Return $(V_a, VF_a)$, $(V_b, VF_b)$ 
    \end{algorithmic}
\end{breakablealgorithm}

\textbf{S2PRIP Verification Phase:} Similar to the S2PM result verification module in the appendix, we also propose a result verification algorithm for S2PRIP, as shown in Algorithm \ref{alg:S2PRIP-Verification}.

\begin{breakablealgorithm}
    \caption{S2PRIP Result Verification Phase} \label{alg:S2PRIP-Verification}
    \begin{algorithmic}[1]
        \Require{$VF_a, VF_b, S_t \in \mathbb{R}^{n \times 1}$} 
        \Ensure{Accept or Reject} 
        \For{$i=1:l$}
            \State Alice generates a vector $\hat{\delta_a} \in \mathbb{R}^{n \times 1}$, where each element is randomly chosen to be 0 or 1
            \State Alice then computes $E_r = (VF_a + VF_b - S_t) \odot \hat{\delta_a}$
            \If{$E_r\neq (0,0,\cdots,0)^T$}
            \State \Return Reject
            \EndIf
        \EndFor
        \State Bob repeats the same verification process as Alice
        \State \Return Accept
    \end{algorithmic}
\end{breakablealgorithm}

Similar to the verification analysis of S2PM, we can also derive that the verification failure probability of S2PRIP is $P_f(\text{S2PRIP}) \leq \frac{1}{4^l} \approx 9.09 \times 10^{-13} \ (l = 20)$. Based on the security analysis of S2PM, we can derive the following theorem:

\begin{theorem}\label{theorem:Secure-S2PRIP}
    The S2PRIP protocol is secure under the semi-honest adversarial model.
\end{theorem}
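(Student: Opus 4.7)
The plan is to invoke the real-number security model of Definition~\ref{def5} and show that for each corrupted party there exist infinitely many alternative inputs of the honest party that are consistent with the corrupted party's observed transcript and the stipulated output. Since the CS preprocessing phase is non-interactive with respect to Alice and Bob's private inputs, I only need to construct two simulators $S_A$ and $S_B$ corresponding to a corrupted Alice and a corrupted Bob, respectively, that operate purely on the online-phase transcript together with the correlated randomness delivered offline.

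First I would consider the case in which Alice is corrupted. Her view consists of $(A, R_a, r_a, S_t)$, her internal coins, and the messages received from Bob, namely $\hat{B} = B + R_b$, $VF_b = V_b - \hat{A}\circledast B$, and $T = r_b - VF_b$, together with her output $V_a$. The simulator $S_A$ picks an arbitrary $B^{*} \in \mathbb{R}^{n\times m}$ and sets $R_b^{*} = \hat{B} - B^{*}$ and $V_b^{*} = VF_b + \hat{A}\circledast B^{*}$; by inspection the triple $(B^{*}, R_b^{*}, V_b^{*})$ reproduces every message Alice ever sees, and the induced $T^{*} = r_b - VF_b$ is identical to the real $T$. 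Because $B^{*}$ ranges freely over $\mathbb{R}^{n\times m}$, infinitely many consistent alternatives exist, giving the stochastic indistinguishability demanded by Definition~\ref{def5}.

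The symmetric case of a corrupted Bob is handled analogously. Bob's view is $(B, R_b, r_b, S_t)$, his coins used to draw $V_b$, and the incoming $\hat{A} = A + R_a$ and $VF_a = V_a + R_a\circledast \hat{B}$. For any hypothetical $A^{*}$ the simulator $S_B$ sets $R_a^{*} = \hat{A} - A^{*}$ and $V_a^{*} = VF_a - R_a^{*}\circledast\hat{B}$, and again infinitely many such $(A^{*}, V_a^{*})$ pairs reproduce Bob's transcript exactly. In both cases the one-time-pad structure of $R_a$ and $R_b$, which are fresh uniform random matrices in $\mathbb{R}^{n\times m}$ from the CS, is what guarantees that the masked messages $\hat{A}$ and $\hat{B}$ carry no information about $A$ or $B$.

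The main obstacle will be verifying that the verification transcripts $(VF_a, VF_b)$ do not impose a second constraint that could narrow the space of admissible simulated inputs. The concern is that the relation $VF_a + VF_b = S_t$, which both parties may check using the publicly agreed $S_t$, might rule out some of the candidate pairs. I would resolve this by expanding $VF_a = V_a + R_a\circledast B + R_a\circledast R_b$ and combining it with the online-phase correctness $V_a + V_b = A\circledast B$ already established above the theorem; this yields $VF_a + VF_b = S_t$ as an algebraic identity that the simulated $(B^{*}, V_b^{*})$ or $(A^{*}, V_a^{*})$ automatically satisfy, so the verification step provides no additional leverage to the adversary. Combining the two simulators then establishes the claim under Definition~\ref{def5}, and hence S2PRIP is secure in the semi-honest model.
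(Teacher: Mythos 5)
The paper itself gives no explicit proof of Theorem~\ref{theorem:Secure-S2PRIP}; it defers to ``the security analysis of S2PM,'' whose in-paper template is the simulator construction in the proof of Theorem~\ref{theorem:Secure-S2PHP} (build the honest party's messages from scratch, subject to the correlation constraints, and invoke Lemma~\ref{lemma1}). Your route via Definition~\ref{def5} --- inverting the observed transcript to exhibit infinitely many consistent counterfactual inputs --- is a legitimate alternative, and your observation that the verification values satisfy $VF_a+VF_b=S_t$ identically is correct and worth making explicit. However, there is a genuine gap in the central step: the claim that $B^{*}$ ``ranges freely over $\mathbb{R}^{n\times m}$'' is false. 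The correlated randomness is not independent across the two parties: the preprocessing stipulates $S_t=R_a\circledast R_b$, and Alice's view contains both $R_a$ and $S_t$. Hence any counterfactual $R_b^{*}=\hat{B}-B^{*}$ must satisfy $R_a\circledast R_b^{*}=S_t$, i.e.\ $R_a\circledast B^{*}=R_a\circledast\hat{B}-S_t$. This is not merely a consistency nicety: if you drop it, the pair $(B^{*},V_b^{*})$ you construct is not a valid input/output pair of the protocol, since
\begin{align}
V_a+V_b^{*} \;=\; A\circledast B^{*}+\bigl(S_t-R_a\circledast R_b^{*}\bigr),
\end{align}
which differs from $A\circledast B^{*}$ whenever $R_a\circledast R_b^{*}\neq S_t$. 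So Definition~\ref{def5}'s requirement that $(O_A,O_B^{*})=\pi(I_A,I_B^{*})$ fails for almost all of your candidates. The symmetric constraint $(\hat{A}-A^{*})\circledast R_b=S_t$ is likewise missing in the corrupted-Bob case.

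The theorem survives because the omitted constraint is underdetermined: for each row $i$ it is a single scalar equation $\langle (R_a)_i,(B^{*})_i\rangle=\langle (R_a)_i,\hat{B}_i\rangle-(S_t)_i$ in $m$ unknowns, so (for $m\ge 2$ and nonzero rows of $R_a$) the admissible $B^{*}$ still form an infinite affine subspace, and your argument goes through once $B^{*}$ is drawn from that subspace rather than from all of $\mathbb{R}^{n\times m}$. This is exactly the point at which the paper's S2PHP proof explicitly generates $R_b'$ subject to $R_a\times R_b'=S_t$ and appeals to Lemma~\ref{lemma1}; your write-up needs the analogous step, together with the one-line check that the restricted choice still reproduces $\hat{B}$, $VF_b$, $T$, and the correct outputs.
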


\subsection{Secure Two-Party DReLU Protocol (S2PDRL)}

The problem definition for S2PDRL is as follows:

\begin{problem}[Secure Two-Party ReLU Derivative]\label{Problem-S2PDRL}
    Alice has a private matrix $A$, and Bob has a private matrix $B$, both with dimensions $n \times m$. They want to perform a secure two-party ReLU derivative operation such that Alice obtains a matrix $V_a$ and Bob obtains a matrix $V_b$, satisfying $V_a = V_b = relu'(A+B)$.
\end{problem}

\textbf{S2PDRL Protocol Description:} In S2PDRL, Alice and Bob first convert their private matrices into vectors in row-major order, denoted as $\boldsymbol{a} = M2v(A)$ and $\boldsymbol{b} = M2v(B)$, respectively. Each of them generates a random positive real number, denoted as $p$ and $q$. Then, Alice splits each $a_i \in \boldsymbol{a}(1 \le i \le nm)$ randomly into $\rho$ real numbers $\alpha_{i}^{(1)}, \alpha_{i}^{(2)}, \cdots, \alpha_{i}^{(\rho)}$, forming a vector $\boldsymbol{\alpha_i} = p \cdot (\alpha_{i}^{(1)}, 1, \alpha_{i}^{(2)}, 1, \cdots, \alpha_{i}^{(\rho)}, 1)^T$. Alice then constructs a matrix $T_a$ by placing each vector $\boldsymbol{\alpha_i}$ as a row sequentially. Similarly, Bob splits each $b_i \in \boldsymbol{b}(1 \le i \le nm)$ randomly into $\rho$ real numbers $\beta_{i}^{(1)}, \beta_{i}^{(2)}, \cdots, \beta_{i}^{(\rho)}$, forming a vector $\boldsymbol{\beta_i} = q \cdot (1, \beta_{i}^{(1)}, 1, \beta_{i}^{(2)}, \cdots, 1, \beta_{i}^{(\rho)})^T$. Bob constructs a matrix $T_b$ by placing each vector $\boldsymbol{\beta_i}$ as a column sequentially. 
\begin{breakablealgorithm}
    \caption{S2PDRL}
    \label{alg:S2PDRL}
    \begin{algorithmic}[1]
        \Require {$A, B\in\mathbb{R}^{n \times m}$ and $\rho \ge 2$}
        \Ensure {$V_a + V_b = relu'(A + B)$ and $V_a, V_b \in \mathbb{R}^{n \times m}$}
        
        \State$\boldsymbol{a} = M2v(A)$ and $\boldsymbol{b} =M2v(B)$ \Comment{$\boldsymbol{a} \in \mathbb{R}^{nm\times 1}$, $\boldsymbol{b} \in \mathbb{R}^{nm\times 1}$}
        
        \State$p, q \gets$ Generate random positive real numbers \Comment{$p,q > 0$}
        \For{$i:=1$ \textbf{to} $nm$}
            \State$\boldsymbol{\alpha_i} = p \cdot (\alpha_{i}^{(1)}, 1, \alpha_{i}^{(2)}, 1, \cdots, \alpha_{i}^{(\rho)}, 1)^T$ 
            \Statex \Comment{$\boldsymbol{\alpha_i} \in \mathbb{R}^{2\rho \times 1}$, $\sum_{j=1}^{\rho}\alpha_{i}^{(j)} = a_i$}
            \State$\boldsymbol{\beta_i} = q \cdot (1, \beta_{i}^{(1)}, 1, \beta_{i}^{(2)}, \cdots, 1, \beta_{i}^{(\rho)})^T$ 
            \Statex \Comment{$\boldsymbol{\beta_i} \in \mathbb{R}^{2\rho \times 1}$, $\sum_{j=1}^{\rho}\beta_{i}^{(j)} = b_i$}
        \EndFor
        \State$T_a = [\boldsymbol{\alpha_1}, \boldsymbol{\alpha_2}, \cdots, \boldsymbol{\alpha_{nm}}]^T$ \Comment{$T_a \in \mathbb{R}^{nm \times 2\rho}$}
        \State$T_b = [\boldsymbol{\beta_1}, \boldsymbol{\beta_2}, \cdots, \boldsymbol{\beta_{nm}}]$ \Comment{$T_b \in \mathbb{R}^{2\rho \times nm}$}
        \State$U_a, U_b \gets \textbf{S2PM}(T_a, T_b)$ \Comment{$U_a, U_b \in \mathbb{R}^{nm \times nm}$}
        \State$\hat{V}_a = v2M(diag(U_a))\Rightarrow$ Bob \Comment{$\hat{V}_a \in \mathbb{R}^{n \times m}$}
        \State$\hat{V}_b = v2M(diag(U_b))\Rightarrow$ Alice \Comment{$\hat{V}_b \in \mathbb{R}^{n \times m}$}
        \State$V_a = V_b = relu'(\hat{V}_a + \hat{V}_b)$ \Comment{$V_a, V_b \in \mathbb{R}^{n \times m}$}
        \State\Return $V_a$, $V_b$
    \end{algorithmic}
\end{breakablealgorithm}
Alice and Bob jointly compute $T_a \times T_b = U_a + U_b$ using the S2PM protocol. They then extract the diagonal elements of $U_a$ and $U_b$, reconstruct matrices $\hat{V}_a$ and $\hat{V}_b$ with the same dimensions as their original private matrices $A$ and $B$, denoted as $\hat{V}_a = v2M(diag(U_a))$ and $\hat{V}_b = v2M(diag(U_b))$. Alice sends $\hat{V}_a$ to Bob, and Bob sends $\hat{V}_b$ to Alice. Finally, Alice computes $V_a = relu'(\hat{V}_a + \hat{V}_b)$, and Bob computes $V_b = relu'(\hat{V}_a + \hat{V}_b)$. The detailed process is shown in Algorithm \ref{alg:S2PDRL}. It is easy to verify that $V_a + V_b = relu'(v2M(diag(U_a + U_b))) = relu'(v2M(diag(T_a \times T_b))) = relu'(v2M(({\boldsymbol{\alpha_1}}^T \cdot \boldsymbol{\beta_1}, {\boldsymbol{\alpha_2}}^T \cdot \boldsymbol{\beta_2}, \cdots, {\boldsymbol{\alpha_{nm}}}^T \cdot \boldsymbol{\beta_{nm}})^T)) = relu'(v2M(pq \cdot (\sum_{i=1}^{\rho} \alpha_{i}^{(1)} + \sum_{i=1}^{\rho} \beta_{i}^{(1)}, \sum_{i=1}^{\rho} \alpha_{i}^{(2)} + \sum_{i=1}^{\rho} \beta_{i}^{(2)}, \cdots, \sum_{i=1}^{\rho} \alpha_{i}^{(nm)} + \sum_{i=1}^{\rho} \beta_{i}^{(nm)})^T)) = relu'(v2M((a_1 + b_1, a_2 + b_2, \cdots, a_{nm} + b_{nm})^T)) = relu'(v2M(\boldsymbol{a} + \boldsymbol{b})) = relu'(A + B)$.

\textbf{Optimization of the S2PDRL Protocol:} S2PDRL can be optimized using the S2PRIP protocol for parallelization, as shown in Algorithm \ref{alg:S2PDRL-opt}. The main idea of the algorithm remains unchanged, but redundant computations are eliminated by utilizing the secure row-wise inner product operation. The undetected anomaly situation in S2PDRL only occurs when the sub-protocol S2PRIP fails in anomaly detection. Therefore, the probability that S2PDRL fails to detect a computational anomaly is $P_f(\text{S2PDRL}) = P_f(\text{S2PRIP}) \leq \frac{1}{4^l} \approx 9.09 \times 10^{-13} \ (l=20)$, which is sufficiently small to be negligible.

We can prove the security of the S2PDRL protocol using a security proof process similar to that of S2PHP.

\begin{theorem}\label{theorem:Secure-S2PDRL}
    The S2PDRL protocol is secure under the semi-honest adversarial model.
\end{theorem}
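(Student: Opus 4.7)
The plan is to follow the modular composition strategy suggested by the authors, reducing the security of S2PDRL to that of its main sub-protocol S2PM and then arguing that the only additional messages exchanged beyond the S2PM invocation---namely the diagonal matrices $\hat{V}_a$ and $\hat{V}_b$---do not leak information beyond the common output $relu'(A+B)$. Concretely, I would work within the real-number security model of Definition~\ref{def5}, exhibiting for each corrupted party an infinite family of alternative inputs of the honest party that remain consistent with the corrupted party's view, complemented by standard simulator-based reasoning from Definition~\ref{def3}.

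First I would fix the adversary to be a corrupted Alice; the symmetric case for Bob follows by swapping the roles of $p$ and $q$ and of the row/column layouts of $T_a$ and $T_b$. Her real-world view consists of (i) her own input $A$ and random coins $(p,\{\alpha_i^{(j)}\})$, (ii) the transcript of the S2PM call on $(T_a,T_b)$, and (iii) the matrix $\hat{V}_b$ received from Bob. Part (ii) is dispatched by invoking the established semi-honest security of S2PM: there is a simulator that produces an S2PM transcript indistinguishable from the real execution using only $T_a$ and Alice's share $U_a$, after which sequential composition lets me treat the S2PM call as an ideal functionality.

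The nontrivial step is (iii). I would observe that, by the correctness calculation already given in the description, $\hat{V}_a+\hat{V}_b$ is the matrix whose $i$-th entry equals $pq(a_i+b_i)$, so given her view Alice can recover $q(a_i+b_i)$ for every $i$ (since she knows $p$) but nothing finer. I then argue that for any candidate positive real $q^{*}$, the equations $q^{*}(a_i+b_i^{*}) = q(a_i+b_i)$ admit the unique solution $b_i^{*} = q(a_i+b_i)/q^{*} - a_i$, yielding a matrix $B^{*}$ that further satisfies $relu'(A+B^{*}) = relu'(A+B)$ because both $q$ and $q^{*}$ are positive and hence sign-preserving. Sweeping $q^{*}$ over $\mathbb{R}_{>0}$ produces the infinite family of $(I_B^{*},O_B^{*})$ pairs required by Definition~\ref{def5}. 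The simulator $S_1$ is then obtained by sampling a random $q^{*}>0$, synthesizing a $\hat{V}_b^{*}$ consistent with the public output, and invoking the S2PM simulator on the resulting share structure.

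The main obstacle I anticipate is making precise the claim that the positivity of the scalars $p,q$ together with the $q$-masking of the signed sums leave exactly a one-dimensional ambiguity in Bob's input, so that no magnitude information about individual entries survives beyond the sign pattern $relu'(A+B)$ that both parties are entitled to learn. Formalizing this requires a careful accounting of the conditional distribution of $\hat{V}_b$ given $(A,p,U_a,relu'(A+B))$, and in particular ruling out subtle leakage through joint ratios of entries that might arise because a single $q^{*}$ scales all coordinates uniformly. Once that is settled, the assembled simulator inherits computational indistinguishability from S2PM by sequential composition, and the analogous construction for corrupted Bob closes the proof.
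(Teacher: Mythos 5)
Your proposal follows essentially the same route as the paper: the paper likewise dispatches the interactive portion by reduction to the security of the underlying secure-multiplication layer (it cites Theorem~\ref{theorem:Secure-S2PHP}, whose proof is itself the simulator construction around S2PM that you invoke via composition), and then argues separately that the additional disclosure of $\hat{V}_a+\hat{V}_b = pq\cdot(A+B)$ is harmless because each party knows only its own scalar. Your explicit family $b_i^{*} = q(a_i+b_i)/q^{*} - a_i$ for $q^{*}>0$ is precisely the Definition~\ref{def5}-style instantiation of the paper's one-line assertion that $pq\cdot(A+B)$ ``does not allow either party to deduce the private input of the other.'' The obstacle you flag --- that a single positive scalar masks all coordinates uniformly, so Alice actually learns the direction of $A+B$ (all entry ratios), which is strictly more than the output $relu'(A+B)$ --- is a legitimate observation, but it is equally unaddressed by the paper's own proof; the paper implicitly retreats to the weaker real-number security notion of Definition~\ref{def5}, under which your infinite family of consistent $(B^{*},q^{*})$ pairs already suffices, and no stronger claim about the conditional distribution of $\hat{V}_b$ given $(A,p,U_a,relu'(A+B))$ is attempted there either.
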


\begin{proof}
Since the result of S2PDRL is a public value and does not require protection, we only need to analyze whether the private inputs $A$ and $B$ are exposed during the execution of the protocol. Considering that the online computation phase of the S2PDRL protocol is the same as that of S2PHP, with the only difference being in the offline local processing, we can similarly derive, based on Theorem~\ref{theorem:Secure-S2PHP}, that the online computation of S2PDRL is secure under the semi-honest adversarial model. Thus, we only need to consider whether the public disclosure of the result affects security. 

Since Alice and Bob each locally generate a random number, denoted as $p$ and $q$, respectively, the public computation result only reveals the value of $pq \cdot (A + B)$, which does not allow either party to deduce the private input of the other. Therefore, under the semi-honest adversarial model, the S2PDRL protocol $f(A,B) = \text{relu}'(A + B)$ is secure.
\end{proof}

\begin{breakablealgorithm}
    \caption{Optimized S2PDRL}
    \label{alg:S2PDRL-opt}
    \begin{algorithmic}[1]
        \Require {$A, B \in \mathbb{R}^{n \times m}$ and $\rho \ge 2$}
        \Ensure {$V_a + V_b = relu'(A + B)$ and $V_a, V_b \in \mathbb{R}^{n \times m}$}
        
        \State $\boldsymbol{a} = M2v(A)$ and $\boldsymbol{b} = M2v(B)$ \Comment{$\boldsymbol{a} \in \mathbb{R}^{nm \times 1}$, $\boldsymbol{b} \in \mathbb{R}^{nm \times 1}$}
        
        \State $p, q \gets$ Generate random positive real numbers \Comment{$p, q > 0$}
        \For{$i := 1$ \textbf{to} $nm$}
            \State $\boldsymbol{\alpha_i} = p \cdot (\alpha_{i}^{(1)}, 1, \alpha_{i}^{(2)}, 1, \cdots, \alpha_{i}^{(\rho)}, 1)^T$ 
            \Statex \Comment{$\boldsymbol{\alpha_i} \in \mathbb{R}^{2\rho \times 1}$, $\sum_{j=1}^{\rho}\alpha_{i}^{(j)} = a_i$}
            \State $\boldsymbol{\beta_i} = q \cdot (1, \beta_{i}^{(1)}, 1, \beta_{i}^{(2)}, \cdots, 1, \beta_{i}^{(\rho)})^T$ 
            \Statex \Comment{$\boldsymbol{\beta_i} \in \mathbb{R}^{2\rho \times 1}$, $\sum_{j=1}^{\rho}\beta_{i}^{(j)} = b_i$}
        \EndFor
        \State $T_a = [\boldsymbol{\alpha_1}, \boldsymbol{\alpha_2}, \cdots, \boldsymbol{\alpha_{nm}}]^T$ \Comment{$T_a \in \mathbb{R}^{nm \times 2\rho}$}
        \State $T_b = [\boldsymbol{\beta_1}, \boldsymbol{\beta_2}, \cdots, \boldsymbol{\beta_{nm}}]^T$ \Comment{$T_b \in \mathbb{R}^{nm \times 2\rho}$}
        \State $U_a, U_b \gets \textbf{S2PRIP}(T_a, T_b)$ \Comment{$U_a, U_b \in \mathbb{R}^{nm \times 1}$}
        \State $\hat{V}_a = reshape(U_a, (n, m)) \Rightarrow$ Bob \Comment{$\hat{V}_a \in \mathbb{R}^{n \times m}$}
        \State $\hat{V}_b = reshape(U_b, (n, m)) \Rightarrow$ Alice \Comment{$\hat{V}_b \in \mathbb{R}^{n \times m}$}
        \State $V_a = V_b = relu'(\hat{V}_a + \hat{V}_b)$ \Comment{$V_a, V_b \in \mathbb{R}^{n \times m}$}
        \State \Return $V_a$, $V_b$
    \end{algorithmic}
\end{breakablealgorithm}

\subsection{Secure Two-Party ReLU Protocol (S2PRL)}

The problem definition of S2PRL is as follows:

\begin{problem}[Secure Two-Party ReLU Function]\label{Problem-S2PRL}
    Alice has a private matrix $A$, and Bob has a private matrix $B$, both of which are $n \times m$ in dimension. They want to securely compute the ReLU function on the sum of their matrices such that Alice receives a matrix $V_a$ and Bob receives a matrix $V_b$, satisfying $V_a + V_b = \text{relu}(A + B)$.
\end{problem}

\begin{breakablealgorithm}
    \caption{S2PRL}
    \label{alg:S2PRL}
    \begin{algorithmic}[1]
        \Require Matrices $A, B \in \mathbb{R}^{n \times m}$ and $\rho \ge 2$
        \Ensure $V_a + V_b = \text{relu}(A + B)$, where $V_a, V_b \in \mathbb{R}^{n \times m}$
        
        \State $U_a, U_b \gets \textbf{S2PDRL}(A, B, \rho)$ \Comment{$U_a, U_b \in \mathbb{R}^{n \times m}$}
        \State $V_a = U_a \odot A$, $V_b = U_b \odot B$ \Comment{$V_a, V_b \in \mathbb{R}^{n \times m}$}
        \State \Return $V_a, V_b$
    \end{algorithmic}
\end{breakablealgorithm}
\textbf{S2PRL Protocol Description:}
In S2PRL, Alice and Bob jointly compute $U_a = U_b = \text{relu}'(A + B)$ using S2PDRL, and then they individually compute $V_a = U_a \odot A$ and $V_b = U_b \odot B$. The detailed procedure is shown in Algorithm \ref{alg:S2PRL}. It is easy to verify that $V_a + V_b = U_a \odot A + U_b \odot B = \text{relu}'(A + B) \odot (A + B) = \text{relu}(A + B)$. The undetected anomaly situation in S2PRL only occurs when the sub-protocol S2PDRL fails in anomaly detection. Therefore, the probability that S2PRL fails to detect a computational anomaly is $P_f(\text{S2PRL}) = P_f(\text{S2PDRL}) \leq \frac{1}{4^l} \approx 9.09 \times 10^{-13} \ (l=20)$. This failure probability is sufficiently small to be negligible.

The security of the S2PRL protocol can be proven based on the security of S2PDRL, as stated in the following theorem.

\begin{theorem}\label{theorem:Secure-S2PRL}
    The S2PRL protocol is secure under the semi-honest adversarial model.
\end{theorem}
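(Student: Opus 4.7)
The plan is to reduce the security of S2PRL to the already-established security of S2PDRL (Theorem~\ref{theorem:Secure-S2PDRL}) by invoking sequential composition of semi-honest two-party protocols. Observe that the only communication in Algorithm~\ref{alg:S2PRL} occurs inside the S2PDRL call; the subsequent two steps $V_a = U_a \odot A$ and $V_b = U_b \odot B$ are purely local Hadamard products. Consequently, each party's real-world view of S2PRL is simply its real-world view of S2PDRL plus a deterministic postprocessing on its own input, and the simulators needed for Definition~\ref{def3} can be constructed by composing the S2PDRL simulators with local simulation of the postprocessing.

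Concretely, I would construct the simulator $S_1$ for Alice on input $(A, V_a)$ as follows. First, $S_1$ chooses a matrix $\tilde{U} \in \{0,1\}^{n\times m}$ that is consistent with $V_a = \tilde{U} \odot A$ (for entries with $A_{ij}\neq 0$ the value is forced, while for entries with $A_{ij}=0$ either choice is indistinguishable from the real output). Then $S_1$ invokes the S2PDRL simulator $S_1^{\text{S2PDRL}}$ on input $A$ with simulated public output $\tilde{U}$ to produce the sub-protocol transcript, and finally appends the local computation of $V_a$. By Theorem~\ref{theorem:Secure-S2PDRL}, the simulated sub-protocol view is computationally indistinguishable from the real one, and the postprocessing is a deterministic function of data already in the view, so the overall simulated view is computationally indistinguishable from $VIEW_1^\pi(A,B)$. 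The construction of $S_2$ is entirely symmetric.

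The main obstacle that requires care is the handling of the public disclosure of $\text{relu}'(A+B)$ by S2PDRL. One must argue that this disclosure does not give Alice any additional leverage on $B$ beyond what is inherent in the functionality being computed: knowing $A$ and $\text{relu}'(A+B)$ only reveals the sign pattern of $A+B$, which is an intrinsic aspect of a ReLU evaluation over additive shares. Because this leakage is fully captured by the ideal functionality $f(A,B) = \text{relu}(A+B)$ rather than being an artifact of the protocol, the simulator can generate an indistinguishable view from just $(A,V_a)$ together with the consistency mask $\tilde{U}$. Once this observation is in hand, computational indistinguishability of the joint distributions follows by standard composition, completing the proof.
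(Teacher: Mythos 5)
Your overall strategy --- reduce S2PRL to S2PDRL by composition, observe that everything after the S2PDRL call is local, and build the simulator by composing the S2PDRL simulator with local postprocessing --- is exactly what the paper intends: the paper gives no explicit proof of this theorem and states only that it ``can be proven based on the security of S2PDRL,'' so your writeup supplies the missing argument in the same style as the proofs of Theorems~\ref{theorem:Secure-S2PDRL}, \ref{theorem:Secure-S2PHHP} and \ref{theorem:Secure-S2PSCR}.

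One claim in your third paragraph needs repair. You assert that the disclosure of $\text{relu}'(A+B)$ is ``fully captured by the ideal functionality $f(A,B)=\text{relu}(A+B)$.'' Strictly it is not: the ideal functionality gives Alice only $V_a$ with $V_a+V_b=\text{relu}(A+B)$, and for any entry with $A_{ij}=0$ the pair $(A,V_a)$ reveals nothing about $U_{ij}=\text{relu}'(a_{ij}+b_{ij})=\text{relu}'(b_{ij})$, i.e., the sign of Bob's entry, whereas Alice's real view of S2PDRL does reveal it (indeed it reveals $pq\cdot(A+B)$, hence the full sign pattern). Your simulator must guess $\tilde U_{ij}$ at those positions, and a distinguisher that also sees Bob's side of the joint distribution in Definition~\ref{def3} can detect a wrong guess, so the simulation does not go through against the bare functionality. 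The argument is salvaged only by doing what the paper does for S2PDRL: explicitly declare $\text{relu}'(A+B)$ a public output of the (sub)functionality, i.e., fold this leakage into the leakage profile, and then note that under the real-number security model of Definition~\ref{def5} there remain infinitely many inputs $B^{*}$ with the same sign pattern of $A+B^{*}$ that are consistent with Alice's view. With that declaration made explicit (rather than claimed to follow from the output alone), your composition argument is sound and matches the paper's intended route.
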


\subsection{Secure Two-Party SoftMax Protocol (S2PSM)}

We first present the design of the S2PHP protocol. Subsequently, we describe the design of the S2PHHP protocol and the S2PSCR protocol, and finally, we provide the design of the S2PSM protocol.

\subsubsection{Secure Two-Party Matrix Hadamard Product Protocol (S2PHP)}

The problem definition of S2PHP (Matrix) is as follows:

\begin{problem}[Secure Two-Party Matrix Hadamard Product]\label{Problem-S2PHP-Matrix}
    Alice has a private matrix $A$, and Bob has a private matrix $B$, both of which are $n \times m$ in dimension. They wish to securely compute the Hadamard product of the matrices such that Alice obtains a matrix $V_a$ and Bob obtains a matrix $V_b$, satisfying $V_a + V_b = A \odot B$.
\end{problem}

\textbf{S2PHP Protocol Description:}  
In S2PHP, Alice and Bob first convert their private matrices into vectors using row-major order, denoted as $\boldsymbol{a} = M2v(A)$ and $\boldsymbol{b} = M2v(B)$. Then, Alice splits each $a_i \in \boldsymbol{a}$ ($1 \leq i \leq nm$) into $\rho$ random real numbers, forming a vector $\boldsymbol{\alpha_i}$. She replicates $\boldsymbol{\alpha_i}$ $\rho$ times to form a vector $\boldsymbol{\alpha_i^*}$. Subsequently, Alice uses each vector $\boldsymbol{\alpha_i^*}$ to construct a matrix $T_a$, with each vector as a row.
\begin{breakablealgorithm}
    \caption{S2PHP}
    \label{alg:S2PHP-Matrix}
    \begin{algorithmic}[1]
        \Require $A, B \in \mathbb{R}^{n \times m}$ and $\rho \geq 2$
        \Ensure $V_a + V_b = A \odot B$ and $V_a, V_b \in \mathbb{R}^{n \times m}$
        
        \State $\boldsymbol{a} = M2v(A)$ and $\boldsymbol{b} = M2v(B)$ \Comment{$\boldsymbol{a} \in \mathbb{R}^{nm \times 1}$, $\boldsymbol{b} \in \mathbb{R}^{nm \times 1}$}
        \For{$i := 1$ \textbf{to} $nm$}
            \State $\boldsymbol{\alpha_i} = (\alpha_{i}^{(1)}, \alpha_{i}^{(2)}, \cdots, \alpha_{i}^{(\rho)})^T$ \Comment{$\boldsymbol{\alpha_i} \in \mathbb{R}^{\rho \times 1}$, $\sum_{j=1}^{\rho}{\alpha_{i}^{(j)}} = a_i$}
            \State $\boldsymbol{\alpha_i^*} = [\boldsymbol{\alpha_i}^T, \boldsymbol{\alpha_i}^T, \cdots, \boldsymbol{\alpha_i}^T]^T$ \Comment{$\boldsymbol{\alpha_i^*} \in \mathbb{R}^{\rho^2 \times 1}$}
        \EndFor
        \State $T_a = [\boldsymbol{\alpha_1^*}, \boldsymbol{\alpha_2^*}, \cdots, \boldsymbol{\alpha_{nm}^*}]^T$ \Comment{$T_a \in \mathbb{R}^{nm \times \rho^2}$}
        
        \For{$i := 1$ \textbf{to} $nm$}
            \State $\boldsymbol{\beta_i} = (\beta_{i}^{(1)}, \beta_{i}^{(2)}, \cdots, \beta_{i}^{(\rho)})^T$ \Comment{$\boldsymbol{\beta_i} \in \mathbb{R}^{\rho \times 1}$, $\sum_{j=1}^{\rho}{\beta_{i}^{(j)}} = b_i$}
            \State $perms(\boldsymbol{\beta_i}) = \{\boldsymbol{\beta_i^{(1)}}, \boldsymbol{\beta_i^{(2)}}, \cdots, \boldsymbol{\beta_i^{(\rho!)}}\}$
            \State $T_i \gets$ randomly select $\rho$ vectors from $perms(\boldsymbol{\beta_i})$ and concatenate them \Comment{$T_i \in \mathbb{R}^{\rho \times \rho}$}
            \State $\boldsymbol{\beta_i^*} = M2v(T_i)$ \Comment{$\boldsymbol{\beta_i^*} \in \mathbb{R}^{\rho^2 \times 1}$}
        \EndFor
        
        \State $T_b = [\boldsymbol{\beta_1^*}, \boldsymbol{\beta_2^*}, \cdots, \boldsymbol{\beta_{nm}^*}]$ \Comment{$T_b \in \mathbb{R}^{\rho^2 \times nm}$}
        \State $U_a, U_b \gets$ \textbf{S2PM}$(T_a, T_b)$ \Comment{$U_a, U_b \in \mathbb{R}^{nm \times nm}$}
        \State $V_a = v2M(diag(U_a))$ \Comment{$V_a \in \mathbb{R}^{n \times m}$}
        \State $V_b = v2M(diag(U_b))$ \Comment{$V_b \in \mathbb{R}^{n \times m}$}
        \State \Return $V_a, V_b$
    \end{algorithmic}
\end{breakablealgorithm}

Bob splits each $b_i \in \boldsymbol{b}$ ($1 \leq i \leq nm$) into $\rho$ random real numbers, forming a vector $\boldsymbol{\beta_i}$. He generates all permutations of $\boldsymbol{\beta_i}$, denoted as $perms(\boldsymbol{\beta_i})$, and randomly selects $s$ vectors from these permutations to concatenate into a matrix $T_i$. Bob then converts $T_i$ into a vector $\boldsymbol{\beta_i^*}$ using row-major order, denoted as $\boldsymbol{\beta_i^*} = M2v(T_i)$. Using each vector $\boldsymbol{\beta_i^*}$, Bob constructs a matrix $T_b$, with each vector as a column.

Alice and Bob jointly compute $T_a \times T_b = U_a + U_b$ using the S2PM protocol. They then extract the main diagonal elements of $U_a$ and $U_b$, reconstructing matrices $V_a$ and $V_b$ with the same dimensions as their original private matrices $A$ and $B$. This is done by $V_a = v2M(diag(U_a))$ and $V_b = v2M(diag(U_b))$. The detailed procedure is shown in Algorithm \ref{alg:S2PHP-Matrix}. It is easy to verify that $V_a + V_b = v2M(\text{diag}(U_a + U_b)) = v2M(\text{diag}(T_a \times T_b)) = v2M(({\boldsymbol{\alpha_1^*}}^T \cdot \boldsymbol{\beta_1^*}, {\boldsymbol{\alpha_2^*}}^T \cdot \boldsymbol{\beta_2^*}, \cdots, {\boldsymbol{\alpha_{nm}^*}}^T \cdot \boldsymbol{\beta_{nm}^*})^T) = v2M((\sum_{i=1}^{\rho}{\alpha_{i}^{(1)}} \cdot \sum_{i=1}^{\rho}{\beta_{i}^{(1)}}, \sum_{i=1}^{\rho}{\alpha_{i}^{(2)}} \cdot \sum_{i=1}^{\rho}{\beta_{i}^{(2)}}, \cdots, \sum_{i=1}^{\rho}{\alpha_{i}^{(nm)}} \cdot \sum_{i=1}^{\rho}{\beta_{i}^{(nm)}})^T) = v2M((a_1 \cdot b_1, a_2 \cdot b_2, \cdots, a_{nm} \cdot b_{nm})^T) = v2M(\boldsymbol{a} \odot \boldsymbol{b}) = A \odot B$. 

\textbf{Optimization of the S2PHP Protocol}: Similar to the optimization method of the S2PDRL protocol, the S2PHP protocol can also be optimized in parallel using the S2PRIP protocol to eliminate redundant computations, as illustrated in Algorithm~\ref{alg:S2PHP-Matrix-opt}. Similar to the verification analysis of S2PDRL, we can derive that the verification failure probability of S2PHP is $P_f(\text{S2PHP}) = P_f(\text{S2PRIP}) \leq \frac{1}{4^l} \approx 9.09 \times 10^{-13} \ (l=20)$.

\begin{breakablealgorithm}
    \caption{Optimized S2PHP}
    \label{alg:S2PHP-Matrix-opt}
    \begin{algorithmic}[1]
        \Require $A, B \in \mathbb{R}^{n \times m}$ and $\rho \geq 2$
        \Ensure $V_a + V_b = A \odot B$ and $V_a, V_b \in \mathbb{R}^{n \times m}$
        
        \State $\boldsymbol{a} = M2v(A)$ and $\boldsymbol{b} = M2v(B)$ \Comment{$\boldsymbol{a} \in \mathbb{R}^{nm \times 1}$, $\boldsymbol{b} \in \mathbb{R}^{nm \times 1}$}
        \For{$i := 1$ \textbf{to} $nm$}
            \State $\boldsymbol{\alpha_i} = (\alpha_{i}^{(1)}, \alpha_{i}^{(2)}, \cdots, \alpha_{i}^{(\rho)})^T$ \Comment{$\boldsymbol{\alpha_i} \in \mathbb{R}^{\rho \times 1}$, $\sum_{j=1}^\rho \alpha_{i}^{(j)} = a_i$}
            \State $\boldsymbol{\alpha_i^*} = [\boldsymbol{\alpha_i}^T, \boldsymbol{\alpha_i}^T, \cdots, \boldsymbol{\alpha_i}^T]^T$ \Comment{$\boldsymbol{\alpha_i^*} \in \mathbb{R}^{\rho^2 \times 1}$}
        \EndFor
        \State $T_a = [\boldsymbol{\alpha_1^*}, \boldsymbol{\alpha_2^*}, \cdots, \boldsymbol{\alpha_{nm}^*}]^T$ \Comment{$T_a \in \mathbb{R}^{nm \times \rho^2}$}
        
        \For{$i := 1$ \textbf{to} $nm$}
            \State $\boldsymbol{\beta_i} = (\beta_{i}^{(1)}, \beta_{i}^{(2)}, \cdots, \beta_{i}^{(\rho)})^T$ \Comment{$\boldsymbol{\beta_i} \in \mathbb{R}^{\rho \times 1}$, $\sum_{j=1}^\rho \beta_{i}^{(j)} = b_i$}
            \State $perms(\boldsymbol{\beta_i}) = \{\boldsymbol{\beta_i^{(1)}}, \boldsymbol{\beta_i^{(2)}}, \cdots, \boldsymbol{\beta_i^{(\rho!)}}\}$
            \State $T_i \gets$ randomly select $\rho$ vectors from $perms(\boldsymbol{\beta_i})$ and concatenate them \Comment{$T_i \in \mathbb{R}^{\rho \times \rho}$}
            \State $\boldsymbol{\beta_i^*} = M2v(T_i)$ \Comment{$\boldsymbol{\beta_i^*} \in \mathbb{R}^{\rho^2 \times 1}$}
        \EndFor
        
        \State $T_b = [\boldsymbol{\beta_1^*}, \boldsymbol{\beta_2^*}, \cdots, \boldsymbol{\beta_{nm}^*}]^T$ \Comment{$T_b \in \mathbb{R}^{nm \times \rho^2}$}
        \State $U_a, U_b \gets$ \textbf{S2PRIP}$(T_a, T_b)$ \Comment{$U_a, U_b \in \mathbb{R}^{nm \times 1}$}
        \State $V_a = reshape(U_a, (n, m))$ \Comment{$V_a \in \mathbb{R}^{n \times m}$}
        \State $V_b = reshape(U_b, (n, m))$ \Comment{$V_b \in \mathbb{R}^{n \times m}$}
        \State \Return $V_a, V_b$
    \end{algorithmic}
\end{breakablealgorithm}

\textbf{Security Analysis of the S2PHP Protocol:}
According to the definition of semi-honest adversarial security in the two-party computation model, let $f = (f_1, f_2)$ be a polynomial-time probabilistic function, and let $\pi$ be a secure two-party protocol for computing the function $f$. We consider the protocol $\pi$ to securely compute the function $f$ if we can construct two simulators, $S_1$ and $S_2$, in the ideal world such that the following relationships hold simultaneously:
\begin{align}
    \{S_1(x, f_1(x, y))\}_{x, y \in \{0, 1\}^*} \overset{c}{\equiv} \{view_1^{\pi}(x, y)\}_{x, y \in \{0, 1\}^*} \nonumber \\
    \{S_2(y, f_2(x, y))\}_{x, y \in \{0, 1\}^*} \overset{c}{\equiv} \{view_2^{\pi}(x, y)\}_{x, y \in \{0, 1\}^*} \label{12}
\end{align}
and for each participant, the output remains consistent regardless of changes to the input of the other participant:
\begin{align}
    f_1(x, y) \equiv f_1(x, y^*) \nonumber \\
    f_2(x, y) \equiv f_2(x^*, y) \label{13}
\end{align}
In this way, we transform the security proof into a constructive problem. In the subsequent proof, we will use the following lemma~\ref{lemma1}:

\begin{lemma}\label{lemma1}
For a linear system $A \cdot X = B$, if $rank(A) = rank(A|B) < n$ (where $n$ is the number of rows in matrix $X$, and $A|B$ is the augmented matrix), then the linear system has infinitely many solutions \cite{Shores_2018}.
\end{lemma}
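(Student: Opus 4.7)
The plan is to reduce the statement to two classical linear-algebra facts: the Rouch\'e--Capelli consistency criterion and the rank-nullity theorem. First I would use $\text{rank}(A)=\text{rank}(A|B)$ to conclude that every column of $B$ lies in the column space of $A$, which furnishes at least one particular solution $X_0$ with $A X_0 = B$. This handles the existence half of the claim.

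Next I would analyze the associated homogeneous system $AX=0$. Because $\text{rank}(A)<n$, the rank-nullity theorem gives $\dim\ker(A) = n-\text{rank}(A)\geq 1$, so there exists a nonzero $X_h$ with $AX_h=0$. The general solution set is then the affine subspace $X_0 + \ker(A)$, and since the paper operates over $\mathbb{R}$, the one-parameter family $\{X_0 + tX_h : t\in\mathbb{R}\}$ already consists of pairwise distinct solutions, proving that the solution set is infinite. When $X$ is a matrix rather than a vector (as the statement allows), I would extend the argument column-by-column: apply the same reasoning to each column of $B$ to obtain a particular solution, then perturb any single column of $X_0$ by a scalar multiple of a nonzero element of $\ker(A)$ to generate infinitely many distinct solution matrices.

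I do not anticipate any real obstacle, since this is a textbook fact explicitly cited to Shores. The only point worth highlighting in a careful write-up is the dependence on the ground field: the conclusion would fail verbatim over a finite field, where the kernel contains only finitely many vectors. The assertion is valid in the paper's setting precisely because all protocols are defined over $\mathbb{R}$, which is infinite, so the scalar parameter $t$ ranges over an infinite set and yields infinitely many distinct solutions. Accordingly, a formal proof amounts to invoking Rouch\'e--Capelli and rank-nullity in sequence and observing that $\mathbb{R}$ is infinite.
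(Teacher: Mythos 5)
Your proof is correct. The paper itself gives no proof of this lemma --- it is stated as a quoted textbook fact with a citation to Shores --- so there is no in-paper argument to compare against; your combination of the Rouch\'e--Capelli consistency criterion with rank-nullity, followed by the one-parameter family $X_0 + tX_h$, is exactly the standard argument the citation points to, and your column-by-column handling of the matrix right-hand side is the right way to cover the form in which the lemma is actually invoked (e.g.\ solving $R_a \times R_b' = S_t$ for a matrix unknown). Your closing remark that the conclusion depends on the ground field being infinite is worth keeping: the paper's security argument hinges on there being \emph{infinitely many} simulatable preimages over $\mathbb{R}$, so the infinitude of the scalar parameter $t$ is not a pedantic aside but the precise property the lemma is being used to supply.
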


Next, we will provide the security proof of S2PHP under the semi-honest adversarial model. In this proof, the process of transforming matrices $A$ and $B$ into matrices $T_a$ and $T_b$ in Algorithm~\ref{alg:S2PHP-Matrix} is denoted as $T_a = RA2T(A, \rho)$ and $T_b = RB2T(B, \rho)$, respectively. Additionally, we use $R = M2diag(A)$ to represent the generation of a matrix $R$, where the main diagonal elements are composed of all elements of matrix $A$ in row-major order, and all other elements are randomly generated.

\begin{theorem}\label{theorem:Secure-S2PHP}
    The S2PHP protocol is secure under the semi-honest adversarial model.
\end{theorem}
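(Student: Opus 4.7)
The plan is to leverage the modular composition of semi-honest secure protocols: since the only interactive step in Algorithm~\ref{alg:S2PHP-Matrix} is the single call to \textbf{S2PM}$(T_a, T_b)$, the security of S2PHP should reduce to the security of S2PM together with a local-view argument that shows the pre- and post-processing performed independently by Alice and Bob reveals no additional information. The argument will follow the real-field indistinguishability criterion of Definition~\ref{def5}, and will make use of Lemma~\ref{lemma1} to exhibit infinitely many alternative inputs consistent with each party's view.

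First I would decompose each party's view. Alice's view consists of her private input $A$, the random coins used to sample the splits $\{\alpha_i^{(j)}\}$ and hence to build $T_a$, and her transcript within the S2PM invocation. All remaining steps, namely $M2v$, diagonal extraction, and $v2M$, are deterministic local post-processing on her S2PM share $U_a$; no additional messages are exchanged. Bob's view is symmetric, with his additional randomness coming from the choice of $\rho$ permutations from $perms(\boldsymbol{\beta_i})$.

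Next I would construct simulators $S_1$ and $S_2$ explicitly. Given $(A, V_a)$, simulator $S_1$ samples random splits $\{\alpha_i^{(j)}\}$ with $\sum_j \alpha_i^{(j)} = a_i$, constructs $T_a = RA2T(A, \rho)$ locally, lifts $V_a$ to a candidate S2PM output matrix $U_a^{\star} = M2diag(V_a)$ by fixing $M2v(V_a)$ on the main diagonal and filling the off-diagonal entries with fresh randomness, and then invokes the S2PM simulator on $(T_a, U_a^{\star})$. By the semi-honest security of S2PM established in the appendix, the resulting transcript is computationally indistinguishable from Alice's real view. Simulator $S_2$ is built symmetrically from $(B, V_b)$. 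Output independence from the non-adversary's input follows because the S2PM shares $(U_a, U_b)$ are uniformly random subject to $U_a + U_b = T_a \times T_b$, and hence their diagonal reshapes $(V_a, V_b)$ are uniformly random subject to $V_a + V_b = A \odot B$.

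The hard part will be justifying the lifting step $V_a \mapsto U_a^{\star}$: one must argue that the lifted matrix lies in the support of the real S2PM output distribution conditioned on Alice's input $T_a$ and her share equalling $V_a$ on the diagonal, and that, from Bob's perspective, infinitely many matrices $A^{\star}$ over $\mathbb{R}$ are consistent with his view. This is precisely where Lemma~\ref{lemma1} is invoked: the diagonal constraint pins down only $nm$ of the $n^2 m^2$ entries of $U_a$, leaving an under-determined real linear system whose solution set is infinite, and the splitting $a_i = \sum_j \alpha_i^{(j)}$ likewise leaves infinitely many admissible $A^{\star}$ for each observed $T_a$. Combining these two facts with the semi-honest security of S2PM will complete the proof.
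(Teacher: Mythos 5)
Your proposal is sound and arrives at the same simulator-based conclusion, but it takes a genuinely more modular route than the paper. The paper's proof of Theorem~\ref{theorem:Secure-S2PHP} does not call an S2PM simulator as a black box: for corrupted Alice it re-derives the S2PM transcript inline, explicitly manufacturing $(\hat{B}', T', VF_b')$ by sampling one of the infinitely many pairs $(B', V_b')$ with $A \odot B' = V_a + V_b'$, setting $T_b' = RB2T(B',\rho)$ and $U_b' = M2diag(V_b')$, and invoking Lemma~\ref{lemma1} on the rank-deficient system $R_a \times R_b' = S_t$ to sample a consistent mask $R_b'$; it then separately verifies the output-consistency condition $f_1(x,y)\equiv f_1(x,y^{*})$ by checking $V_a' = diag2M(T_a \times T_b' - U_b') = V_a$. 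Your version instead lifts $V_a$ to $U_a^{\star} = M2diag(V_a)$ and delegates the transcript to the S2PM simulator --- essentially the hybrid-model argument the paper itself uses one level up for S2PHHP and S2PSCR via Lemma~\ref{lemma2}. This buys reuse of the appendix result and a shorter proof; its cost is exactly the lifting step you flag as the crux, which does go through because in the real execution $U_a = T_a \times T_b - U_b$ with $U_b$ chosen uniformly by Bob, so the off-diagonal entries of $U_a$ are free given $(T_a, V_a)$ and $M2diag(V_a)$ matches the correct conditional distribution. Two points to tighten: your invocation of Lemma~\ref{lemma1} targets the under-determination of the diagonal constraint and of the splitting $a_i=\sum_j \alpha_i^{(j)}$, whereas in the paper the lemma's actual work is guaranteeing infinitely many solutions to $R_a \times R_b' = S_t$ (hence the rank-deficiency requirement on the masks) --- in your black-box formulation that burden is absorbed by the S2PM simulator, so you should say so explicitly rather than re-deriving it; and the output-independence claim should be stated as the paper does (for every alternative input $B^{*}$ consistent with Bob's simulated view, Alice's reconstructed output equals $V_a$), since the bare statement that shares are ``uniform subject to $V_a+V_b=A\odot B$'' does not by itself establish $f_1(x,y)\equiv f_1(x,y^{*})$.
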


\begin{proof}
    We construct two simulators $S_1$ and $S_2$  to prove the above theorem.\\
    
    \textbf{Adversary Alice}: Assume Alice is the adversary, which means we need to construct a simulator $S_1$ to simulate ${view}_1^\pi (x,y) = (A,\rho,R_a, r_a, S_t, r; \hat{B}, T, VF_b)$ such that $S_1(x, f_1(x,y))$ is indistinguishable from ${view}_1^\pi (x,y)$. Formally, $S_1$ receives ($A$, $\rho$, $R_a$, $r_a$, $S_t$, $V_a$) and a random tape $r$, and then proceeds as follows:
    
    \begin{enumerate}
        \item $S_1$ uses the random tape $r$ to generate the matrix $T_a = RA2T(A, \rho)$ and computes $\hat{A} = T_a + R_a$ and $r_b = S_t - r_a$.
        \item $S_1$ generates two random matrices $B'$ and $V_b'$ such that $A \odot B' = V_a + V_b'$, then generates $T_b' = RB2T(B', \rho)$ and $U_b' = M2diag(V_b')$. \label{proof1_step2}
        \item $S_1$ generates a random matrix $R_b'$ satisfying $R_a \times R_b'=S_t$, and computes $\hat{B}' = T_b' + R_b'$, $T' = \hat{A} \times T_b' + r_b - U_b'$, and $VF_b' = U_b' - \hat{A} \times T_b'$. \label{proof1_step3}
        \item $S_1$ outputs $S_1(x, f_1(x,y)) = (A,\rho,R_a, r_a, S_t, r; \hat{B}', T', VF_b')$.
    \end{enumerate}
    In step \ref{proof1_step2}, we observe that there are infinitely many pairs $(B', V_b')$ satisfying the equation $A \odot B' = V_a + V_b'$, so $B'$ and $V_b'$ are simulatable, denoted as $\{B', V_b'\} \overset{c}{\equiv} \{B, V_b\}$. Consequently, $T_b'$ and $U_b'$ are also simulatable since $T_b' = RB2T(B', \rho)$ and $U_b' = M2diag(V_b')$. In step \ref{proof1_step3}, by Lemma \ref{lemma1}, we know that when $R_a$ is a rank-deficient matrix, there exist infinitely many $R_b'$ satisfying $R_a \times R_b' = S_t$, so $R_b'$ is simulatable. Considering that the variables $\hat{B}', T', VF_b'$ are computed from the simulatable variables $T_b', R_b', U_b'$, it is easy to verify that these variables are also simulatable. Therefore, we have $\{(A,\rho,R_a, r_a, S_t, r; \hat{B}', T', VF_b')\} \overset{c}{\equiv} \{(A,\rho,R_a, r_a, S_t, r; \hat{B}, T, VF_b)\}$, i.e., $\{S_1(x, f_1(x,y))\}_{x,y} \overset{c}{\equiv} \{view_1^{\pi}(x,y)\}_{x,y}$. Furthermore, $U_a' = T' + r_a - R_a \times \hat{B}' = \hat{A} \times {T_b}' - R_a \times \hat{B}' + r_b + r_a - {U_b}' = T_a \times T_b' - {U_b}'$. Therefore, $V_a' = diag2M(U_a') = diag2M(T_a \times T_b' - {U_b}')$. Based on the correctness of S2PHP, we conclude that $V_a' = V_a$, which means Alice's output $V_a$ remains unchanged for any input from Bob, i.e., $f_1(x,y) \equiv f_1(x,y^*)$.\\

    \textbf{Adversary Bob}: Assume Bob is the adversary, which means we need to construct a simulator $S_2$ to simulate ${view}_2^\pi (x,y)=(B,\rho,R_b, r_b, S_t, r_0, r_1; \hat{A}, VF_a)$ such that $S_2(y, f_2(x,y))$ is indistinguishable from ${view}_2^\pi (x,y)$. Formally, $S_2$ receives $(B,\rho,R_b, r_b, S_t, V_b)$ and two random tapes $r_0, r_1$, and then proceeds as follows:
    
    \begin{enumerate}
        \item $S_2$ uses the random tape $r_0$ to generate the matrix $T_b = RB2T(B, \rho)$ and computes $\hat{B} = T_b + R_b$. Meanwhile, $S_2$ uses the random tape $r_1$ to generate the matrix $U_b$ satisfying $V_b = diag2M(U_b)$.
        \item $S_2$ generates two random matrices $A'$ and $V_a'$ satisfying $A' \odot B = V_a' + V_b$, then generates $T_a' = RA2T(A', \rho)$ and $U_a' = M2diag(V_a')$. \label{proof2_step2}
        \item $S_2$ generates a random matrix $R_a'$ satisfying $R_a' \times R_b = S_t$, and computes $\hat{A}' = T_a' + R_a'$ and $VF_a' = U_a' + R_a' \times \hat{B}$. \label{proof2_step3}
        \item $S_2$ outputs $S_2(y, f_2(x,y)) = (B,\rho,R_b, r_b, S_t, r_0, r_1; \hat{A}', VF_a')$.
    \end{enumerate}
    In step \ref{proof2_step2}, we observe that there are infinitely many pairs $(A', V_a')$ satisfying the equation $A' \odot B = V_a' + V_b$, so $A'$ and $V_a'$ are simulatable, denoted as $\{A', V_a'\} \overset{c}{\equiv} \{A, V_a\}$. Consequently, $T_a'$ and $U_a'$ are also simulatable since $T_a' = RA2T(A', \rho)$ and $U_a' = M2diag(V_a')$. In step \ref{proof2_step3}, by Lemma \ref{lemma1}, we know that when $R_b$ is a rank-deficient matrix, there exist infinitely many $R_a'$ satisfying $R_a' \times R_b = S_t$, so $R_a'$ is simulatable. Considering that the variables $\hat{A}', VF_a'$ are computed from the simulatable variables $T_a', R_a', U_a'$, it is easy to verify that these variables are also simulatable. Therefore, we have $\{(B,\rho,R_b, r_b, S_t, r_0, r_1; \hat{A}', VF_a')\} \overset{c}{\equiv} \{(B,\rho,R_b, r_b, S_t, r_0, r_1; \hat{A}, VF_a)\}$, i.e., $\{S_2(y, f_2(x,y))\}_{x,y} \overset{c}{\equiv} \{view_2^{\pi}(x,y)\}_{x,y}$. Furthermore, since $V_b' = diag2M(U_b) = V_b$, Bob's output $V_b$ remains unchanged for any input from Alice, i.e., $f_2(x,y) \equiv f_2(x^*,y)$.
\end{proof}

\subsubsection{Secure Two-Party Hybrid Hadamard Product Problem (S2PHHP)}

The problem definition of S2PHHP is as follows:

\begin{problem}[Secure Two-Party Hybrid Hadamard Product Problem]\label{Problem-S2PHHP}
    Alice has a pair of private matrices $(A_1, A_2)$, and Bob has a pair of private matrices $(B_1, B_2)$, where $A_1, B_1, A_2, B_2 \in \mathbb{R}^{n\times m}$. They wish to perform a mixed Hadamard product computation, i.e., $f[(A_1, A_2), (B_1, B_2)] = (A_1 + B_1) \odot (A_2 + B_2)$, where Alice obtains matrix $V_a$, Bob obtains matrix $V_b$, and they satisfy $V_a + V_b = (A_1 + B_1) \odot (A_2 + B_2)$.
\end{problem}

\textbf{S2PHHP Protocol Description:} In S2PHHP, Alice and Bob first independently compute $V_{a0} = A_1 \odot A_2$ and $V_{b0} = B_1 \odot B_2$. Then, using the S2PHP (matrix) protocol, they jointly compute $V_{a1} + V_{b1} = A_1 \odot B_2$ and also compute $V_{b2}$ and $V_{a2}$ (both in $\mathbb{R}^{n \times m}$). Finally, Alice computes $V_a = V_{a0} + V_{a1} + V_{a2}$, and Bob computes $V_b = V_{b0} + V_{b1} + V_{b2}$. It is easy to verify that $V_a + V_b = V_{a0} + V_{a1} + V_{a2} + V_{b0} + V_{b1} + V_{b2} = A_1 \odot A_2 + B_1 \odot B_2 + A_1 \odot B_2 + B_1 \odot A_2 = (A_1 + B_1) \odot (A_2 + B_2)$. S2PHHP will only fail to detect an anomaly if all the sub-protocols S2PHP fail to detect the anomaly. Therefore, the probability of S2PHHP failing to detect a computational anomaly is $P_f(S2PHHP) = P_f(S2PHP)^2 \leq \left(\frac{1}{4^l}\right)^2 \approx 8.27\times 10^{-25}$ (where $l = 20$).

\begin{breakablealgorithm}
    \caption{S2PHHP}
    \label{alg:S2PHHP}
    \begin{algorithmic}[1]
        \Require $A_1, B_1, A_2, B_2 \in \mathbb{R}^{n \times m}$, $\rho \geq 2$
        \Ensure $V_a, V_b \in \mathbb{R}^{n \times m}$
        
        \State $V_{a0} = A_1 \odot A_2$ \Comment{$V_{a0} \in \mathbb{R}^{n \times m}$}
        \State $V_{b0} = B_1 \odot B_2$ \Comment{$V_{b0} \in \mathbb{R}^{n \times m}$}
        \State $V_{a1}, V_{b1} \gets$ \textbf{S2PHP}$(A_1, B_2, \rho)$ \Comment{$V_{a1}, V_{b1} \in \mathbb{R}^{n \times m}$}
        \State $V_{b2}, V_{a2} \gets$ \textbf{S2PHP}$(B_1, A_2, \rho)$ \Comment{$V_{a2}, V_{b2} \in \mathbb{R}^{n \times m}$}
        
        \State $V_a = V_{a0} + V_{a1} + V_{a2}$ \Comment{$V_a \in \mathbb{R}^{n \times m}$}
        \State $V_b = V_{b0} + V_{b1} + V_{b2}$ \Comment{$V_b \in \mathbb{R}^{n \times m}$}
        \State \Return $V_a, V_b$
    \end{algorithmic}
\end{breakablealgorithm}

We prove the security of the S2PHHP protocol based on the Universal Composability (UC) framework\cite{10.5555/874063.875553}. Therefore, we use the following lemma \cite{bogdanov2008sharemind}.

\begin{lemma}\label{lemma2}
    If all sub-protocols are perfectly simulatable, then the protocol itself is perfectly simulatable.
\end{lemma}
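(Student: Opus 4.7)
The plan is to establish Lemma~\ref{lemma2} via the standard hybrid-argument formulation of the UC composition theorem. The target is to show that if each sub-protocol invoked within the enclosing protocol admits a perfect simulator, then there exists a perfect simulator for the enclosing protocol as a whole; since the hypothesis gives us perfect rather than merely computational indistinguishability for each component, the proof structure should avoid any accumulation of negligible error terms.

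First, I would fix notation: let $\pi$ denote the enclosing protocol and assume it invokes sub-protocols $\pi_1, \pi_2, \ldots, \pi_k$ in some prescribed order, with $S_i$ denoting the perfect simulator guaranteed for $\pi_i$ by hypothesis. Between sub-protocol calls, $\pi$ performs only local steps (deterministic computation on local state plus freshly sampled randomness) whose distribution is trivially simulatable without interaction. I would then construct a candidate simulator $S$ for $\pi$ that faithfully reproduces all such local steps from the corrupted party's viewpoint and, at each sub-protocol invocation, calls $S_i$ on the sub-protocol input of the corrupted party together with the corresponding functional output derived from the simulated transcript so far, splicing the returned view into the composite view $S$ is building.

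Second, I would justify correctness of $S$ through a hybrid argument. Let $H_0$ be the real view produced by an honest execution of $\pi$, and let $H_j$ be the hybrid distribution in which the first $j$ sub-protocol invocations are replaced by invocations of the corresponding $S_i$ while the remaining $k-j$ are executed honestly. The claim is that $H_j$ and $H_{j+1}$ coincide exactly as distributions, not merely up to a negligible statistical distance: conditioned on all transcripts and random choices produced through step $j$, the $(j+1)$-th sub-protocol view is by the perfect-simulatability hypothesis identically distributed to the view produced by $S_{j+1}$ on the same inputs and functional outputs. Chaining these $k$ equalities yields $H_0 \equiv H_k$, which is precisely the statement that $S$ is a perfect simulator for $\pi$.

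The main obstacle will be handling the flow of state between sub-protocols cleanly inside the hybrid. A later sub-protocol $\pi_{j+2}$ may take as input values that depend on the transcript of $\pi_{j+1}$, so one must argue that after replacing the $(j+1)$-th real execution by its simulation the joint distribution of inputs fed to $\pi_{j+2}$, together with all honest parties' internal state, is unchanged. This is exactly why the \emph{perfect} flavour of the assumption is essential: because each replacement preserves the full joint distribution of outputs, honest state, and adversarial view (and not just the marginal on the adversary), the inputs handed to $\pi_{j+2}$ are identically distributed in $H_{j+1}$ and $H_j$, and the hybrid chain collapses without residual error. The argument transports verbatim to the three-party setting of Definition~\ref{def4}, since composition is performed coordinate-wise on the corrupted party's view and leaves the honest parties' outputs intact.
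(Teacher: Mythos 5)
The paper never proves Lemma~\ref{lemma2}; it imports the statement by citation from the Sharemind paper and the UC framework and then uses it as a black box in the security arguments for S2PHHP, S2PSCR, S2PSM, and the MLP protocols. Your hybrid argument is therefore supplying a proof where the paper offers none, and it is the standard, correct way to establish this composition fact: defining $H_j$ by replacing the first $j$ sub-protocol executions with their simulators, observing that perfect (rather than computational) indistinguishability makes each adjacent pair of hybrids \emph{identically} distributed so the chain collapses with zero accumulated error, and stressing that the replacement must preserve the joint distribution of honest state and adversarial view so that the inputs fed to later sub-protocols are unaffected. The one step you should make explicit is how the composite simulator $S$, which runs in the ideal world with access only to the corrupted party's input and final output, produces the arguments it hands to each $S_i$ — in particular the corrupted party's \emph{output} of the $i$-th sub-protocol, which is an intermediate value not given to $S$ a priori. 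In this paper's setting that is unproblematic because every intermediate sub-protocol output is a fresh additive share (e.g., $V_b$ in S2PM is sampled uniformly and $V_a$ is the complementary share), hence independently sampleable by $S$; but your phrase ``derived from the simulated transcript so far'' is doing real work there, and for an arbitrary composition one must argue separately that these intermediate outputs are simulatable from the end-to-end input/output alone.
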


\begin{theorem}\label{theorem:Secure-S2PHHP}
    The S2PHHP protocol is secure under the semi-honest adversarial model.
\end{theorem}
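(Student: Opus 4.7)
The plan is to leverage the Universal Composability framework provided by Lemma~\ref{lemma2}, since S2PHHP is essentially a composition of local computations with two independent invocations of S2PHP, whose security is established in Theorem~\ref{theorem:Secure-S2PHP}. First I would identify the interactive and non-interactive portions of Algorithm~\ref{alg:S2PHHP}: lines 1--2 compute $V_{a0} = A_1 \odot A_2$ at Alice and $V_{b0} = B_1 \odot B_2$ at Bob, and lines 5--6 only sum locally held shares; none of these steps involve communication and therefore contribute nothing new to either party's view beyond values derivable from their own inputs. The only interactive steps are lines 3--4, namely $\textbf{S2PHP}(A_1, B_2)$ and $\textbf{S2PHP}(B_1, A_2)$, which produces the share pairs $(V_{a1}, V_{b1})$ and $(V_{a2}, V_{b2})$.

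Next I would construct simulators $S_1$ and $S_2$ in the hybrid-model style. For corrupted Alice, given input $(A_1, A_2)$ and output $V_a$, the simulator $S_1$ computes $V_{a0} = A_1 \odot A_2$ deterministically, samples $V_{a1} \in \mathbb{R}^{n \times m}$ uniformly at random, and sets $V_{a2} = V_a - V_{a0} - V_{a1}$ so that the share decomposition is consistent with Alice's actual output. It then invokes the S2PHP simulator from Theorem~\ref{theorem:Secure-S2PHP} twice---on the ideal input/output pairs $(A_1, V_{a1})$ and $(A_2, V_{a2})$---to obtain two simulated sub-transcripts, and concatenates them with the locally generated values. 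By Theorem~\ref{theorem:Secure-S2PHP} each simulated sub-transcript is computationally indistinguishable from the real one, and because fresh preprocessing material is drawn for each S2PHP call the two sub-views are mutually independent. Lemma~\ref{lemma2} then lifts the indistinguishability to the full joint view, giving $\{S_1((A_1,A_2), V_a)\} \overset{c}{\equiv} \{\mathrm{VIEW}_1^{\pi}((A_1,A_2),(B_1,B_2))\}$. The construction of $S_2$ for corrupted Bob is entirely symmetric, replacing $V_{a0}$ by $V_{b0}$ and swapping roles in the two S2PHP calls.

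Finally I would verify the real-number security criterion of Definition~\ref{def5}: for any alternative Bob input $(B_1^*, B_2^*)$, there must still exist values of $V_{a1}, V_{a2}$ (and corresponding sub-call transcripts) consistent with Alice's fixed output $V_a$. This follows directly from the analogous property already proved for each S2PHP sub-call in Theorem~\ref{theorem:Secure-S2PHP}, applied to both invocations, combined with the free choice of $V_{a1}$ under the single linear constraint $V_a = V_{a0} + V_{a1} + V_{a2}$. The main (and only mildly subtle) obstacle I anticipate is bookkeeping rather than cryptographic: ensuring that the two S2PHP invocations draw independently fresh masks $(R_a, r_a, S_t, R_b, r_b)$ from the CS so that the joint simulated view factorizes as a product of two independent sub-simulations. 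Once that independence is made explicit, the entire proof reduces to two applications of Theorem~\ref{theorem:Secure-S2PHP} plus one application of Lemma~\ref{lemma2}.
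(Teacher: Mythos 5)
Your proposal is correct and follows essentially the same route as the paper's proof: reduce S2PHHP to the two parallel S2PHP invocations via Lemma~\ref{lemma2} and Theorem~\ref{theorem:Secure-S2PHP}, observing that the Hadamard products $V_{a0}, V_{b0}$ and the final summations are purely local and hence trivially simulatable. Your version is in fact more explicit than the paper's (which asserts composability without constructing the joint simulator or addressing the independence of the two sub-calls' preprocessing material), but the decomposition and the key lemmas invoked are identical.
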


\begin{proof}
    The S2PHHP protocol is implemented in a hybrid model where two parallel calls to the S2PHP protocol are made to compute $V_a + V_b = V_{a0} + V_{a1} + V_{a2} + V_{b0} + V_{b1} + V_{b2} = A_1 \odot A_2 + B_1 \odot B_2 + S2PHP_1(A_1, B_2, \rho) + S2PHP_2(B_1, A_2, \rho)$.
    According to Lemma~\ref{lemma2}, the security of S2PHHP can be reduced to the compositional security of the two parallel calls to S2PHP. Since the outputs $(V_{a1}, V_{a2})$ and $(V_{b1}, V_{b2})$ from the two S2PHP calls are respectively held by Alice and Bob, and according to Theorem~\ref{theorem:Secure-S2PHP}, these outputs in the real-world view are computationally indistinguishable from the simulated ideal-world view. Considering that $A_1 \odot A_2$ and $B_1 \odot B_2$ are locally computed, it is straightforward to prove that the summation results $V_a = V_{a0} + V_{a1} + V_{a2}$ and $V_b = V_{b0} + V_{b1} + V_{b2}$ are simulatable and indistinguishable from the real-world view. Therefore, in the semi-honest adversarial model, the S2PHHP protocol $f[(A_1, A_2), (B_1, B_2)] = (A_1 + B_1) \odot (A_2 + B_2)$ is secure.
\end{proof}

\subsubsection{Secure Two-Party Matrix Reciprocal Protocol based on Scale Collapse (S2PSCR)}

The problem definition of S2PSCR is as follows:

\begin{problem}[Secure Two-Party Matrix Reciprocal Protocol based on Scale Collapse]\label{Problem-S2PSCR}
    Alice has a private matrix $A$, and Bob has a private matrix $B$, both with dimensions $n \times m$. They want to perform a secure two-party matrix reciprocal operation such that Alice obtains a matrix $V_a$ and Bob obtains a matrix $V_b$, satisfying $V_a + V_b = \frac{1}{A + B}$.
\end{problem}

\textbf{S2PSCR Protocol Description:}
In S2PSCR, Alice first randomly generates a matrix $P$ with no zero elements and computes the matrix $I_a = P \odot A$. Similarly, Bob randomly generates a matrix $Q$ with no zero elements and computes the matrix $I_b = Q \odot B$. Then, Alice and Bob jointly compute $I_a \odot Q = U_{a1} + U_{b1} = (P \odot A) \odot Q$ and $P \odot I_b = U_{a2} + U_{b2} = P \odot (B \odot Q)$ using the S2PHP (matrix) protocol. Alice computes $U_a = U_{a1} + U_{a2}$ and sends $U_a$ to Bob. Upon receiving $U_a$, Bob computes $T = U_a + U_{b1} + U_{b2}$ and $I_b^* = \frac{Q}{T}$. Finally, Alice and Bob jointly compute $P \odot I_b^* = V_a + V_b = \frac{P \odot Q}{T}$ using S2PHP. The detailed process is shown in Algorithm \ref{alg:S2PSCR}. It is easy to verify that $V_a + V_b = P \odot I_b^* = \frac{P\odot Q}{T}=\frac{P\odot Q}{(U_{a1} + U_{b1}) + (U_{a2} + U_{b2})}=\frac{P\odot Q}{P\odot Q \odot(A+B)}=\frac{1}{A+B}$. S2PSCR will only fail to detect an anomaly if all the sub-protocols S2PHP fail to detect the anomaly. Therefore, the probability of S2PSCR failing to detect a computational anomaly is $P_f(S2PSCR) = P_f(S2PHP)^3 \leq \left(\frac{1}{4^l}\right)^3 \approx 7.52\times 10^{-37} \ (l = 20)$.

\begin{breakablealgorithm}
    \caption{S2PSCR}
    \label{alg:S2PSCR}
    \begin{algorithmic}[1]
        \Require $A, B \in \mathbb{R}^{n \times m}$, $\rho \geq 2$
        \Ensure $V_a + V_b = \frac{1}{A + B}$, $V_a, V_b \in \mathbb{R}^{n \times m}$
        
        \State $P, Q \gets$ randomly generate matrices without zero elements \Comment{$P, Q \in \mathbb{R}^{n \times m}$}
        \State $I_a = P \odot A$, $I_b = Q \odot B$ \Comment{$I_a, I_b \in \mathbb{R}^{n \times m}$}
        \State $U_{a1}, U_{b1} \gets$ \textbf{S2PHP}$(I_a, Q, \rho)$ \Comment{$U_{a1}, U_{b1} \in \mathbb{R}^{n \times m}$}
        \State $U_{a2}, U_{b2} \gets$ \textbf{S2PHP}$(P, I_b, \rho)$ \Comment{$U_{a2}, U_{b2} \in \mathbb{R}^{n \times m}$}
        \State $U_a = U_{a1} + U_{a2} \Rightarrow$ BoB \Comment{$U_a \in \mathbb{R}^{n \times m}$}
        \State $T = U_a + U_{b1} + U_{b2}$, $I_b^* = \frac{Q}{T}$ \Comment{$T, I_b^* \in \mathbb{R}^{n \times m}$}
        \State $V_a, V_b \gets$ \textbf{S2PHP}$(P, I_b^*, \rho)$ \Comment{$V_a, V_b \in \mathbb{R}^{n \times m}$}
        \State \Return $V_a, V_b$
    \end{algorithmic}
\end{breakablealgorithm}

We base our security proof of S2PSCR on the Universal Composability (UC) security framework.

\begin{theorem}\label{theorem:Secure-S2PSCR}
    The S2PSCR protocol, denoted as $f(A,B)= \frac{1}{A+B}$, is secure under the semi-honest adversarial model.
\end{theorem}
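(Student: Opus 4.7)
The plan is to follow the same hybrid-model strategy used for Theorem~\ref{theorem:Secure-S2PHHP}, treating the three invocations of S2PHP in Algorithm~\ref{alg:S2PSCR} as ideal functionalities whose outputs to each party are simulatable by Theorem~\ref{theorem:Secure-S2PHP} and Lemma~\ref{lemma2}. The only component not directly covered by this composition is the single inter-party message $U_a$ that Alice sends to Bob in line~5, so the bulk of the work is showing that this extra flow leaks nothing beyond what is already allowed under Definition~\ref{def5}.

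I would begin by writing down each party's view explicitly. Alice's view contains only $(A,\rho,P)$, her internal randomness, her shares from the three S2PHP calls, and the final output $V_a$; in particular, Alice never receives any message outside the S2PHP sub-protocols. Hence $S_1$ can be constructed by picking a random $P'$ with no zero entries, simulating each S2PHP view using the ideal-world simulator guaranteed by Theorem~\ref{theorem:Secure-S2PHP}, and outputting the result. Computational indistinguishability of Alice's view then follows directly from Lemma~\ref{lemma2}.

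The harder case is Bob. His view contains $(B,\rho,Q)$, his shares $(U_{b1},U_{b2},V_b)$ from the S2PHP calls, and the cleartext message $U_a$. Since $U_a+U_{b1}+U_{b2}=T=P\odot Q\odot(A+B)$, receiving $U_a$ is equivalent to learning $T$, and dividing element-wise by $Q$ yields $P\odot(A+B)$. To simulate, $S_2$ chooses an arbitrary input guess $A'$, together with the implied $V_a'$ such that $V_a'+V_b=1/(A'+B)$, and then picks $P'=P\odot(A+B)\oslash(A'+B)$ (element-wise division, well-defined because the original $P$ has no zeros and we may restrict $A'$ to the generic set where $A'+B$ has none either); this $P'$ reproduces exactly the same $T$, and the three S2PHP views are simulated in the hybrid model as before. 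Because this construction works for infinitely many choices of $A'$, Definition~\ref{def5} is satisfied.

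The main obstacle I expect is the step that justifies simulating $P$ by $P'$: I must argue that rescaling $P$ element-wise keeps its distribution indistinguishable from a freshly sampled random mask and that the no-zero-entry precondition needed by the protocol is preserved on a cofinite set of $A'$. Once that is in place, combining the hybrid-model simulation of the three S2PHP calls via Lemma~\ref{lemma2} with the real-number-field indistinguishability of the leaked quantity $P\odot(A+B)$ yields the theorem.
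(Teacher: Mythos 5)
Your proposal follows essentially the same route as the paper: reduce the security of S2PSCR to the sequential/parallel composition of the three S2PHP invocations via Lemma~\ref{lemma2} and Theorem~\ref{theorem:Secure-S2PHP}, and then handle the one additional cleartext flow, namely $U_a$ (equivalently $T = P\odot Q\odot(A+B)$) revealed to Bob. The paper merely asserts that $T$ is simulatable, whereas you justify this explicitly through the random nonzero mask $P$ and the infinitely-many-consistent-inputs criterion of Definition~\ref{def5}; this is correct and in fact a more complete treatment of the step the paper leaves implicit.
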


\begin{proof}
    The S2PSCR protocol is implemented in a hybrid model, where the protocol first calls two parallel instances of the S2PHP protocol to compute $T = P\odot(A+B)\odot Q = (P\odot A)\odot Q + P \odot (B\odot Q) = S2PHP_1(P\odot A,Q,s) + S2PHP_2(P,B\odot Q,s)$. Finally, a third S2PHP is sequentially called to compute $\frac{1}{A+B} = S2PHP_3(P, \frac{Q}{T}, s)$.
    According to Lemma \ref{lemma2}, the security of S2PSCR can be reduced to the sequential compositional security of these two steps. Given that the outputs of the two S2PHP calls in the first step, $(U_{a1}, U_{a2})$ and $(U_{b1}, U_{b2})$, are respectively held by Alice and Bob, and based on Theorem \ref{theorem:Secure-S2PHP}, these outputs in the real-world view are computationally indistinguishable from the simulated ideal-world view. Therefore, it is straightforward to prove that the summation result $T= (U_{a1}+U_{a2})+(U_{b1}+U_{b2})$ is simulatable. The S2PHP in the second step can also be perfectly simulated in the ideal world, making it indistinguishable from the real-world view. Thus, in the semi-honest adversarial model, the S2PSCR protocol $f(A,B)=\frac{1}{A+B}$ is secure.
\end{proof}

\subsubsection{Secure Two-Party Softmax Protocol (S2PSM)}

The problem definition of S2PSM is as follows:

\begin{problem}[Secure Two-Party Softmax Function]\label{Problem-S2PSM}
    Alice has a private matrix $A$, and Bob has a private matrix $B$, both with dimensions $n \times m$. They want to perform a secure two-party matrix Softmax function such that Alice obtains a matrix $V_a$ and Bob obtains a matrix $V_b$, satisfying $V_a + V_b = \text{softmax}(A+B)$.
\end{problem}

\textbf{S2PSM Protocol Description:}
In S2PSM, Alice and Bob first compute $I_a = e^A$ and $I_b = e^B$ locally. Then, they jointly compute $I_a \odot I_b = U_{a1} + U_{b1} = e^{A+B}$ using the S2PHP protocol. Next, Alice and Bob each compute the horizontal sum of matrices $U_{a1}$ and $U_{b1}$ locally, denoted as $U_{a2} = \text{hsum}(U_{a1})$ and $U_{b2} = \text{hsum}(U_{b1})$. They then use the S2PSCR protocol to jointly compute $\frac{1}{U_{a2}+U_{b2}}=U_{a3}+U_{b3}$, and each horizontally replicates $U_{a3}$ and $U_{b3}$ $m$ times to match the dimensions of $A$ and $B$, denoted as $U_{a4} = \text{hcopy}(U_{a3})$ and $U_{b4} = \text{hcopy}(U_{b3})$. Finally, Alice and Bob use the S2PHHP protocol to jointly compute $(U_{a1} + U_{b1}) \odot (U_{a4} + U_{b4}) = V_a + V_b$. The detailed process is shown in Algorithm \ref{alg:S2PSM}. It is easy to verify that $V_a + V_b = (U_{a1} + U_{b1}) \odot (U_{a4} + U_{b4}) = I_a \odot I_b \odot \text{hcopy}(U_{a3}+U_{b3})=\frac{e^{A+B}}{\text{hcopy}(\text{hsum}(U_{a1}+U_{b1}))}=\frac{e^{A+B}}{\text{hcopy}(\text{hsum}(e^{A+B}))}=\text{softmax}(A+B)$. S2PSM will only fail to detect an anomaly if all its sub-protocols fail to detect the anomaly. Therefore, the probability of S2PSM failing to detect a computational anomaly is $P_f(S2PSM) = P_f(S2PHP) \cdot P_f(S2PSCR) \cdot P_f(S2PHHP) \leq \left(\frac{1}{4^l}\right)^6 \approx 5.66 \times 10^{-73} \ (l = 20)$.

\begin{breakablealgorithm}
    \caption{S2PSM}
    \label{alg:S2PSM}
    \begin{algorithmic}[1]
        \Require $A, B \in \mathbb{R}^{n \times m}$, $\rho \geq 2$
        \Ensure $V_a + V_b = softmax(A + B)$, $V_a, V_b \in \mathbb{R}^{n \times m}$
        
        \State $I_a = e^A$, $I_b = e^B$ \Comment{$I_a, I_b \in \mathbb{R}^{n \times m}$}
        \State $U_{a1}, U_{b1} \gets$ \textbf{S2PHP}$(I_a, I_b, \rho)$ \Comment{$U_{a1}, U_{b1} \in \mathbb{R}^{n \times m}$}
        \State $U_{a2} = hsum(U_{a1})$, $U_{b2} = hsum(U_{b1})$ \Comment{$U_{a2}, U_{b2} \in \mathbb{R}^{n \times 1}$}
        \State $U_{a3}, U_{b3} \gets$ \textbf{S2PSCR}$(U_{a2}, U_{b2}, \rho)$ \Comment{$U_{a3}, U_{b3} \in \mathbb{R}^{n \times 1}$}
        \State $U_{a4} = hcopy(U_{a3})$, $U_{b4} = hcopy(U_{b3})$ \Comment{$U_{a4}, U_{b4} \in \mathbb{R}^{n \times m}$}
        \State $V_a, V_b \gets$ \textbf{S2PHHP}$((U_{a1}, U_{a4}), (U_{b1}, U_{b4}), \rho)$ \Comment{$V_a, V_b \in \mathbb{R}^{n \times m}$}
        \State \Return $V_a, V_b$
    \end{algorithmic}
\end{breakablealgorithm}

We can prove the security of S2PSM based on the UC framework.

\begin{theorem}\label{theorem:Secure-S2PSM}
    The S2PSM protocol is secure under the semi-honest adversarial model.
\end{theorem}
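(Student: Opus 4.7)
The plan is to invoke the Universal Composability composition framework, mirroring the strategy used for Theorems~\ref{theorem:Secure-S2PHHP} and \ref{theorem:Secure-S2PSCR}. Specifically, I would treat S2PSM as a protocol in a hybrid model where the calls to \textbf{S2PHP}, \textbf{S2PSCR}, and \textbf{S2PHHP} are replaced by calls to their ideal functionalities; Theorems~\ref{theorem:Secure-S2PHP}, \ref{theorem:Secure-S2PSCR}, and \ref{theorem:Secure-S2PHHP} then guarantee that each sub-protocol is simulatable, and Lemma~\ref{lemma2} reduces the security of S2PSM to the simulatability of the hybrid composition together with the local operations interleaved between sub-protocol calls.

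First I would dissect the protocol's communication structure into three sequential sub-protocol invocations---\textbf{S2PHP}$(I_a, I_b, \rho)$, \textbf{S2PSCR}$(U_{a2}, U_{b2}, \rho)$, and \textbf{S2PHHP}$((U_{a1}, U_{a4}), (U_{b1}, U_{b4}), \rho)$---separated by purely local deterministic operations ($I_a = e^A$, $I_b = e^B$, $\text{hsum}(\cdot)$, and $\text{hcopy}(\cdot)$). Since these local operations involve no exchange of messages, they contribute nothing to the adversary's view beyond what is already inherited from the sub-protocol shares the party holds. Next, I would construct simulators $S_1$ (for Alice) and $S_2$ (for Bob) by sequentially composing the simulators of the three sub-protocols, feeding each with the party's current share, including those freshly derived from local transformations of a previously simulated share.

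The key verification step will be to show that passing the output shares of one secure sub-protocol as inputs to the next does not break simulatability. Because each sub-protocol guarantees that its output shares are uniformly random subject only to summing to the correct functional value, applying a deterministic local transformation ($e^{(\cdot)}$, $\text{hsum}$, or $\text{hcopy}$) to such a share yields a value which, from the adversary's perspective, depends only on the targeted reconstructed quantity and on the fresh randomness consumed by the subsequent sub-protocol invocation. This preservation of computational indistinguishability under polynomial-time deterministic transformations is the subtlety I expect to be the main obstacle, and it is the step most worth spelling out carefully, especially at the S2PSCR-to-S2PHHP boundary where the share $U_{a3}$ (respectively $U_{b3}$) is first replicated into $U_{a4}$ (respectively $U_{b4}$) and then paired with $U_{a1}$ (respectively $U_{b1}$) as a joint input.

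Once each intermediate share is shown to be simulatable and the composition remains sound across all three rounds, Lemma~\ref{lemma2} closes the argument: the combined view of each party in the real execution of S2PSM is computationally indistinguishable from the view generated by the composed simulator in the ideal world. This will establish that the functionality $f(A,B) = \text{softmax}(A+B)$ is securely computed by S2PSM under the semi-honest adversarial model.
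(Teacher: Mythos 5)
Your proposal is correct and follows essentially the same route the paper takes (and spells out more fully what the paper leaves implicit): the paper proves this theorem exactly as it proves Theorems~\ref{theorem:Secure-S2PHHP} and \ref{theorem:Secure-S2PSCR}, by viewing S2PSM in a hybrid model, invoking Lemma~\ref{lemma2} to reduce its security to the simulatability of the S2PHP, S2PSCR, and S2PHHP sub-protocol calls, and noting that the interleaved local operations ($e^{(\cdot)}$, hsum, hcopy) add nothing to the adversary's view. Your extra attention to the share-passing boundaries is a sound elaboration of the same argument rather than a departure from it.
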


\subsection{Secure Two-Party Gradient Protocol for Multi-Layer Perceptrons (S2PG-MLP)}

The problem definition of S2PG-MLP is as follows:

\begin{problem}[Secure Two-Party Multi-Layer Perceptron Gradient Computation Process]\label{Problem-S2PG-MLP}
    Let the network parameters be given by $D=\{d_0, d_1, \cdots, d_L\}$, which represent the number of neurons in each layer, including the input, hidden, and output layers, where $L(L\ge2)$ denotes the number of layers excluding the input layer. For the $l$-th layer of the model $(1\le l < L)$, assume that Alice and Bob each hold parts of the model parameters of layer $l + 1$, $\hat{W}_a^{(l + 1)}, \hat{W}_b^{(l + 1)}$, and the gradients $G_a^{(l+1)}, G_b^{(l+1)}$, as well as the intermediate input variables of layer $l$, $X_a^{(l)}, X_b^{(l)}$. They aim to perform privacy-preserving multi-layer perceptron gradient computation, so that Alice and Bob ultimately obtain $G_a^{(l)}$ and $G_b^{(l)}$, satisfying $G_a^{(l)} + G_b^{(l)} = G^{(l)}$, which represents the model gradient of layer $l$.
\end{problem}

\textbf{S2PG-MLP Protocol Description:}
In the gradient computation of the $l$-th layer $(1\le l < L)$ during MLP backpropagation, the inputs are the model parameters of layer $l+1$ held by both parties, $\hat{W}_a^{(l + 1)}, \hat{W}_b^{(l + 1)}\in\mathbb{R}^{n\times d_L}$, and the gradients $G_a^{(l+1)}, G_b^{(l+1)}\in\mathbb{R}^{n\times d_{l+1}}$, the intermediate variables of layer $l$, $X_a^{(l)}, X_b^{(l)}\in\mathbb{R}^{n\times d_l}$, and a splitting parameter $\rho \ge 2$. The outputs are the gradients of layer $l$, $G_a^{(l)}, G_b^{(l)}\in\mathbb{R}^{n\times d_l}$. Alice and Bob first process $\hat{W}_a^{(l + 1)}, \hat{W}_b^{(l + 1)}$ by removing the first row and transposing them, denoted as $W_a = \text{dtrans}(\hat{W}_a^{(l + 1)})$ and $W_b = \text{dtrans}(\hat{W}_b^{(l + 1)})$. Then, Alice and Bob jointly compute $\text{relu}'(X_a^{(l)}+X_b^{(l)}) = F_a = F_b$ using the S2PDRL protocol, and compute $(G_a^{(l+1)} + G_b^{(l+1)})\times (W_a + W_b) = U_a + U_b$ using the S2PHM protocol. Finally, both parties locally compute the gradients of layer $l$, $G_a^{(l)} = F_a \odot U_a$ and $G_b^{(l)} = F_b \odot U_b$. The detailed process is shown in Algorithm \ref{alg:S2PG-MLP}. We analyze the probability of anomaly detection failure in S2PG-MLP. S2PG-MLP will only fail to detect an anomaly if all the sub-protocols S2PDRL and S2PHM fail to detect the anomaly. Therefore, the probability of S2PG-MLP failing to detect a computational anomaly is $P_f(S2PG\text{-}MLP) = P_f(S2PDRL) \cdot P_f(S2PHM) \leq \left(\frac{1}{4^l}\right)^3 \approx 7.52 \times 10^{-37} \ (l = 20)$.

\begin{breakablealgorithm}
    \caption{S2PG-MLP}
    \label{alg:S2PG-MLP}
    \begin{algorithmic}[1]
        \Require $\hat{W}_a^{(l + 1)}, \hat{W}_b^{(l + 1)}, G_a^{(l+1)}, G_b^{(l+1)}, X_a^{(l)}, X_b^{(l)}, \rho$
        \Ensure $G_a^{(l)}, G_b^{(l)}$
        
        \State $W_a = dtrans(\hat{W}_a^{(l + 1)})$, $W_b = dtrans(\hat{W}_b^{(l + 1)})$ 
        \Statex \Comment{$W_a, W_b \in \mathbb{R}^{d_{l + 1} \times d_l}$}
        \State $F_a, F_b \gets$ \textbf{S2PDRL}$(X_a^{(l)}, X_b^{(l)}, \rho)$ \Comment{$F_a, F_b \in \mathbb{R}^{n \times d_l}$}
        \State $U_a, U_b \gets$ \textbf{S2PHM}$((G_a^{(l+1)}, W_a), (G_b^{(l+1)}, W_b))$ 
        \Statex \Comment{$U_a, U_b \in \mathbb{R}^{n \times d_l}$}
        \State $G_a^{(l)} = F_a \odot U_a$, $G_b^{(l)} = F_b \odot U_b$ \Comment{$G_a^{(l)}, G_b^{(l)} \in \mathbb{R}^{n \times d_l}$}
        \State \Return $G_a^{(l)}, G_b^{(l)}$
    \end{algorithmic}
\end{breakablealgorithm}

\textbf{Correctness}: For simplicity, we assume that the entire training set is used in each training round (i.e., $B = n$). Using $D=\{d_0, d_1, \cdots, d_L\}$ to represent the network structure, in each round of backpropagation, for the $L$-th layer, we have:$G_a^{(L)} + G_b^{(L)}=Y_a^{(L)} +  Y_b^{(L)} - Y=Y^{(L)} - Y = G^{(L)}$, and for the $l$-th layer $(1 \le l < L)$, we have: $G_a^{(l)} + G_b^{(l)} = F_a \odot U_a + F_b \odot U_b = ((G_a^{(l+1)} + G_b^{(l+1)}) \times (W_a + W_b)) \odot \text{relu}'(X_a^{(l)} + X_b^{(l)}) = (G^{(l+1)} \times \text{dtrans}(\hat{W}^{(l+1)})) \odot \text{relu}'(X^{(l)}) = G^{(l)}$. This follows the centralized gradient propagation formula, thus proving the correctness of the gradient protocol.

We can prove the security of S2PG-MLP based on the UC framework.

\begin{theorem}\label{theorem:Secure-S2PG-MLP}
    The S2PG-MLP protocol is secure under the semi-honest adversarial model.
\end{theorem}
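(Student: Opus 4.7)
The plan is to follow the same hybrid-model / UC-style argument already used for Theorems~\ref{theorem:Secure-S2PHHP}, \ref{theorem:Secure-S2PSCR}, and \ref{theorem:Secure-S2PSM}, and to reduce the security of S2PG-MLP to that of its two invoked sub-protocols, S2PDRL (Theorem~\ref{theorem:Secure-S2PDRL}) and S2PHM. Concretely, I would observe that S2PG-MLP consists of three steps: (i) a purely local deterministic transformation $W_a = \text{dtrans}(\hat{W}_a^{(l+1)})$ and $W_b = \text{dtrans}(\hat{W}_b^{(l+1)})$, which requires no communication and therefore adds nothing to either party's view; (ii) one call to S2PDRL on the shares $X_a^{(l)}, X_b^{(l)}$, producing $F_a = F_b = \text{relu}'(X_a^{(l)}+X_b^{(l)})$; and (iii) one call to S2PHM on $((G_a^{(l+1)},W_a),(G_b^{(l+1)},W_b))$, producing additive shares $U_a, U_b$ of $G^{(l+1)} \times \text{dtrans}(\hat{W}^{(l+1)})$. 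The final Hadamard products $G_a^{(l)} = F_a \odot U_a$ and $G_b^{(l)} = F_b \odot U_b$ are again computed locally and contribute nothing to the view.

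From here, the main step is to apply Lemma~\ref{lemma2}: since S2PDRL and S2PHM are each perfectly simulatable (by Theorem~\ref{theorem:Secure-S2PDRL} and the corresponding result for S2PHM), their sequential composition inside S2PG-MLP is simulatable as well. I would therefore build simulators $S_1$ and $S_2$ that, given a corrupted party's input and output, internally invoke the guaranteed simulators of S2PDRL and S2PHM on random inputs consistent with the party's own shares and with any output-consistency constraint, and then append the locally computed transcript elements. Because $F_a = F_b$ is a public intermediate value shared by both parties, it can be generated identically in the real and ideal executions from the output of the simulated S2PDRL call without introducing distinguishability.

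The subtlest point, and the one I would treat most carefully, is argued as follows: the correlation between $F$ and $U$ in Alice's (resp.\ Bob's) view could in principle constrain the distribution of $G_a^{(l+1)}$ or $X_b^{(l)}$ beyond what the outputs already reveal. However, because the Hadamard combination is computed locally \emph{after} both sub-protocols return, the joint distribution of $(F_a, U_a)$ in the real world equals the joint distribution produced by running the two independent simulators of S2PDRL and S2PHM in sequence on Alice's inputs; no additional message is exchanged that could tie them together. The analogous argument holds for Bob. Once this is in place, input-independence of each party's output for fixed local input follows from the correctness relation $G_a^{(l)} + G_b^{(l)} = G^{(l)}$ combined with the freshness of the randomness produced inside the sub-protocols, exactly as in the S2PHP/S2PHHP proofs. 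Hence S2PG-MLP securely computes $G^{(l)}$ under the semi-honest model.
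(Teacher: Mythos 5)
Your proposal matches the paper's intended argument exactly: the paper gives no written-out proof for this theorem, stating only that security follows ``based on the UC framework,'' and the proofs it does write out for the analogous composed protocols (Theorems~\ref{theorem:Secure-S2PHHP} and \ref{theorem:Secure-S2PSCR}) are precisely your structure --- decompose into local steps plus sub-protocol calls, invoke Lemma~\ref{lemma2}, and appeal to the simulatability of S2PDRL and S2PHM. Your additional care about the publicly revealed value $F_a=F_b=\mathrm{relu}'(X^{(l)})$ and about the joint distribution of $(F_a,U_a)$ goes slightly beyond what the paper records, but is consistent with how the paper treats the public DReLU output in Theorem~\ref{theorem:Secure-S2PDRL}.
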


\section{Secure Two-Party MLP Collaborative Modeling Based on Spatial Scale Optimization}
This section introduces the proposed secure two-party multi-layer perceptron (MLP) training and prediction algorithms (S2PMLP-TR and S2PMLP-PR), along with their corresponding correctness analyses.

\subsection{Secure Two-Party MLP Model Training}
The problem definition of S2PMLP-TR is as follows:

\begin{problem}[Secure Two-Party Multi-Layer Perceptron Training Process]\label{Problem-S2PMLP-TR}
    For a scenario involving heterogeneous distributed data, Alice and Bob each hold part of the training dataset, denoted as $X_a$ and $X_b$, respectively, and both parties share the labels $Y$. Using a batch size of $B$, a learning rate of $\eta$, and a total number of iterations $t$, the network parameters are represented as $D=\{d_0, d_1, \cdots, d_L\}$, where each $d_i$ represents the number of neurons in the $i$-th layer, including the input, hidden, and output layers, and $L$ is the number of layers excluding the input layer. The goal is to perform privacy-preserving MLP training based on the above parameters, so that Alice and Bob ultimately obtain $\hat{W}_a=\{\hat{W}_a^{(1)},\hat{W}_a^{(2)},\cdots, \hat{W}_a^{(L)}\}$ and $\hat{W}_b=\{\hat{W}_b^{(1)},\hat{W}_b^{(2)},\cdots, \hat{W}_b^{(L)}\}$, satisfying: $\hat{W} = \{\hat{W}^{(1)},\hat{W}^{(2)},\cdots, \hat{W}^{(L)}\} = \{\hat{W}_a^{(1)} + \hat{W}_b^{(1)}, \hat{W}_a^{(2)} + \hat{W}_b^{(2)},\cdots, \hat{W}_a^{(L)} + \hat{W}_b^{(L)}\}$, which represents the trained model parameters.
\end{problem}

\textbf{S2PMLP-TR Protocol Description:}
In MLP model training, the overall training process for each batch can be divided into two parts: forward propagation and backward propagation. Similarly, S2PMLP-TR can also be split into two parts: S2PMLP-TR forward propagation and S2PMLP-TR backward propagation. Embedding these two parts into the overall training process forms the complete S2PMLP-TR protocol.

\begin{breakablealgorithm}
    \caption{S2PMLP-TR}
    \label{alg:S2PMLP-TR}
    \begin{algorithmic}[1]
        \Require $L$, $D$, $X_a, X_b, Y, B, t, \eta$, $\rho$
        \Ensure $\hat{W}_a, \hat{W}_b$

        \State $N = \lceil \frac{n}{B} \rceil$
        \State $n_1, n_2, \cdots, n_{N-1} = B$ and $n_N = n - (N-1) \cdot B$
        \State \{$Y^{(1)}, Y^{(2)}, \cdots, Y^{(N)}$\} $\gets$ Split $Y$ sequentially into $N$ parts 
        \Statex \Comment{$Y^{(i)} \in \{0, 1\}^{n_i \times d_L}$}
        \State $\hat{X}_a, \hat{X}_b = addcol(X_a, X_b)$  \Comment{$\hat{X}_a, \hat{X}_b \in \mathbb{R}^{n \times (d_0 + 1)}$}
        \State \{$\hat{X}_a^{(1)}, \hat{X}_a^{(2)}, \cdots, \hat{X}_a^{(N)}$\} $\gets$ Split $\hat{X}_a$ into $N$ parts 
        \Statex \Comment{$\hat{X}_a^{(i)} \in \mathbb{R}^{n_i \times (d_0 + 1)}$}
        \State \{$\hat{X}_b^{(1)}, \hat{X}_b^{(2)}, \cdots, \hat{X}_b^{(N)}$\} $\gets$ Split $\hat{X}_b$ into $N$ parts 
        \Statex \Comment{$\hat{X}_b^{(i)} \in \mathbb{R}^{n_i \times (d_0 + 1)}$}
        \State Initialize $\hat{W}_a = \{\hat{W}_a^{(1)}, \hat{W}_a^{(2)}, \cdots, \hat{W}_a^{(L)}\}$ 
        \Statex \Comment{$\hat{W}_a^{(i)} \in \mathbb{R}^{(d_{i-1} + 1) \times d_i}$}
        \State Initialize $\hat{W}_b = \{\hat{W}_b^{(1)}, \hat{W}_b^{(2)}, \cdots, \hat{W}_b^{(L)}\}$ 
        \Statex \Comment{$\hat{W}_b^{(i)} \in \mathbb{R}^{(d_{i-1} + 1) \times d_i}$}

        \For{$round := 1$ \textbf{to} $t$}
            \For{$i := 1$ \textbf{to} $N$}
                \State $Y_a^{(L)}, Y_b^{(L)}, X_a^*, X_b^*, Z_a^*, Z_b^* \gets$ \textbf{S2PMLP-TR-FP}$(L, D, \hat{X}_a^{(i)}, \hat{X}_b^{(i)}, \hat{W}_a, \hat{W}_b, \rho)$
                \State $\hat{W}_a, \hat{W}_b \gets$ \textbf{S2PMLP-TR-BP}$(L, D, Y_a^{(L)}, Y_b^{(L)}, Y^{(i)}, X_a^*, X_b^*, Z_a^*, Z_b^*, \rho)$
            \EndFor
        \EndFor        
        \State \Return $\hat{W}_a, \hat{W}_b$
    \end{algorithmic}
\end{breakablealgorithm}

\textbf{Overall Process:}
In S2PMLP-TR, the inputs are the number of network layers $L \ge 1$, the set of neuron counts $D=\{d_0, d_1, \cdots, d_L\}$, the inputs $X_a, X_b \in \mathbb{R}^{n\times d_0}$, the labels $Y \in \{0, 1\}^{n\times d_L}$, the batch size $1 \leq B \leq n$, the number of iterations $t \geq 1$, the learning rate $\eta > 0$, and the splitting parameter $\rho \geq 2$. The outputs are the model parameters held by both parties, $\hat{W}_a=\{\hat{W}_a^{(1)},\hat{W}_a^{(2)},\cdots, \hat{W}_a^{(L)}\}$ and $\hat{W}_b=\{\hat{W}_b^{(1)},\hat{W}_b^{(2)},\cdots, \hat{W}_b^{(L)}\}$, where $\hat{W}_a^{(i)}, \hat{W}_b^{(i)} \in\mathbb{R}^{(d_{i-1} + 1)\times d_i} (1 \le i \le L)$. 

Alice and Bob first perform some plaintext operations to obtain several public variables: they compute $N = \left\lceil \frac{n}{B} \right\rceil$, which indicates the number of batches, determine the number of samples in the $i$-th batch $n_i \ (1 \le i \le N)$, and split the labels $Y$ into $N$ parts, where $Y^{(i)}$ represents the $i$-th part. Once these operations are completed, Alice and Bob preprocess their private data by concatenating columns, such that $\hat{X}_a = [1, X_a]$ and $\hat{X}_b = [0, X_b]$. This preprocessing is denoted as $\hat{X}_a, \hat{X}_b = \text{addcol}(X_a, X_b)$. Alice splits $\hat{X}_a$ into $N$ parts sequentially, where $\hat{X}_a^{(i)}$ represents the $i$-th part, and Bob performs the same operation. 

Alice and Bob then choose a parameter initialization method and initialize their respective network parameters $\hat{W}_a=\{\hat{W}_a^{(1)},\hat{W}_a^{(2)},\cdots, \hat{W}_a^{(L)}\}$ and $\hat{W}_b=\{\hat{W}_b^{(1)},\hat{W}_b^{(2)},\cdots, \hat{W}_b^{(L)}\}$. After completing the above preparations, the next step is to update the parameters using gradient descent. Each training iteration processes all batches sequentially. For each batch, Alice and Bob sequentially invoke S2PMLP-TR-FP (privacy-preserving forward propagation) and S2PMLP-TR-BP (privacy-preserving backward propagation) to complete the training for that batch. The training process continues until $t$ iterations are completed. The detailed process is described in Algorithm \ref{alg:S2PMLP-TR}.

\begin{breakablealgorithm}
    \caption{S2PMLP-TR Forward Propagation (S2PMLP-FP)}
    \label{alg:S2PMLP-TR-FP}
    \begin{algorithmic}[1]
        \Require $L$, $D$, $X_a, X_b$, $\hat{W}_a, \hat{W}_b$, $\rho$
        \Ensure $Y_a^{(L)}, Y_b^{(L)}$, $X_a^*$, $X_b^*$, $Z_a^*$, $Z_b^*$

        \State $Z_a^{(0)} = X_a$, $Z_b^{(0)} = X_b$ \Comment{$Z_a^{(0)}, Z_b^{(0)} \in \mathbb{R}^{n \times (d_0 + 1)}$}
        
        \For{$l := 1$ \textbf{to} $L - 1$}
            \State $X_a^{(l)}, X_b^{(l)} \gets$ \textbf{S2PHM}$((Z_a^{(l-1)}, \hat{W}_a^{(l)}), (Z_b^{(l-1)}, \hat{W}_b^{(l)}))$
            
            \Statex \Comment{$X_a^{(l)}, X_b^{(l)} \in \mathbb{R}^{n \times d_l}$}
            
            \State $Y_a^{(l)}, Y_b^{(l)} \gets$ \textbf{S2PRL}$(X_a^{(l)}, X_b^{(l)}, \rho)$
            \Comment{$Y_a^{(l)}, Y_b^{(l)} \in \mathbb{R}^{n \times d_l}$}
            
            \State $Z_a^{(l)}, Z_b^{(l)} = addcol(Y_a^{(l)}, Y_b^{(l)})$
            \Comment{$Z_a^{(l)}, Z_b^{(l)} \in \mathbb{R}^{n \times (d_l + 1)}$}
        \EndFor
        
        \State $X_a^{(L)}, X_b^{(L)} \gets$ \textbf{S2PHM}$((Z_a^{(L-1)}, \hat{W}_a^{(L)}), (Z_b^{(L-1)}, \hat{W}_b^{(L)}))$
        
        \Statex \Comment{$X_a^{(L)}, X_b^{(L)} \in \mathbb{R}^{n \times d_L}$}
        
        \State $Y_a^{(L)}, Y_b^{(L)} \gets$ \textbf{S2PSM}$(X_a^{(L)}, X_b^{(L)}, \rho)$
        \Comment{$Y_a^{(L)}, Y_b^{(L)} \in \mathbb{R}^{n \times d_L}$}
        
        \State \Return $Y_a^{(L)}$, $Y_b^{(L)}$, $X_a^*$, $X_b^*$, $Z_a^*$, $Z_b^*$
    \end{algorithmic}
\end{breakablealgorithm}

\textbf{Forward Propagation}: In the forward propagation of S2PMLP-TR, the inputs are the number of network layers $L \geq 1$, the set of neuron counts $D=\{d_0, d_1, \cdots, d_L\}$, the batch inputs held by both parties $X_a, X_b \in \mathbb{R}^{n\times (d_0 + 1)}$, the model parameters held by both parties $\hat{W}_a=\{\hat{W}_a^{(1)}, \hat{W}_a^{(2)}, \cdots, \hat{W}_a^{(L)}\}$ and $\hat{W}_b=\{\hat{W}_b^{(1)}, \hat{W}_b^{(2)}, \cdots, \hat{W}_b^{(L)}\}$, and the splitting parameter $\rho \geq 2$. The outputs are the outputs of the model’s output layer held by both parties $Y_a^{(L)}, Y_b^{(L)} \in \mathbb{R}^{n\times d_L}$, as well as the intermediate variables for all layers: $X_a^* = \{X_a^{(0)}, X_a^{(1)}, \cdots, X_a^{(L)}\}$, $X_b^* = \{X_b^{(0)}, X_b^{(1)}, \cdots, X_b^{(L)}\}$, $Z_a^* = \{Z_a^{(0)}, Z_a^{(1)}, \cdots, Z_a^{(L)}\}$, and $Z_b^* = \{Z_b^{(0)}, Z_b^{(1)}, \cdots, Z_b^{(L)}\}$. 

First, set $Z_a^{(0)} = X_a$ and $Z_b^{(0)} = X_b$. Then, iterate through the first $L-1$ layers. For the $l$-th layer $(1 \leq l < L)$: Alice and Bob compute $(Z_a^{(l-1)} + Z_b^{(l-1)}) \times (\hat{W}_a^{(l)} + \hat{W}_b^{(l)}) = X_a^{(l)} + X_b^{(l)}$ securely using the S2PHM (secure two-party matrix multiplication) protocol. They then compute $\text{relu}(X_a^{(l)} + X_b^{(l)}) = Y_a^{(l)} + Y_b^{(l)}$ securely using the S2PRL (secure two-party ReLU activation) protocol. Finally, Alice and Bob preprocess $Y_a^{(l)}$ and $Y_b^{(l)}$ by concatenating columns to add a constant, denoted as $Z_a^{(l)}, Z_b^{(l)} = \text{addcol}(Y_a^{(l)}, Y_b^{(l)})$.

For the $L$-th layer: Alice and Bob compute $(Z_a^{(L-1)} + Z_b^{(L-1)}) \times (\hat{W}_a^{(L)} + \hat{W}_b^{(L)}) = X_a^{(L)} + X_b^{(L)}$ securely using the S2PHM protocol. They then compute $\text{softmax}(X_a^{(L)} + X_b^{(L)}) = Y_a^{(L)} + Y_b^{(L)}$ securely using the S2PSM (secure two-party Softmax computation) protocol. The detailed process of forward propagation is shown in Algorithm \ref{alg:S2PMLP-TR-FP}.

\textbf{Backward Propagation}: In the backward propagation of S2PMLP-TR, the inputs are the number of network layers $L \geq 1$, the set of neuron counts $D=\{d_0, d_1, \cdots, d_L\}$, the batch inputs held by both parties $X_a, X_b \in \mathbb{R}^{n\times (d_0 + 1)}$, the outputs of the model’s output layer held by both parties $Y_a^{(L)}, Y_b^{(L)} \in \mathbb{R}^{n\times d_L}$, the batch labels $Y \in \mathbb{R}^{n\times d_L}$, the intermediate variables for all layers: $X_a^* = \{X_a^{(0)}, X_a^{(1)}, \cdots, X_a^{(L)}\}$, $X_b^* = \{X_b^{(0)}, X_b^{(1)}, \cdots, X_b^{(L)}\}$, $Z_a^* = \{Z_a^{(0)}, Z_a^{(1)}, \cdots, Z_a^{(L)}\}$, $Z_b^* = \{Z_b^{(0)}, Z_b^{(1)}, \cdots, Z_b^{(L)}\}$, the model parameters held by both parties $\hat{W}_a=\{\hat{W}_a^{(1)}, \hat{W}_a^{(2)}, \cdots, \hat{W}_a^{(L)}\}$ and $\hat{W}_b=\{\hat{W}_b^{(1)}, \hat{W}_b^{(2)}, \cdots, \hat{W}_b^{(L)}\}$, the learning rate $\eta > 0$, and the splitting parameter $\rho \geq 2$. The outputs are the updated model parameters held by both parties $\hat{W}_a=\{\hat{W}_a^{(1)}, \hat{W}_a^{(2)}, \cdots, \hat{W}_a^{(L)}\}$ and $\hat{W}_b=\{\hat{W}_b^{(1)}, \hat{W}_b^{(2)}, \cdots, \hat{W}_b^{(L)}\}$.

\begin{breakablealgorithm}
    \caption{S2PMLP-TR Backward Propagation (S2PMLP-BP)}
    \label{alg:S2PMLP-TR-BP}
    \begin{algorithmic}[1]
        \Require $L$, $D$, $Y_a^{(L)}, Y_b^{(L)}, Y$, $X_a^*$, $X_b^*$, $Z_a^*$, $Z_b^*$, $\hat{W}_a$, $\hat{W}_b$, $\eta$, $\rho$
        \Ensure $\hat{W}_a$, $\hat{W}_b$

        \State $G_a^{(L)} = Y_a^{(L)} - Y$, $G_b^{(L)} = Y_b^{(L)}$ 
        \Comment{$G_a^{(L)}, G_b^{(L)} \in \mathbb{R}^{n \times d_L}$}
        
        \For{$l := L-1$ \textbf{to} $1$}
            \State $G_a^{(l)}, G_b^{(l)} \gets$ \textbf{S2PG}$(\hat{W}_a^{(l + 1)}, \hat{W}_b^{(l + 1)}, G_a^{(l+1)},G_b^{(l+1)}, X_a^{(l)}, $ $ X_b^{(l)}, \rho)$

        \EndFor

        \For{$l := 1$ \textbf{to} $L$}
            \State $T_a = Z_a^{(l-1)^T}$, $T_b = Z_b^{(l-1)^T}$ 
            \Comment{$T_a, T_b \in \mathbb{R}^{(d_{l-1} + 1) \times n}$}
            
            \State $\delta W_a^{(l)}, \delta W_b^{(l)} \gets$ \textbf{S2PHM}$((T_a, G_a^{(l)}), (T_b, G_b^{(l)}))$ 
            
            \Statex \Comment{$\delta W_a^{(l)}, \delta W_b^{(l)} \in \mathbb{R}^{(d_{l-1} + 1) \times d_l}$}
            
            \State $\hat{W}_a^{(l)} := \hat{W}_a^{(l)} - \eta \cdot \delta W_a^{(l)}$ 
            \Comment{$\hat{W}_a^{(l)} \in \mathbb{R}^{(d_{l-1} + 1) \times d_l}$}
            
            \State $\hat{W}_b^{(l)} := \hat{W}_b^{(l)} - \eta \cdot \delta W_b^{(l)}$ 
            \Comment{$\hat{W}_b^{(l)} \in \mathbb{R}^{(d_{l-1} + 1) \times d_l}$}
        \EndFor

        \State \Return $\hat{W}_a$, $\hat{W}_b$
    \end{algorithmic}
\end{breakablealgorithm}

Alice and Bob first compute the gradient for the $L$-th layer as $G_a^{(L)} = Y_a^{(L)} - Y$ and $G_b^{(L)} = Y_b^{(L)}$. They then proceed with backpropagation from layer $L-1$ to layer 1. For each layer, Alice and Bob jointly invoke the S2PG-MLP protocol to compute the gradients $G_a^{(l)}, G_b^{(l)}$ for that layer. 

Once the gradient for each layer is computed, the model parameters are updated. For the $l$-th layer $(1 \le l \le L)$: Alice and Bob first compute $T_a = Z_a^{(l-1)^T}$ and $T_b = Z_b^{(l-1)^T}$, the transposes of the intermediate variables from the previous layer. They then securely compute $(T_a + T_b) \times (G_a^{(l)} + G_b^{(l)}) = \delta W_a^{(l)} + \delta W_b^{(l)}$ using the S2PHM (secure two-party matrix multiplication) protocol. Finally, they each update their local model parameters: $\hat{W}_a^{(l)} := \hat{W}_a^{(l)} - \eta \cdot \delta W_a^{(l)}$ and $\hat{W}_b^{(l)} := \hat{W}_b^{(l)} - \eta \cdot \delta W_b^{(l)}$. The detailed process of backward propagation is shown in Algorithm \ref{alg:S2PMLP-TR-BP}.

\textbf{Correctness}: For simplicity, we assume that each training epoch uses the entire training dataset (i.e., $B = n$). Let $D = \{d_0, d_1, \cdots, d_L\}$ represent the network structure. During forward propagation in each epoch, for the $l$-th layer $(1 \le l < L)$: $X_a^{(l)} + X_b^{(l)} = (Z_a^{(l-1)} + Z_b^{(l-1)}) \times (\hat{W}_a^{(l)} + \hat{W}_b^{(l)}) = Z^{(l-1)} \times \hat{W}^{(l)} = X^{(l)}$, and $Y_a^{(l)} + Y_b^{(l)} = \mathrm{relu}(X_a^{(l)} + X_b^{(l)}) = \mathrm{relu}(X^{(l)}) = Y^{(l)}$. For the $L$-th layer: $X_a^{(L)} + X_b^{(L)} = (Z_a^{(L-1)} + Z_b^{(L-1)}) \times (\hat{W}_a^{(L)} + \hat{W}_b^{(L)}) = Z^{(L-1)} \times \hat{W}^{(L)} = X^{(L)}$, and $Y_a^{(L)} + Y_b^{(L)} = \mathrm{softmax}(X_a^{(L)} + X_b^{(L)}) = \mathrm{softmax}(X^{(L)}) = Y^{(L)}$.

These results conform to the centralized forward propagation formulas, thus proving the correctness of forward propagation.

During backpropagation in each epoch, for the $L$-th layer: $G_a^{(L)} + G_b^{(L)} = Y_a^{(L)} + Y_b^{(L)} - Y = Y^{(L)} - Y = G^{(L)}$.

For the $l$-th layer $(1 \le l < L)$, based on the correctness of S2PG-MLP, the gradient descent in S2PMLP-TR adheres to the centralized gradient propagation formula. For parameter updates in the $l$-th layer $(1 \le l \le L)$: $\hat{W}_a^{(l)} + \hat{W}_b^{(l)} := \hat{W}_a^{(l)} + \hat{W}_b^{(l)} - \eta \cdot (\delta W_a^{(l)} + \delta W_b^{(l)})$, which is equivalent to: $\hat{W}^{(l)} := \hat{W}^{(l)} - \eta \cdot \delta W^{(l)}$. This satisfies the centralized parameter update formula, thus proving the correctness of backpropagation. Based on the correctness of both forward and backpropagation, we can verify the overall training correctness of S2PMLP-TR.

We analyze the probability of anomaly detection failure for S2PMLP-TR in each batch. Assuming the network structure has $L$ layers ($L \ge 1$), S2PMLP-TR will only fail to detect an anomaly if all the sub-protocols S2PHM, S2RL, S2PG-MLP, and S2PSM fail to detect the anomaly. Therefore, the probability of S2PMLP-TR failing to detect a computational anomaly in each batch is $P_f(S2PMLP\text{-}TR) = P_f(S2PHM)^{2L} \cdot P_f(S2PRL)^{L-1} \cdot P_f(S2PG\text{-}MLP)^{L-1} \cdot P_f(S2PSM) \leq \left(\frac{1}{4^l}\right)^{8L+2} \leq \left(\frac{1}{4^l}\right)^{10} \approx 3.87 \times 10^{-121}$ $(l=20)$.

We can prove the security of S2PMLP-TR based on the UC framework.

\begin{theorem}\label{theorem:Secure-S2PMLP-TR}
    The S2PMLP-TR protocol is secure under the semi-honest adversarial model.
\end{theorem}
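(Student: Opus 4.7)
The plan is to prove Theorem~\ref{theorem:Secure-S2PMLP-TR} in the Universal Composability hybrid model, mirroring the strategy used for Theorems~\ref{theorem:Secure-S2PHHP}, \ref{theorem:Secure-S2PSCR}, \ref{theorem:Secure-S2PSM}, and \ref{theorem:Secure-S2PG-MLP}. The S2PMLP-TR protocol is obtained by sequentially composing S2PMLP-TR-FP and S2PMLP-TR-BP over $t$ epochs and $N$ batches, and its only interactive steps are invocations of previously established secure primitives: S2PHM, S2PRL, S2PSM, and S2PG-MLP. All other operations (column concatenation via \textbf{addcol}, splitting of $\hat{X}_a,\hat{X}_b,Y$, transposition, the label subtraction $G_a^{(L)} = Y_a^{(L)} - Y$, and the local parameter update $\hat{W}_a^{(l)} \gets \hat{W}_a^{(l)} - \eta\cdot\delta W_a^{(l)}$) are performed without communication and therefore contribute nothing to the opposite party's view beyond the party's own inputs and coin tosses.

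First, I would isolate the interactive surface of a single batch. S2PMLP-TR-FP contributes, for each layer $l<L$, one S2PHM call followed by one S2PRL call, and for layer $L$ one S2PHM followed by one S2PSM. S2PMLP-TR-BP contributes, for each layer $l$ from $L{-}1$ down to $1$, one S2PG-MLP call, and for every $l=1,\dots,L$ one further S2PHM call to produce $\delta W^{(l)}$. Each of these calls admits a polynomial-time simulator whose output is computationally indistinguishable from the real-world view by the cited theorems. Invoking Lemma~\ref{lemma2}, I would construct $S_1$ (respectively $S_2$) for a corrupted Alice (respectively Bob) that, given the party's input and its ideal-world output share $\hat{W}_a$ (respectively $\hat{W}_b$), initialises the local state exactly as the honest protocol would, then executes the batches by calling, in the prescribed order, the simulators of each sub-protocol on the current local shares. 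Because every sub-protocol outputs additive shares, the marginal distribution of each intermediate quantity $X_a^{(l)}, Y_a^{(l)}, Z_a^{(l)}, G_a^{(l)}, \delta W_a^{(l)}$ from one party's perspective is computationally indistinguishable from a freshly sampled random matrix of the correct shape, which is exactly the input invariant required by the next sub-simulator in the chain.

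The main obstacle I anticipate is the cross-batch argument: unlike a single-protocol invocation, S2PMLP-TR reuses the secret-shared parameters $\hat{W}_a^{(l)},\hat{W}_b^{(l)}$ across all $tN$ batches, so an adversarial Alice sees a long trajectory of views that are correlated through her fixed share and the evolving opposite share. I would handle this with a hybrid argument over batches, establishing inductively that after each update step the pair $(\hat{W}_a^{(l)},\hat{W}_b^{(l)})$ remains a valid additive secret sharing of the true weight matrix $\hat{W}^{(l)}$ with the Alice-side (Bob-side) share being computationally indistinguishable from uniform given the corrupted party's view; this invariant is preserved because each increment $\delta W_a^{(l)}$ is the Alice-side output of S2PHM and inherits its share-uniformity guarantee. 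Combined with the correctness argument already furnished in the excerpt, this yields that for every $(x,y)$ the simulator's output satisfies the joint computational indistinguishability required by Definition~\ref{def3}, so S2PMLP-TR privately computes the functionality of Problem~\ref{Problem-S2PMLP-TR} under the semi-honest adversarial model.
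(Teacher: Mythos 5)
The paper does not actually supply a proof of Theorem~\ref{theorem:Secure-S2PMLP-TR}: it states only that ``we can prove the security of S2PMLP-TR based on the UC framework,'' leaving the reader to extrapolate from the proofs of Theorems~\ref{theorem:Secure-S2PHHP} and~\ref{theorem:Secure-S2PSCR}. Your proposal follows exactly that implied route --- decompose the protocol into its interactive sub-protocol calls (S2PHM, S2PRL, S2PSM, S2PG-MLP), observe that all remaining operations are local, and invoke Lemma~\ref{lemma2} to reduce security to the composed sub-simulators --- and it goes further than the paper by explicitly confronting the cross-batch correlation through the shared weights $\hat{W}_a^{(l)},\hat{W}_b^{(l)}$ via a hybrid argument over the $tN$ batches. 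That addition is the right instinct and is something the paper silently omits.

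There is, however, one concrete soft spot in your chain-of-simulators argument. You assert that every intermediate quantity $X_a^{(l)}, Y_a^{(l)}, Z_a^{(l)}, G_a^{(l)}, \delta W_a^{(l)}$ is computationally indistinguishable from a freshly sampled random matrix from the opposite party's perspective. This fails for everything downstream of S2PDRL: that sub-protocol deliberately makes $F_a = F_b = \mathrm{relu}'(X_a^{(l)}+X_b^{(l)})$ a \emph{public} output known to both parties (the paper says so explicitly in the proof of Theorem~\ref{theorem:Secure-S2PDRL}), so $Y_a^{(l)} = U_a \odot X_a^{(l)}$ and $G_a^{(l)} = F_a \odot U_a$ carry a publicly known, data-dependent zero pattern and are not random-looking matrices. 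More importantly, these activation patterns depend on both parties' private data at every layer and every batch, while the ideal functionality of Problem~\ref{Problem-S2PMLP-TR} outputs only the final weight shares; a simulator given one party's input and its share of $\hat{W}$ cannot reproduce them. A complete proof must either augment the ideal functionality with this leakage or argue (as the paper does only heuristically for S2PDRL in isolation, via Definition~\ref{def5}) that revealing the sign pattern of the masked pre-activations is harmless under composition. As written, your invariant is stated too strongly and the step passing from one sub-simulator to the next does not go through for the ReLU layers without that repair.
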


\subsection{Secure Two-Party MLP Inference Model}

The problem definition of S2PMLP-PR is as follows:

\begin{problem}[Secure Two-Party Multi-Layer Perceptron Inference]\label{Problem-S2PMLP-PR}
    In the context of heterogeneous distributed data, let the network parameters be given by $D = \{d_0, d_1, \cdots, d_L\}$, where each element represents the number of neurons in each layer, including the input, hidden, and output layers, and $L$ represents the number of network layers excluding the input layer. Alice and Bob respectively hold parts of the prediction dataset, denoted as $X_a$ and $X_b$, as well as the model parameters for all layers: $\hat{W}_a = \{\hat{W}_a^{(1)}, \hat{W}_a^{(2)}, \cdots, \hat{W}_a^{(L)}\}$ and $\hat{W}_b = \{\hat{W}_b^{(1)}, \hat{W}_b^{(2)}, \cdots, \hat{W}_b^{(L)}\}$. Their goal is to perform privacy-preserving inference of the multi-layer perceptron model such that Alice obtains $Y_a^{(L)}$ and Bob obtains $Y_b^{(L)}$, satisfying: $Y_a^{(L)} + Y_b^{(L)} = Y^{(L)}$, where $Y^{(L)}$ represents the true prediction result of the model.
\end{problem}

\textbf{S2PMLP-PR Protocol Description:}  
In S2PMLP-PR, the input includes the number of network layers $L \geq 1$, the set of neuron counts $D = \{d_0, d_1, \cdots, d_L\}$, the input data for both parties $X_a, X_b \in \mathbb{R}^{n \times d_0}$, the model parameters held by both parties $\hat{W}_a = \{\hat{W}_a^{(1)}, \hat{W}_a^{(2)}, \cdots, \hat{W}_a^{(L)}\}$ and $\hat{W}_b = \{\hat{W}_b^{(1)}, \hat{W}_b^{(2)}, \cdots, \hat{W}_b^{(L)}\}$, and the splitting parameter $\rho \geq 2$. The output is the model output of the output layer held by both parties, $Y_a^{(L)}, Y_b^{(L)} \in \mathbb{R}^{n \times d_L}$.

The S2PMLP-PR protocol follows the same forward propagation flow as S2PMLP-TR. Specifically, Alice and Bob preprocess their private data by concatenating columns such that $Z_a^{(0)} = [1, X_a]$ and $Z_b^{(0)} = [0, X_b]$. This preprocessing step is denoted as: $Z_a^{(0)}, Z_b^{(0)} = \text{addcol}(X_a, X_b)$.

Then, for each of the first $L-1$ layers, Alice and Bob jointly compute: $(Z_a^{(l-1)} + Z_b^{(l-1)}) \times (\hat{W}_a^{(l)} + \hat{W}_b^{(l)}) = X_a^{(l)} + X_b^{(l)}$, using the S2PHM protocol. They then jointly compute: $\mathrm{relu}(X_a^{(l)} + X_b^{(l)}) = Y_a^{(l)} + Y_b^{(l)}$, using the S2PRL protocol. Finally, they perform a column concatenation step: $Z_a^{(l)}, Z_b^{(l)} = \text{addcol}(Y_a^{(l)}, Y_b^{(l)})$. 

For the $L$-th layer, Alice and Bob first jointly compute: $(Z_a^{(L-1)} + Z_b^{(L-1)}) \times (\hat{W}_a^{(L)} + \hat{W}_b^{(L)}) = X_a^{(L)} + X_b^{(L)}$, using the S2PHM protocol. They then jointly compute: $\mathrm{softmax}(X_a^{(L)} + X_b^{(L)}) = Y_a^{(L)} + Y_b^{(L)}$, using the S2PSM protocol. The detailed process is shown in Algorithm~\ref{alg:S2PMLP-PR}.

\begin{breakablealgorithm}
    \caption{S2PMLP-PR}
    \label{alg:S2PMLP-PR}
    \begin{algorithmic}[1]
        \Require $L$, $D$, $X_a$, $X_b$, $\hat{W}_a$, $\hat{W}_b$, $\rho$
        \Ensure $Y_a^{(L)}$, $Y_b^{(L)}$, $X_a^*$, $X_b^*$, $Z_a^*$, $Z_b^*$

        \State $Z_a^{(0)}, Z_b^{(0)} = addcol(X_a, X_b)$ 
        \Comment{$Z_a^{(0)}, Z_b^{(0)} \in \mathbb{R}^{n \times (d_0 + 1)}$}

        \For{$l := 1$ \textbf{to} $L - 1$}
            \State $X_a^{(l)}, X_b^{(l)} \gets$ \textbf{S2PHM}$((Z_a^{(l-1)}, \hat{W}_a^{(l)}), (Z_b^{(l-1)}, \hat{W}_b^{(l)}))$
            
            \Statex \Comment{$X_a^{(l)}, X_b^{(l)} \in \mathbb{R}^{n \times d_l}$}
            
            \State $Y_a^{(l)}, Y_b^{(l)} \gets$ \textbf{S2PRL}$\left(X_a^{(l)}, X_b^{(l)}, \rho\right)$
            \Comment{$Y_a^{(l)}, Y_b^{(l)} \in \mathbb{R}^{n \times d_l}$}
            
            \State $Z_a^{(l)}, Z_b^{(l)} = addcol(Y_a^{(l)}, Y_b^{(l)})$
            \Comment{$Z_a^{(l)}, Z_b^{(l)} \in \mathbb{R}^{n \times (d_l + 1)}$}
        \EndFor

        \State $X_a^{(L)}, X_b^{(L)} \gets$ \textbf{S2PHM}$((Z_a^{(L-1)}, \hat{W}_a^{(L)}), (Z_b^{(L-1)}, \hat{W}_b^{(L)}))$
        
        \Statex \Comment{$X_a^{(L)}, X_b^{(L)} \in \mathbb{R}^{n \times d_L}$}

        \State $Y_a^{(L)}, Y_b^{(L)} \gets$ \textbf{S2PSM}$(X_a^{(L)}, X_b^{(L)}, \rho)$
        \Comment{$Y_a^{(L)}, Y_b^{(L)} \in \mathbb{R}^{n \times d_L}$}

        \State \Return $Y_a^{(L)}$, $Y_b^{(L)}$, $X_a^*$, $X_b^*$, $Z_a^*$, $Z_b^*$
    \end{algorithmic}
\end{breakablealgorithm}

\textbf{Correctness}: Let $D = \{d_0, d_1, \cdots, d_L\}$ represent the network structure. During the prediction process of the multi-layer perceptron, for the $l$-th layer $(1 \le l < L)$, we have $X_a^{(l)} + X_b^{(l)} = (Z_a^{(l-1)} + Z_b^{(l-1)}) \times (\hat{W}_a^{(l)} + \hat{W}_b^{(l)}) = Z^{(l-1)} \times \hat{W}^{(l)} = X^{(l)}$, and $Y_a^{(l)} + Y_b^{(l)} = \mathrm{relu}(X_a^{(l)} + X_b^{(l)}) = \mathrm{relu}(X^{(l)}) = Y^{(l)}$. 

For the $L$-th layer, we have $X_a^{(L)} + X_b^{(L)} = (Z_a^{(L-1)} + Z_b^{(L-1)}) \times (\hat{W}_a^{(L)} + \hat{W}_b^{(L)}) = Z^{(L-1)} \times \hat{W}^{(L)} = X^{(L)}$, and $Y_a^{(L)} + Y_b^{(L)} = \mathrm{softmax}(X_a^{(L)} + X_b^{(L)}) = \mathrm{softmax}(X^{(L)}) = Y^{(L)}$. 

These results conform to the centralized model prediction formulas, thus proving the correctness of S2PMLP-PR.

Assuming the network structure has $L$ layers ($L \geq 1$), S2PMLP-PR will only fail to detect an anomaly if all the sub-protocols S2PHM, S2RL, and S2PSM fail to detect the anomaly. Therefore, the probability of S2PMLP-PR failing to detect a computational anomaly is given by $P_f(S2PMLP\text{-}PR) = P_f(S2PHM)^{L} \cdot P_f(S2PRL)^{L-1} \cdot P_f(S2PSM) \leq \left(\frac{1}{4^l}\right)^{3L+5} \leq \left(\frac{1}{4^l}\right)^{8} \approx 4.68 \times 10^{-97} \quad (l=20)$.

We can prove the security of S2PMLP-PR based on the UC framework.

\begin{theorem}\label{theorem:Secure-S2PMLP-PR}
    The S2PMLP-PR protocol is secure under the semi-honest adversarial model.
\end{theorem}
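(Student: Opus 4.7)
The plan is to prove security by sequential composition within the UC framework, mirroring the approach used in Theorems \ref{theorem:Secure-S2PHHP}, \ref{theorem:Secure-S2PSCR}, and \ref{theorem:Secure-S2PSM}. First I would observe that S2PMLP-PR consists exclusively of (i) local, deterministic operations (the $\text{addcol}$ preprocessing at each layer) and (ii) invocations of the previously established secure sub-protocols S2PHM (called $L$ times), S2PRL (called $L-1$ times), and S2PSM (called once). By the correctness analysis already established, after every sub-protocol call the two parties hold additive shares of the correct intermediate quantity, so at no point does either party obtain a reconstructed plaintext of any hidden value.

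Second, I would apply Lemma~\ref{lemma2}: since each sub-protocol is perfectly simulatable under the semi-honest adversary model (by Theorem~\ref{theorem:Secure-S2PRL}, Theorem~\ref{theorem:Secure-S2PSM}, and the analogous result for S2PHM), the composed protocol is also perfectly simulatable. Concretely, for adversary Alice I would construct a simulator $S_1$ that receives $(L,D,X_a,\hat{W}_a,\rho)$ together with the output $Y_a^{(L)}$ and proceeds layer by layer: at each layer $S_1$ invokes the internal simulator of the corresponding sub-protocol (S2PHM, S2PRL, or S2PSM) to generate Alice's view, feeding freshly sampled random matrices as the ``pretend'' shares held by Bob; the local $\text{addcol}$ step is then applied deterministically on the simulated shares. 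A symmetric construction yields $S_2$ for Bob.

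Third, I would argue that the joint view across all $2L$ sub-protocol invocations is computationally indistinguishable from its simulation. Because each layer's share pair $(X_a^{(l)}, Y_a^{(l)})$ is the output of a secure sub-protocol executed on fresh internal randomness together with the preceding shares, the chain of views decomposes cleanly into independently simulatable units; the $\text{addcol}$ operation introduces no new messages or randomness and therefore cannot create additional leakage. Since the final outputs $Y_a^{(L)}, Y_b^{(L)}$ are valid additive shares of $\text{softmax}(X^{(L)})$, the output-consistency requirements $f_1(x,y)\equiv f_1(x,y^*)$ and $f_2(x,y)\equiv f_2(x^*,y)$ follow directly from the corresponding property of S2PSM at the output layer, together with the shared-randomness invariance of the intermediate sub-protocols.

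The main obstacle I anticipate is handling the fact that the same secret-shared model parameters $\hat{W}_a,\hat{W}_b$ are reused across many sub-protocol calls in sequence, so one must verify that the sub-protocol simulators remain sound when their inputs are correlated with prior protocol outputs rather than freshly sampled inputs. This is precisely what the UC composition theorem (invoked via Lemma~\ref{lemma2}) guarantees, so the argument reduces to checking that every sub-protocol we invoke has already been shown secure in the standalone semi-honest model, which is exactly the content of the previously established theorems for S2PHM, S2PRL, and S2PSM.
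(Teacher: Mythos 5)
Your proposal takes essentially the same approach as the paper: for this theorem the paper offers no written proof at all, only the assertion that security follows from the UC framework, which is precisely the composition argument via Lemma~\ref{lemma2} and the standalone semi-honest security of S2PHM, S2PRL, and S2PSM that you spell out layer by layer. One minor caveat: your claim that neither party ever obtains a reconstructed plaintext of a hidden value is slightly too strong, since S2PRL (via S2PDRL) publicly reveals $\mathrm{relu}'(X^{(l)})$ to both parties at every hidden layer — but the paper itself designates that quantity as public output of the sub-protocol, so this does not affect the composition argument.
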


\section{Theoretical Complexity Analysis}\label{Theoretical}
In this section, we present a comprehensive theoretical analysis of the computational and communication complexity of the protocols proposed in this chapter. Each protocol is analyzed with respect to three stages: 	extit{preprocessing}, 	extit{online computation}, and 	extit{verification}. We also provide detailed breakdowns of both the total communication cost and the number of communication rounds.

We specifically analyze the complexity of the following protocols: S2PHP, S2PSCR, S2PHHP, S2PDRL, S2PRL, S2PSM, S2PMLP-TR, and S2PMLP-PR. For simplification, we assume that the input of all basic protocols is a square matrix of size $n \times n$. For MLP protocols, we assume the neural network has $L$ hidden layers (with $L \ge 2$), each consisting of $d$ neurons, and the input contains $n$ samples. Bias terms are omitted for brevity. The complexity analysis is given per training round on the entire dataset. We also denote $\rho$ as the partition parameter and $\ell$ as the bit-length of each data element.

\subsection{Computational Complexity}
\noindent The results are summarized in \autoref{tab:Computational Complexity}, detailing the cost of each protocol across all stages.

\begin{table}[htbp]
  \centering
  \caption{Theoretical Computational Complexity of Each Protocol}
  \vspace{1mm}
    \resizebox{\linewidth}{!}{
        \begin{tabular}{c c c c c c c c c}
        \toprule
        \multirow{2}[3]{*}{\textbf{Protocol}} & & \multicolumn{5}{c}{\textbf{Stage-wise Computational Complexity}} & & \multirow{2}[3]{*}{\textbf{Total}} \\
        \cmidrule{3-7}
         & & \textbf{Preprocessing} & & \textbf{Online} & & \textbf{Verification} & & \\
        \midrule
        \textbf{S2PHP} & & $\mathcal{O}(n^4\rho^2)$ & & $\mathcal{O}(n^4\rho^2)$ & & $\mathcal{O}(n^4)$ & & $\mathcal{O}(n^4\rho^2)$ \\
        \textbf{S2PSCR} & & $\mathcal{O}(n^4\rho^2)$ & & $\mathcal{O}(n^4\rho^2)$ & & $\mathcal{O}(n^4)$ & & $\mathcal{O}(n^4\rho^2)$ \\
        \textbf{S2PHHP} & & $\mathcal{O}(n^4\rho^2)$ & & $\mathcal{O}(n^4\rho^2)$ & & $\mathcal{O}(n^4)$ & & $\mathcal{O}(n^4\rho^2)$ \\
        \textbf{S2PDRL} & & $\mathcal{O}(n^4\rho)$ & & $\mathcal{O}(n^4\rho)$ & & $\mathcal{O}(n^4)$ & & $\mathcal{O}(n^4\rho)$ \\
        \textbf{S2PRL} & & $\mathcal{O}(n^4\rho)$ & & $\mathcal{O}(n^4\rho)$ & & $\mathcal{O}(n^4)$ & & $\mathcal{O}(n^4\rho)$ \\
        \textbf{S2PSM} & & $\mathcal{O}(n^4\rho^2)$ & & $\mathcal{O}(n^4\rho^2)$ & & $\mathcal{O}(n^4)$ & & $\mathcal{O}(n^4\rho^2)$ \\
        \textbf{S2PMLP-TR} & & $\mathcal{O}(Ln^2d^2\rho + n^2d^2\rho^2)$ & & $\mathcal{O}(Ln^2d^2\rho + n^2d^2\rho^2)$ & & $\mathcal{O}(Ln^2d^2)$ & & $\mathcal{O}(Ln^2d^2\rho + n^2d^2\rho^2)$ \\
        \textbf{S2PMLP-PR} & & $\mathcal{O}(Ln^2d^2\rho + n^2d^2\rho^2)$ & & $\mathcal{O}(Ln^2d^2\rho + n^2d^2\rho^2)$ & & $\mathcal{O}(Ln^2d^2)$ & & $\mathcal{O}(Ln^2d^2\rho + n^2d^2\rho^2)$ \\
        \bottomrule
        \end{tabular}}
  \label{tab:Computational Complexity}
\end{table}

\noindent
Take \textbf{S2PHP} as an example. In the preprocessing phase, the $n \times n$ matrix is partitioned into two matrices of dimension $n^2 \times \rho^2$ and $\rho^2 \times n^2$, respectively. The dominant cost arises from the S2PM protocol invoked internally, which performs a matrix multiplication of these sizes. Hence, the preprocessing and online computation both incur $\mathcal{O}(n^4\rho^2)$ operations. The verification step checks the integrity of an $n^2 \times n^2$ matrix using methods in S2PM, resulting in $\mathcal{O}(n^4)$ operations. The complexity of other protocols follows similar logic.

\subsection{Communication Complexity}
\noindent
The communication complexity of each protocol, including round count and total bits transferred, is listed in \autoref{tab:Communication Complexity}.

\begin{table}[htbp]
  \centering
  \caption{Theoretical Communication Complexity of Each Protocol}
  \vspace{1mm}
  \resizebox{\linewidth}{!}{
    \begin{tabular}{c c c}
      \toprule
      \textbf{Protocol} & \textbf{Rounds} & \textbf{Communication Cost [bits]} \\
      \midrule
      \textbf{S2PHP}   & 6        & $(7n^4 + 4n^2\rho^2)\ell$ \\
      \textbf{S2PSCR}   & 19       & $(21n^4 + 12n^2\rho^2 + n^2)\ell$ \\
      \textbf{S2PHHP}  & 12       & $(14n^4 + 8n^2\rho^2)\ell$ \\
      \textbf{S2PDRL}  & 8        & $(7n^4 + 8n^2\rho + 2n^2)\ell$ \\
      \textbf{S2PRL}   & 8        & $(7n^4 + 8n^2\rho + 2n^2)\ell$ \\
      \textbf{S2PSM}   & 37       & $(21n^4 + 12n^2\rho^2 + 21n^2 + 12n\rho^2 + n)\ell$ \\
      \textbf{S2PMLP-TR} & $52L+9$  & $((14L+7)n^2d^2 + 12nd\rho^2 + (16L-16)nd\rho + 21n^2 + 12n\rho^2 + (40L-14)nd + (22L-4)d^2 + n)\ell$ \\
      \textbf{S2PMLP-PR} & $20L+29$ & $((7L+14)n^2d^2 + 12nd\rho^2 + (8L-8)nd\rho + 21n^2 + 12n\rho^2 + (10L-2)nd + 14Ld^2 + n)\ell$ \\
      \bottomrule
    \end{tabular}
  }
  \label{tab:Communication Complexity}
\end{table}

\noindent
As an example, the S2PHP protocol invokes the S2PM primitive once and incurs 6 communication rounds. The total communication cost is determined by the size of input matrices $n^2 \times \rho^2$ and $\rho^2 \times n^2$, yielding $(7n^4 + 4n^2\rho^2)\ell$ bits. Other protocol costs follow from their constituent primitives and message sizes.

\section{PERFORMANCE EVALUATION}\label{Experiments}
\subsection{Setup Configuration}
All protocols in the EVA-S2PMLP framework are implemented in Python as independent modules. Experiments are conducted on a local machine equipped with an Intel$^\circledR$ Core\texttrademark~Ultra 7 processor (22 cores), 32 GB RAM, running Ubuntu 22.04 LTS. To minimize discrepancies caused by variable network conditions, we simulate both LAN and WAN environments during testing. Specifically, LAN is configured with 10.1 Gbps bandwidth and 0.1 ms latency, while WAN is set with 300 Mbps bandwidth and 40 ms latency.
The system involves four simulated computation nodes launched via different ports: two act as data owners, one acts as a semi-honest commodity server (CS) responsible only for generating randomness during the offline phase, and one serves as the client who initiates the computation and retrieves the final result.

\subsection{Parameter Description}
We evaluate the performance of four fundamental protocols (S2PHP, S2PSCR, S2PRL, S2PSM) in terms of computation cost, communication overhead, and precision loss. Since most existing secure logistic regression frameworks use approximate methods and do not require multiplicative-to-additive conversions or reciprocal operations (as in our S2PLoR approach), we only include comparisons for low-level multiplication protocol S2PVEM and activation function protocol S2PVS.
For S2PHP, we compare our results under the semi-honest setting against four representative frameworks: Crypten, SecretFlow, LibOTe, and TenSEAL. For S2PSCR, we consider Crypten, SecretFlow, LibOTe, and Rosetta. For S2PRL, we include Crypten, LibOTe, Rosetta, and MP-SPDZ. For S2PSM, we evaluate against Crypten, LibOTe, and MP-SPDZ.
Since many existing frameworks operate over ring $\mathbb{Z}_{2^k}$ and suffer from floating-point limitations, we generate random inputs using 16-digit significands formatted as $1.a_1a_2\cdots a_{15}\times 10^\delta$ with $\delta \in [-x, x]$, where $x \in \mathbb{Z}$. In runtime evaluations, we set $x = 4$; in precision evaluations, we sweep $x$ from 0 to 8 with a step size of 2, covering five representative precision ranges.

\subsection{Efficiency of Basic Protocols}
We first assess the runtime efficiency of the four basic protocols in the EVA-S2PMLP framework. Average results are summarized in \autoref{tab:Our-Time}. We then compare their communication cost across existing frameworks, as shown in \autoref{fig: Communication Comparison}. Total runtime comparisons under LAN and WAN settings are depicted in \autoref{fig: LAN Time Comparison} and \autoref{fig: WAN Time Comparison}, respectively.
In addition to the cross-framework comparisons, we evaluate the verification overhead of all protocols by measuring both the number of verification rounds and the proportion of total time spent on verification as the input size scales, visualized in \autoref{fig: Verification Rate}. Furthermore, we examine the scalability of S2PHP and S2PSCR with large-scale matrix inputs, with results presented in \autoref{fig: Huge Matrix}.

\begin{table}[htbp]
  \centering
  \caption{Performance Evaluation of Basic Protocols in EVA-S2PMLP}
  \vspace{1mm}
  \resizebox{\linewidth}{!}{
    \begin{tabular}{cccccccccccc}
      \toprule
      \multicolumn{1}{c}{\multirow{2}[2]{*}{\textbf{Protocol}}} & 
      \multicolumn{1}{c}{\multirow{2}[2]{*}{\textbf{Dimension}}} & 
      \multicolumn{2}{c}{\textbf{Communication Overhead}} & 
      \multicolumn{4}{c}{\textbf{Computation Time (S)}} & 
      \multicolumn{2}{c}{\textbf{Total Time (S)}} \\
      \cmidrule(lr){3-4} \cmidrule(lr){5-8} \cmidrule(lr){9-10}
       &  &  \textbf{Volume (KB)} &  \textbf{Rounds} & \textbf{Preprocessing} & \textbf{Online Computation} & \textbf{Verification} & \textbf{Total Computation} & \textbf{LAN} & \textbf{WAN} \\
      \midrule
      \multirow{5}{*}{\textbf{S2PHP}} 
        & 10 & 188.14  & \multirow{5}{*}{6}  & 7.89E-04 & 1.65E-04 & 1.11E-04 & 1.07E-03 & 1.81E-03 & 2.46E-01 \\
        & 20 & 756.17  &  & 2.23E-03 & 5.38E-04 & 2.77E-04 & 3.05E-03 & 4.22E-03 & 2.63E-01 \\
        & 30 & 1724.92 &  & 4.40E-03 & 1.16E-03 & 5.49E-04 & 6.11E-03 & 8.02E-03 & 2.91E-01 \\
        & 40 & 3074.14 &  & 7.46E-03 & 2.10E-03 & 9.40E-04 & 1.05E-02 & 1.34E-02 & 3.31E-01 \\
        & 50 & 4798.83 &  & 1.18E-02 & 3.24E-03 & 1.49E-03 & 1.65E-02 & 2.08E-02 & 3.82E-01 \\
      \midrule
      \multirow{5}{*}{\textbf{S2PSCR}} 
        & 10 & 565.34  & \multirow{5}{*}{19}  & 2.71E-03 & 5.90E-04 & 3.98E-04 & 3.70E-03 & 6.02E-03 & 7.78E-01 \\
        & 20 & 2271.78 &  & 8.41E-03 & 1.86E-03 & 1.02E-03 & 1.13E-02 & 1.49E-02 & 8.30E-01 \\
        & 30 & 5181.94 &  & 1.82E-02 & 4.09E-03 & 2.24E-03 & 2.46E-02 & 3.04E-02 & 9.20E-01 \\
        & 40 & 9235.06 &  & 2.36E-02 & 5.73E-03 & 2.89E-03 & 3.22E-02 & 4.11E-02 & 1.03E+00 \\
        & 50 & 14416.16&  & 3.46E-02 & 8.16E-03 & 4.42E-03 & 4.71E-02 & 5.99E-02 & 1.18E+00 \\
      \midrule
      \multirow{5}{*}{\textbf{S2PRL}} 
        & 10 & 195.45  & \multirow{5}{*}{10}  & 6.61E-04 & 2.36E-04 & 1.35E-04 & 1.03E-03 & 2.18E-03 & 4.06E-01 \\
        & 20 & 782.23  &  & 1.55E-03 & 7.87E-04 & 3.63E-04 & 2.70E-03 & 4.29E-03 & 4.23E-01 \\
        & 30 & 1782.23 &  & 3.08E-03 & 1.69E-03 & 7.58E-04 & 5.52E-03 & 7.87E-03 & 4.52E-01 \\
        & 40 & 3175.20 &  & 5.57E-03 & 2.62E-03 & 1.03E-03 & 9.22E-03 & 1.26E-02 & 4.92E-01 \\
        & 50 & 4956.14 &  & 6.37E-03 & 3.04E-03 & 1.50E-03 & 1.09E-02 & 1.57E-02 & 5.40E-01 \\
      \midrule
      \multirow{5}{*}{\textbf{S2PSM}} 
        & 10 & 1144.72 & \multirow{5}{*}{47}  & 4.86E-03 & 1.52E-03 & 6.92E-04 & 7.07E-03 & 1.26E-02 & 1.92E+00 \\
        & 20 & 4592.75 &  & 1.81E-02 & 4.20E-03 & 1.99E-03 & 2.43E-02 & 3.24E-02 & 2.02E+00 \\
        & 30 & 10471.66&  & 2.70E-02 & 6.04E-03 & 3.43E-03 & 3.65E-02 & 4.91E-02 & 2.19E+00 \\
        & 40 & 18659.94&  & 4.51E-02 & 1.03E-02 & 5.64E-03 & 6.11E-02 & 7.99E-02 & 2.43E+00 \\
        & 50 & 29127.59&  & 6.98E-02 & 1.54E-02 & 8.75E-03 & 9.39E-02 & 1.21E-01 & 2.73E+00 \\

        \midrule
      \multirow{5}{*}{\textbf{S2PG}} 
        & 10 & 212.38 & \multirow{5}{*}{20}  & 5.07E-04 & 3.76E-04 & 1.66E-04 & 1.05E-03 & 3.21E-03 & 8.07E-01 \\
        & 20 & 841.34 &  & 8.78E-04 & 7.84E-04 & 3.51E-04 & 2.01E-03 & 4.65E-03 & 8.24E-01 \\
        & 30 & 1911.66 &  & 1.84E-03 & 1.68E-03 & 9.52E-04 & 4.47E-03 & 7.92E-03 & 8.54E-01 \\
        & 40 & 3403.06 &  & 5.71E-03 & 2.86E-03 & 1.22E-03 & 9.79E-03 & 1.44E-02 & 8.98E-01 \\
        & 50 & 5310.56 &  & 4.47E-03 & 3.95E-03 & 1.84E-03 & 1.03E-02 & 1.63E-02 & 9.49E-01 \\
      \bottomrule
    \end{tabular}
  }
  \label{tab:Our-Time}
\end{table}

As shown in Table~\ref{tab:Our-Time}, despite the incorporation of verification mechanisms, our basic protocols still achieve fast execution under both LAN and WAN settings due to their streamlined computational workflows. Note that some protocols exhibit a slight discrepancy in communication rounds compared to the theoretical analysis in Table~\ref{tab:Communication Complexity}. This is because we have optimized the protocols for precision, which alters the number of communication rounds.

\begin{figure}[htbp]
    \centering
    \subfigure[Communication overhead of S2PHP]{
        \includegraphics[width=0.45\linewidth]{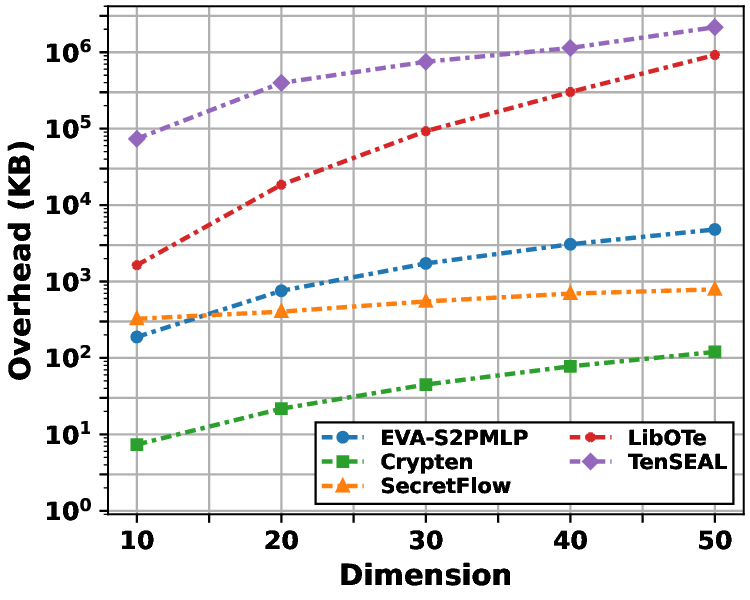}
        \Description{Communication overhead of S2PHP}
        \label{fig: S2PHP Communication}
    }
    \hspace{0.3cm}
    \subfigure[Communication overhead of S2PSCR]{
        \includegraphics[width=0.45\linewidth]{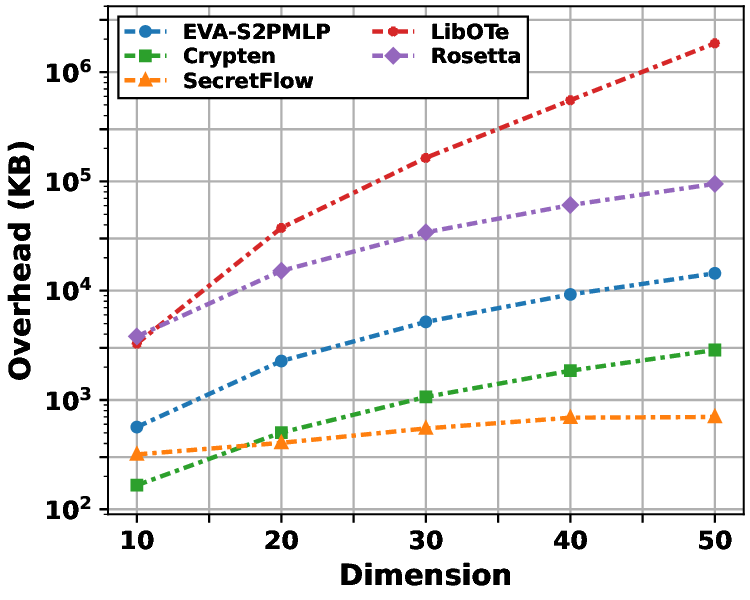}
        \Description{Communication overhead of S2PSCR}
        \label{fig: S2PSCR Communication}
    }
    \subfigure[Communication overhead of S2PRL]{
        \includegraphics[width=0.45\linewidth]{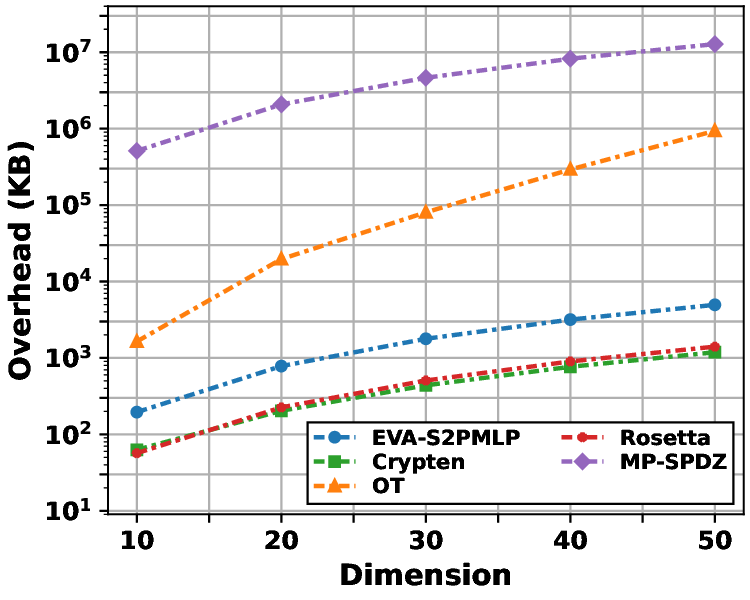}
        \Description{Communication overhead of S2PRL}
        \label{fig: S2PRL Communication}
    }
    \hspace{0.3cm}
    \subfigure[Communication overhead of S2PSM]{
        \includegraphics[width=0.45\linewidth]{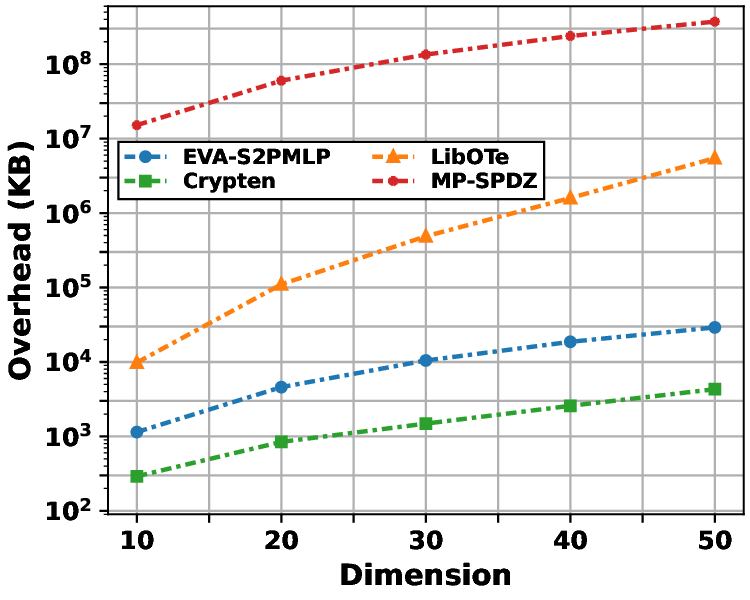}
        \Description{Communication overhead of S2PSM}
        \label{fig: S2PSM Communication}
    }
    \caption{Communication overhead comparison between EVA-S2PMLP and other frameworks}
    \label{fig: Communication Comparison}
    \vspace{-0.2cm}
\end{figure}

As illustrated in Figure~\ref{fig: Communication Comparison}, the communication volume of our framework ranks moderately high across the four basic protocols when compared with mainstream frameworks. This increase is mainly due to the overhead introduced by our integrated verification mechanisms.

\begin{figure}[htbp]
    \centering
    \subfigure[S2PHP execution time (LAN)]{
        \includegraphics[width=0.45\linewidth]{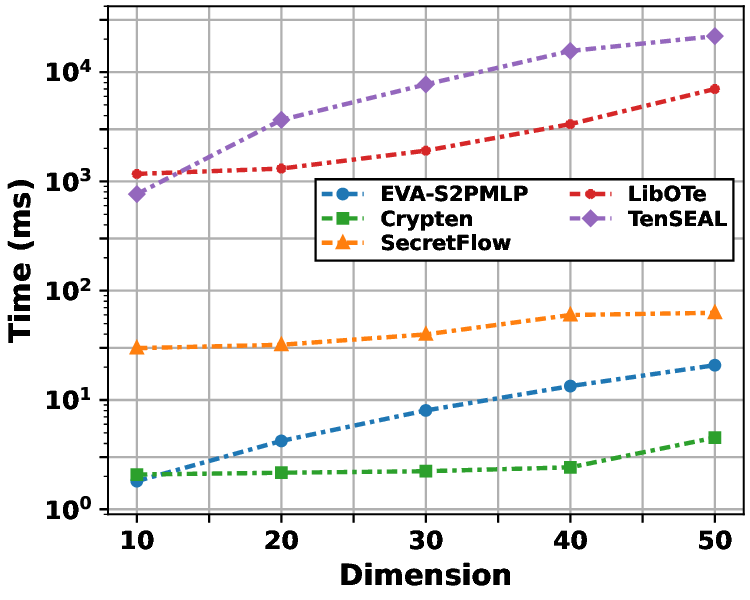}
        \Description{S2PHP execution time (LAN) plot}
        \label{fig: S2PHP Time LAN}
    }
    \hspace{0.3cm}
    \subfigure[S2PSCR execution time (LAN)]{
        \includegraphics[width=0.45\linewidth]{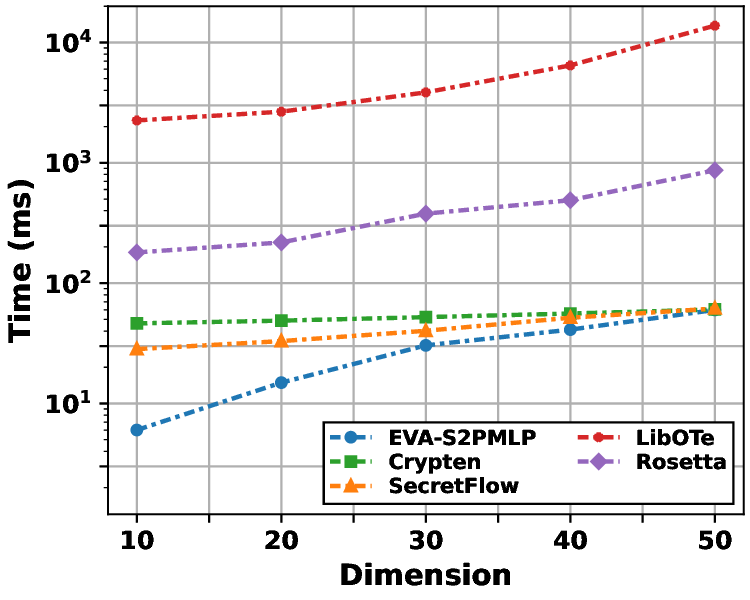}
        \Description{S2PSCR execution time (LAN) plot}
        \label{fig: S2PSCR Time LAN}
    }
    \subfigure[S2PRL execution time (LAN)]{
        \includegraphics[width=0.45\linewidth]{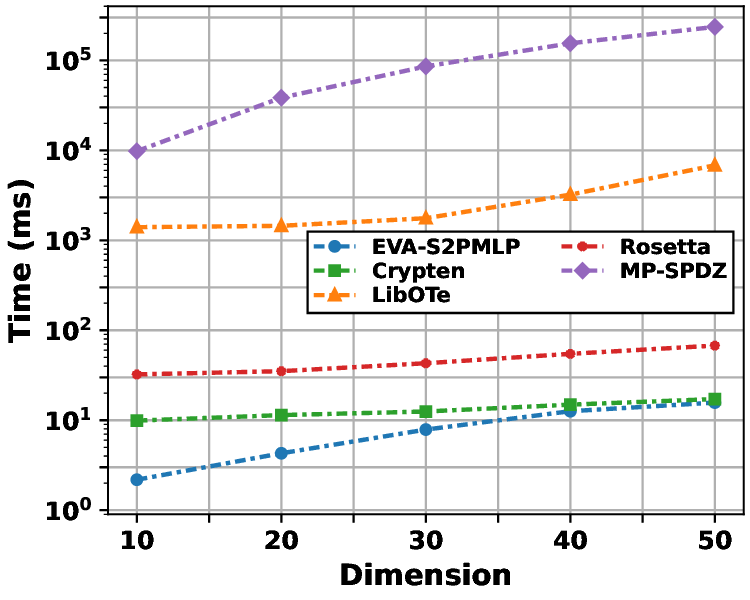}
        \Description{S2PRL execution time (LAN) plot}
        \label{fig: S2PRL Time LAN}
    }
    \hspace{0.3cm}
    \subfigure[S2PSM execution time (LAN)]{
        \includegraphics[width=0.45\linewidth]{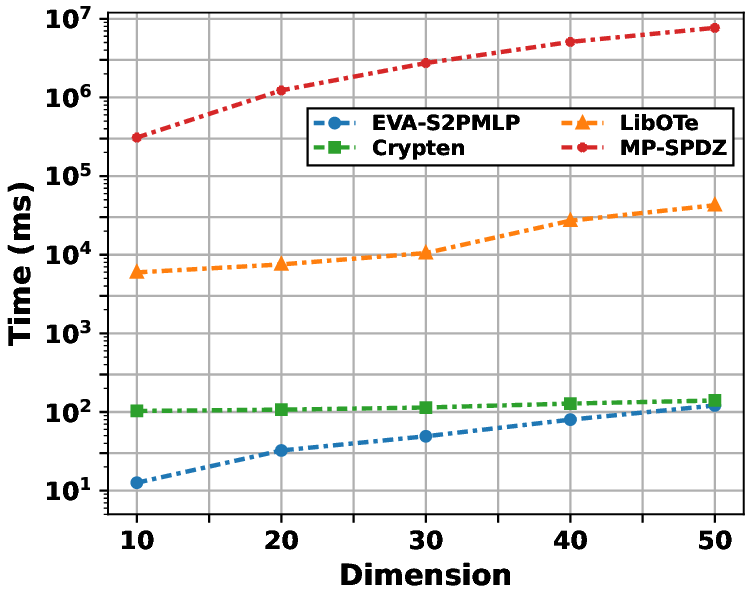}
        \Description{S2PSM execution time (LAN) plot}
        \label{fig: S2PSM Time LAN}
    }
    \caption{Total execution time comparison (LAN) between EVA-S2PMLP and other frameworks}
    \label{fig: LAN Time Comparison}
    \vspace{-0.2cm}
\end{figure}

Figure~\ref{fig: LAN Time Comparison} shows that under LAN settings, our framework achieves the fastest total execution time across all four basic protocols, with the exception of S2PHP, which ranks second. Crypten, SecretFlow, and Rosetta perform relatively well due to their use of secret sharing schemes, which incur minimal overhead during simple computations. LibOTe exhibits poor performance due to its reliance on encryption operations and high communication costs, and TenSEAL performs even worse as it operates entirely on encrypted data. Our MASCOT implementation based on MP-SPDZ, although providing high security through malicious OT, significantly compromises performance. Overall, our framework outperforms existing solutions due to fewer fixed communication rounds and streamlined computation.

\begin{figure}[htbp]
    \centering
    \subfigure[S2PHP execution time (WAN)]{
        \includegraphics[width=0.45\linewidth]{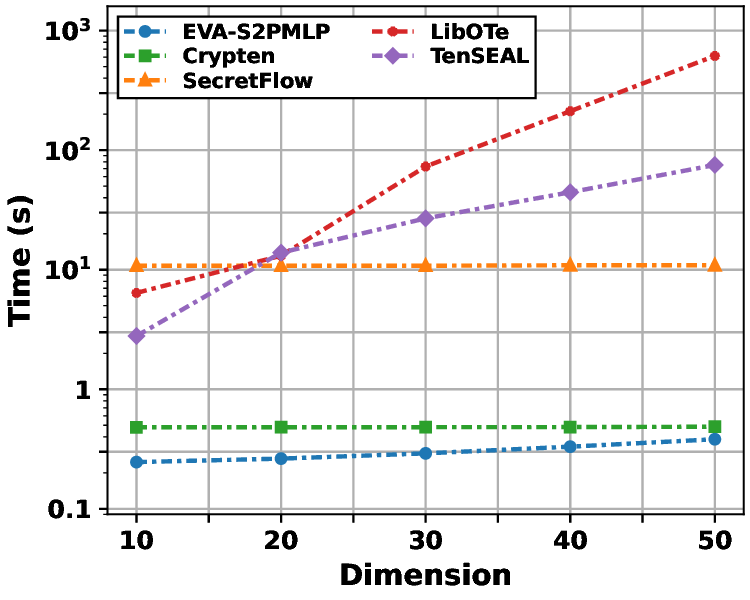}
        \Description{S2PHP execution time (WAN) figure}        
        \label{fig: S2PHP Time WAN}
    }
    \hspace{0.3cm}
    \subfigure[S2PSCR execution time (WAN)]{
        \includegraphics[width=0.45\linewidth]{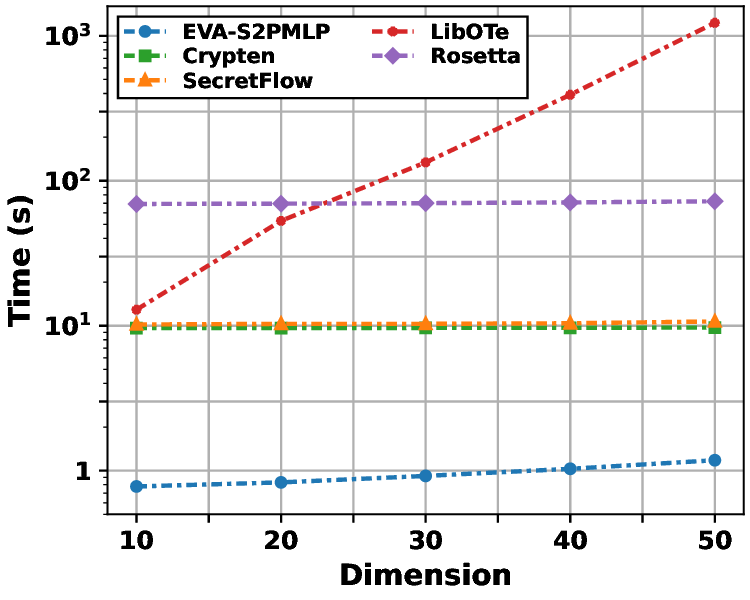}
        \Description{S2PSCR execution time (WAN) figure}        
        \label{fig: S2PSCR Time WAN}
    }
    \subfigure[S2PRL execution time (WAN)]{
        \includegraphics[width=0.45\linewidth]{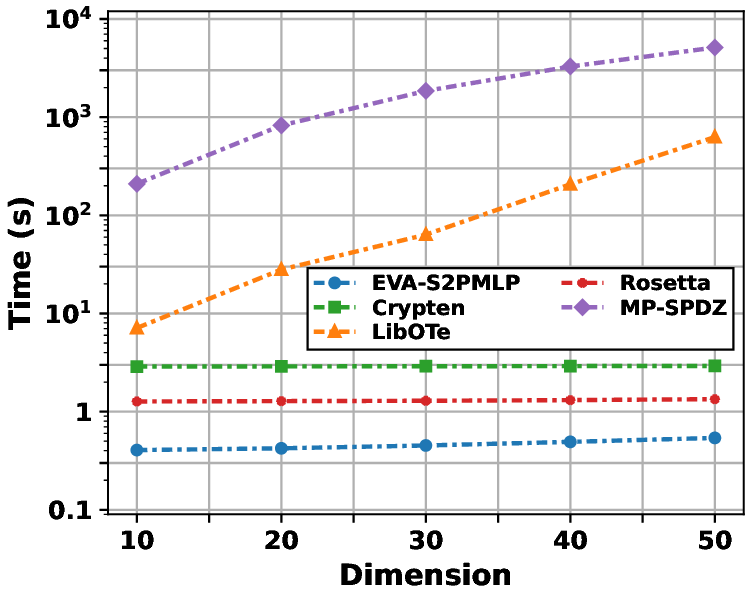}
        \Description{S2PRL execution time (WAN) figure}
        \label{fig: S2PRL Time WAN}
    }
    \hspace{0.3cm}
    \subfigure[S2PSM execution time (WAN)]{
        \includegraphics[width=0.45\linewidth]{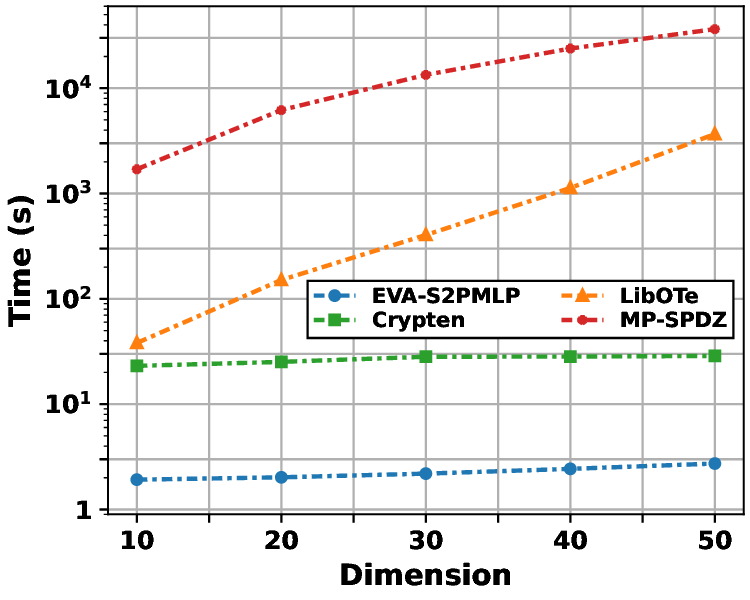}
        \Description{S2PSM execution time (WAN) figure}        
        \label{fig: S2PSM Time WAN}
    }
    \caption{Total execution time comparison (WAN) between EVA-S2PMLP and other frameworks}
    \label{fig: WAN Time Comparison}
    \vspace{-0.2cm}
\end{figure}

As shown in Figure~\ref{fig: WAN Time Comparison}, our framework consistently achieves the lowest execution time across all four protocols under WAN settings. This is largely due to our reduced communication rounds. In contrast, secret-sharing-based frameworks like Crypten require extensive message exchanges for share reconstruction, which leads to high latency in WAN environments.

\begin{figure}[htbp]
    \centering
    \subfigure[Verification time ratio of S2PHP]{
        \includegraphics[width=0.45\linewidth]{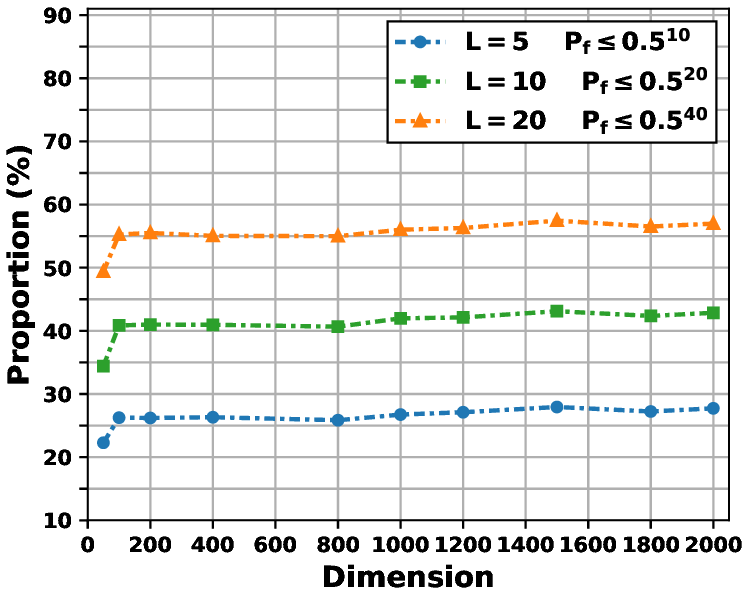}
        \Description{Verification time ratio plot for S2PHP protocol.}
        \label{fig: S2PHP Verification}
    }
    \hspace{0.3cm}
    \subfigure[Verification time ratio of S2PSCR]{
        \includegraphics[width=0.45\linewidth]{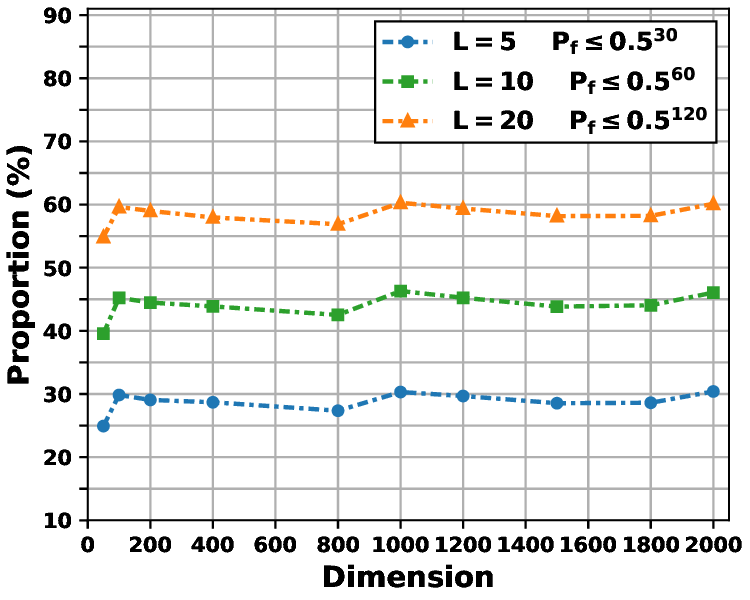}
        \Description{Verification time ratio plot for S2PSCR protocol.}        
        \label{fig: S2PSCR Verification}
    }
    \subfigure[Verification time ratio of S2PRL]{
        \includegraphics[width=0.45\linewidth]{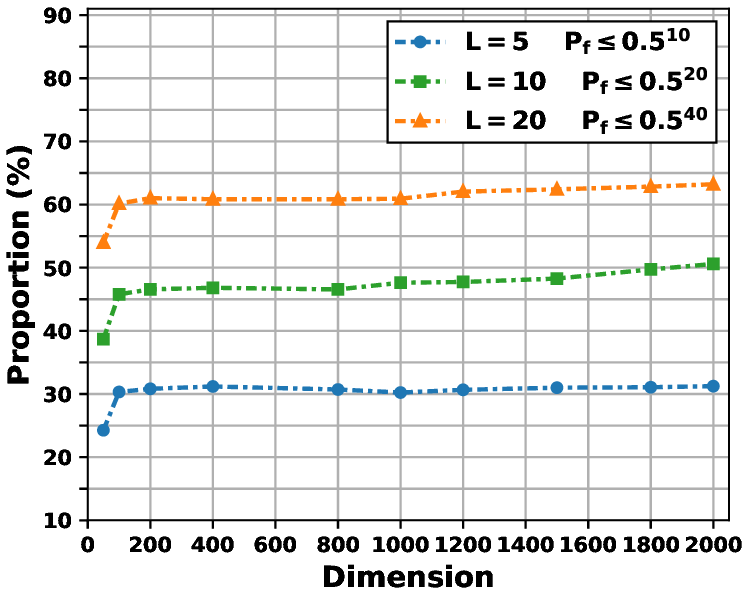}
        \Description{Verification time ratio plot for S2PRL protocol.}        
          \label{fig: S2PRL Verification}
    }
    \hspace{0.3cm}
    \subfigure[Verification time ratio of S2PSM]{
        \includegraphics[width=0.45\linewidth]{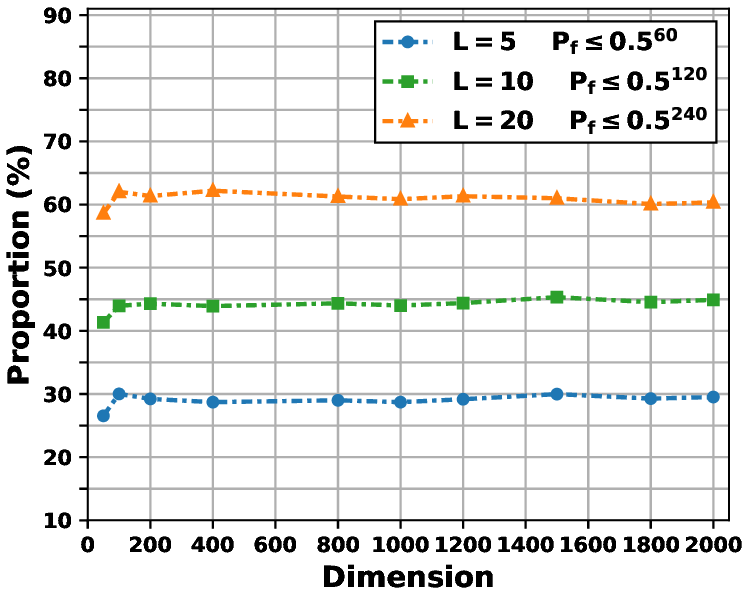}
        \Description{Verification time ratio plot for S2PSM protocol.}        
          \label{fig: S2PSM Verification}
    }
    \caption{Verification time ratio of basic protocols in EVA-S2PMLP}
    \label{fig: Verification Rate}
    \vspace{-0.2cm}
\end{figure}

From Figure~\ref{fig: Verification Rate}, we observe that the verification time ratio remains stable for large input dimensions but fluctuates slightly for small inputs. This is due to our batch optimization strategy, where both verification and total runtime scale linearly with the number of batches. The overall verification time ratio remains approximately equal to that of a single batch. When the verification round is set to $L=10$, the failure probabilities of the four protocols are negligible: $P_f(S2PHP)\approx9.53\times 10^{-7}$, $P_f(S2PSCR)\approx 8.67\times 10^{-19}$, $P_f(S2PRL)\approx 9.53\times 10^{-7}$, and $P_f(S2PSM)\approx 7.52\times 10^{-37}$. Thus, a verification round of $L=10$ is practically sufficient, and the verification overhead remains around $40\%$.

\begin{figure}[htbp]
    \centering
    \subfigure[Execution time for large matrices (LAN)]{
        \includegraphics[width=0.45\linewidth]{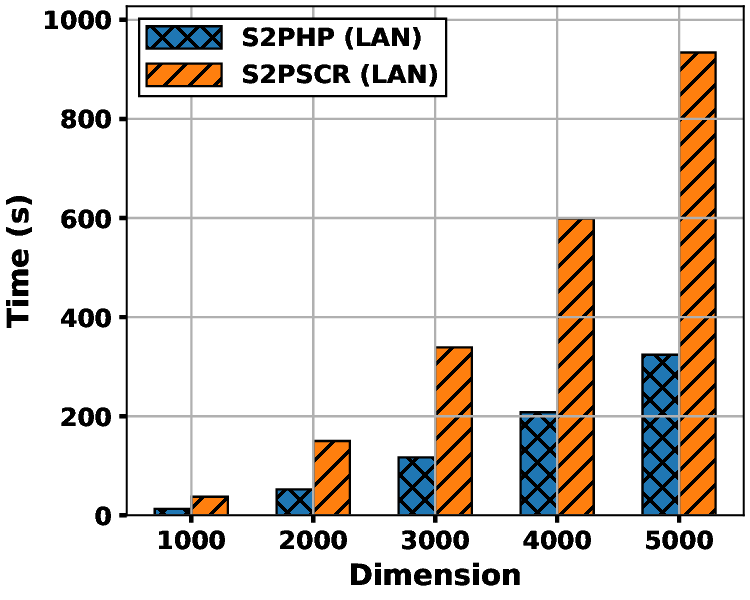}
        \label{fig: Huge_Matrix_LAN}
    }
    \hspace{0.3cm}
    \subfigure[Execution time for large matrices (WAN)]{
        \includegraphics[width=0.45\linewidth]{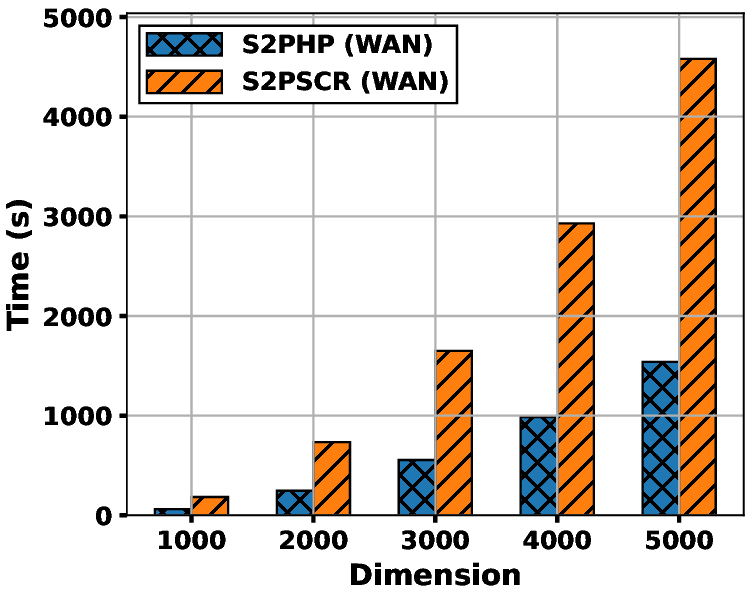}
        \label{fig: Huge_Matrix_WAN}
    }
    \caption{Performance of S2PHP and S2PSCR with large input matrices}
    \label{fig: Huge Matrix}
    \vspace{-0.2cm}
    \Description{Performance of S2PHP and S2PSCR with large input matrices.}
\end{figure}

Figure~\ref{fig: Huge Matrix} demonstrates that both S2PHP and S2PSCR maintain fast execution even for extremely large matrices. Under LAN settings with $5000 \times 5000$ input size, the total runtime for S2PHP is under 6 minutes, and under 16 minutes for S2PSCR. Moreover, their total execution time scales approximately linearly with input size, confirming the scalability and practicality of both protocols in real-world large-scale scenarios.

\subsection{Precision Evaluation of Basic Protocols}

In the precision evaluation experiments, we used $50\times50$ matrices as input and tested the maximum relative error of three core protocols in our framework (S2PHP, S2PSCR, and S2PSM) under five precision ranges $\delta$. We also compared the maximum relative errors of these protocols with those of other frameworks, as shown in Figure~\ref{fig: MRE Comparison}.

\begin{figure}[htbp]
    \centering
    \subfigure[Maximum relative error of the three protocols]
    {\includegraphics[width=0.45\linewidth]{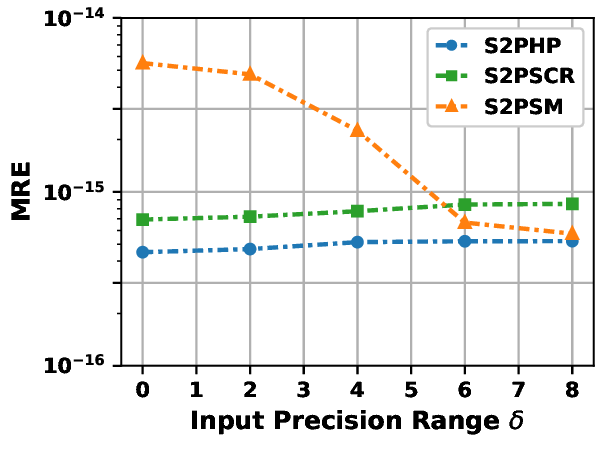}
    \label{fig: MRE of our Framework}}
    \hspace{0.3cm}
    \subfigure[Comparison of maximum relative error for S2PHP]{\includegraphics[width=0.45\linewidth]{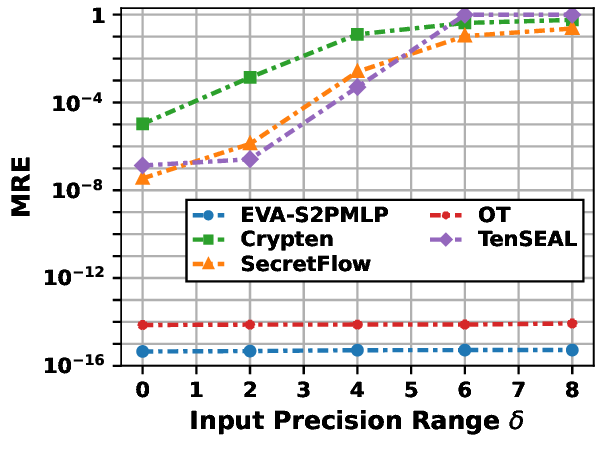}
    \label{fig: MRE Comparison of S2PHP}}

    \subfigure[Comparison of maximum relative error for S2PSCR]{\includegraphics[width=0.45\linewidth]{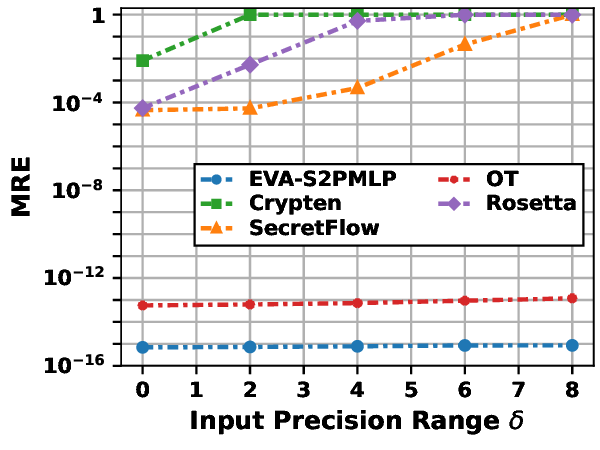}
    \label{fig: MRE Comparison of S2PSCR}}
    \hspace{0.3cm}
    \subfigure[Comparison of maximum relative error for S2PSM]{\includegraphics[width=0.45\linewidth]{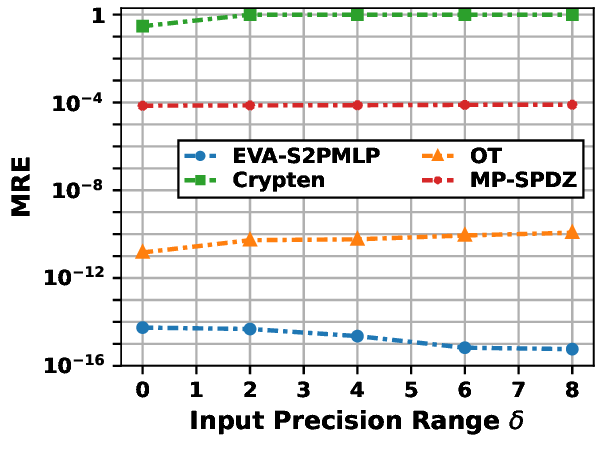}
    \label{fig: MRE Comparison of S2PSM}}

    \subfigure[Maximum relative error of S2PG]{\includegraphics[width=0.45\linewidth]{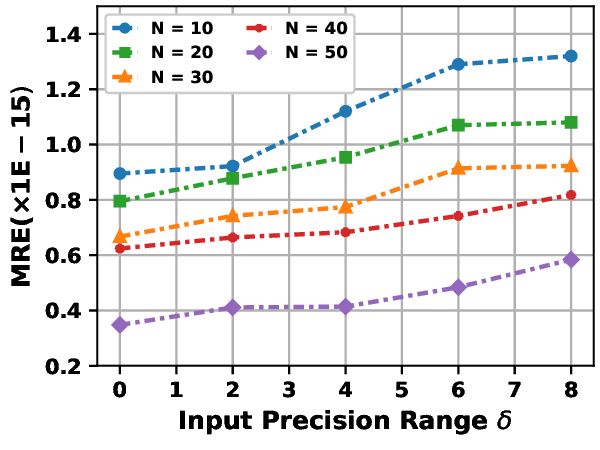}
    \label{fig: MRE of S2PG}}
    \caption{Precision comparison of EVA-S2PMLP basic protocols with other frameworks}
    \Description{A plot comparing the precision of EVA-S2PMLP basic protocols with other frameworks.}
    \label{fig: MRE Comparison}
    \vspace{-0.2cm}
\end{figure}

In our framework, the S2PHP and S2PSCR protocols can be viewed as parallel numerical operations over matrix elements. Hence, the precision can be analyzed from the perspective of numerical computation. As shown in Figure~\ref{fig: MRE Comparison}(a), the maximum relative error of S2PHP and S2PR increases slightly with the increase in $\delta$. This is because larger $\delta$ values result in greater disparities among input values, which in turn introduce more rounding errors during computation. However, this increase has an upper bound. Since S2PSCR is built upon S2PHP, and the core computation in S2PHP involves vector dot products, the upper bound of the relative error is given by the formula $MRE \le 1.25nu \frac{|\boldsymbol{x}|^T \cdot |\boldsymbol{y}|}{|{\boldsymbol{x}}^T\cdot \boldsymbol{y}|}$, as established in~\cite{Zarowski_2004}. Here, $n$ is the vector length (with split number $\rho=2$, $n=\rho^2=4$), $u$ is the unit roundoff (for 64-bit floating-point numbers, $u = 2^{-52}$), and the ratio $\frac{|\boldsymbol{x}|^T \cdot |\boldsymbol{y}|}{|{\boldsymbol{x}}^T\cdot \boldsymbol{y}|}$ equals 1 when random splits maintain the same sign. Hence, we derive the upper bound as $MRE \le 1.25 \times 4 \times 2^{-52} \approx 1.11 \times 10^{-15}$.

Interestingly, the maximum relative error of S2PSM decreases as $\delta$ increases. This is because the Softmax activation function is a normalized exponential function. When the data range varies widely, the exponentiation causes large values to dominate with probabilities close to 1, while small values shrink towards 0. As a result, larger $\delta$ values produce output matrices dominated by 0s and 1s, leading to a decrease in relative error.
As shown in Figures~\ref{fig: MRE Comparison}(b)$\sim$(d), our framework consistently outperforms others in terms of accuracy across S2PHP, S2PSCR, and S2PSM. LibOTe ranks second, since it mainly secures data during transmission and only incurs minor precision loss during data splitting. In the case of S2PHP, TenSEAL operates over the ring $\mathbb{Z}_{2^k}$, and achieves stable accuracy under small $\delta$ ranges. However, once the values exceed the representation range, the error grows rapidly. Garbled circuits introduce large errors during conversions between arithmetic and Boolean circuits. Similarly, secret-sharing-based schemes suffer substantial precision loss due to their use of fixed-point representations. Consequently, frameworks such as Crypten, SecretFlow, and Rosetta exhibit poor performance.
Specifically for S2PSM, MP-SPDZ achieves relatively stable accuracy by transforming exponential functions into partial integer and floating-point operations via bit decomposition, and then approximating floating-point computations using polynomials. In contrast, other frameworks generally use polynomial or piecewise approximations for activation functions, leading to significant loss in accuracy. In conclusion, the precision comparison experiments demonstrate that our framework offers a substantial accuracy advantage over existing mainstream frameworks.

\subsection{Performance Evaluation of Secure Two-Party MLP}

To evaluate the performance of the EVA-S2PMLP framework, we conducted training and inference tasks using vertically partitioned secure two-party multilayer perceptron (MLP) models on different datasets and compared them with two state-of-the-art frameworks (Crypten and FATE) in terms of time and various evaluation metrics. Additionally, we implemented a plaintext MLP using Python (PlainMLP) as a performance baseline, and used it to standardize the parameter configuration for all models.

\textbf{Datasets and Experimental Setup:}  
We used two small datasets, \textbf{Iris} (150 samples: 120 for training, 30 for testing, with 4 features per sample) and \textbf{Wine} (178 samples: 142 for training, 36 for testing, with 13 features per sample), as well as two large datasets, \textbf{MNIST} (70,000 samples: 60,000 for training, 10,000 for testing, with 784 features per sample) and \textbf{CIFAR} (60,000 samples: 50,000 for training, 10,000 for testing, with 3,072 features per sample). The two small datasets are for 3-class classification, and the two large datasets are for 10-class classification. In secure two-party MLP training and inference, each dataset’s features are equally divided between Alice and Bob as private input, while the labels are considered public. We adopted a three-layer MLP model (one hidden layer and one output layer) for all datasets. For the small datasets, the hidden layer size and batch size are both 16, and the learning rate is 0.1. For the large datasets, both values are set to 128, with a learning rate of 0.01. All models are trained for 5 epochs.

\textbf{Evaluation Metric:}  
In multi-class classification using MLPs, suppose the number of classes is $d$, then the label of each sample is represented as a one-hot vector of length $d$, where the index of the value '1' indicates the correct class. The MLP outputs a $d$-dimensional float vector for each test sample, representing the probability of belonging to each class. The class with the maximum probability is taken as the predicted label. Model performance is evaluated using accuracy, defined as the proportion of correctly predicted labels in the test set.

We compared our framework with FATE and Crypten across all four datasets, using the plaintext model PlainMLP as a reference. The time comparison results are shown in Table~\ref{tab:S2PMLP-Time}, and accuracy results are shown in Table~\ref{tab:S2PMLP-Accuracy}. Moreover, we analyzed the stage-wise time breakdown of training and inference in our framework on these datasets, as shown in Figure~\ref{fig: EVA-S2PMLP Stage Time}.

\begin{table}[htbp]
  \centering
  \caption{Efficiency Comparison of Secure Two-Party MLP Models}
  \vspace{1mm}
  \resizebox{\linewidth}{!}{
    \begin{tabular}{cccccccccccccccc}
    \toprule  
    \multicolumn{1}{c}{\multirow{2}[2]{*}{\textbf{Dataset}}} & 
    \multicolumn{1}{c}{\multirow{2}[2]{*}{\textbf{Framework}}} & 
    \multicolumn{2}{c}{\textbf{Training Overhead}} & & 
    \multicolumn{2}{c}{\textbf{Training Time (s)}} & & 
    \multicolumn{2}{c}{\textbf{Inference Overhead}} & & 
    \multicolumn{2}{c}{\textbf{Inference Time (s)}} \\
    \cmidrule{3-4} \cmidrule{6-7} \cmidrule{9-10} \cmidrule{12-13}
    &  & \textbf{Comm. (MB)} & \textbf{Rounds} & & \textbf{LAN} & 
    \textbf{WAN} & & \textbf{Comm. (MB)} & \textbf{Rounds} & & \textbf{LAN} & \textbf{WAN} \\
    \midrule
    \multirow{3.5}[2]{*}{\textbf{Iris}}
    & \textbf{Crypten} & 34.32 & 53336 & & 7.57  & 2136.57  & & 0.78 & 212 & & 0.09  & 8.57  \\
    & \textbf{Fate} & 7.35 & 6280 & & 99.86  & 350.62 & & 0.17 & 685 & & 23.24 & 50.58 \\
    & \textbf{EVA-S2PMLP} & 156.04 & 9960 & & 1.68 & 403.13 & & 3.23 & 117 & & 0.02 & 4.78 \\
    & \textbf{PlainMLP} & - & - & & \textbf{2.95E-03} & - & & - & - & & \textbf{3.65E-05} & - \\
    \midrule
    \multirow{3.5}[2]{*}{\textbf{Wine}}
    & \textbf{Crypten} & 42.79  & 59916  & & 9.60 & 2401.36  & & 0.97  & 279  & & 0.1 & 11.26  \\
    & \textbf{Fate} & 11.20  & 9429  & & 115.99  & 492.50  & & 0.25  & 772  & & 27.30  & 58.11  \\
    & \textbf{EVA-S2PMLP} & 223.55  & 11205  & & 2.04 & 454.90 & & 4.53  & 117 & & 0.03 & 4.81 \\
    & \textbf{PlainMLP} & - & - & & \textbf{3.23E-03} & - & & - & - & & \textbf{4.12E-05} & - \\
    \midrule
    \multirow{3.5}[2]{*}{\textbf{Mnist}}
    & \textbf{Crypten} & 156982.25  & 5034723  & & 7350.09  & 212300.31  & & 3552.21  & 921 & & 71.01 & 699.73  \\
    & \textbf{Fate} & 5575.10 & 4498933  & & 158569.63  & 338221.41  & & 126.15 & 1372 & & 544.33 & 602.34 \\
    & \textbf{EVA-S2PMLP} & 2171802.23  & 583905  & & 6439.94 & 85972.55 & & 22664.37  & 117 & & 65.49  & 657.01 \\
    & \textbf{PlainMLP} & - & - & & \textbf{20.75} & - & & - & - & & \textbf{0.1} & - \\
    \midrule
    \multirow{3.5}[2]{*}{\textbf{Cifar}}
    & \textbf{Crypten} & 258809.20 
 & 7907699  & & 23120.84  & 345339.41  & & 5856.36  & 1082  & & 3039.30 & 5234.11  \\
    & \textbf{Fate} & 10798.50 & 7570978 & & 178975.13 & 481336.76 & & 244.35 & 1694 & & 815.21 & 889.13 \\
    & \textbf{EVA-S2PMLP} & 7738807.30  & 486795  & & 19595.81 
 & 239401.05 & & 70402.87  & 117 & & 2975.50  & 4803.12 \\
    & \textbf{PlainMLP} & - & - & & \textbf{117.04} & - & & - & - & & \textbf{1.09} & - \\
    \bottomrule
    \end{tabular}}%
  \label{tab:S2PMLP-Time}%
\end{table}%

From Table~\ref{tab:S2PMLP-Time} and Table~\ref{tab:S2PMLP-Accuracy}, we observe that our framework delivers the best overall performance in terms of both time efficiency and model accuracy. FATE, which is based on federated learning, has relatively low communication volume, but incurs large computational overhead due to the use of homomorphic encryption. Crypten, based on secret sharing, leads to a very large number of communication rounds, which impacts overall runtime. Although our framework has a higher communication volume, the number of communication rounds is much smaller and the computation is lightweight, resulting in low time cost. Moreover, the inability of FATE and Crypten to accurately compute non-linear operations leads to inferior accuracy compared to our framework. Overall, EVA-S2PMLP outperforms existing frameworks in secure MLP tasks.

\begin{table}[htbp]
  \centering
  \caption{Accuracy Comparison of Secure Two-Party MLP Models}
  \vspace{1mm}
  \resizebox{!}{!}{
    \begin{tabular}{cccccccccccc}
    \toprule  
    \multicolumn{1}{c}{\multirow{2}[2]{*}{\textbf{Framework}}} & & 
    \multicolumn{7}{c}{\textbf{Accuracy}} \\
    \cmidrule{3-9} &  & \textbf{Iris} & & \textbf{Wine} & & \textbf{MNIST} & & 
    \textbf{CIFAR}  \\
    \midrule
    \textbf{Crypten} & & 0.9333 & & 0.9722 & & 0.9325 & & 0.4627  \\
    \textbf{Fate} & & 0.8000 & & 0.9444 & & 0.9332 & & 0.3729  \\
    \textbf{EVA-S2PMLP} & & \textbf{1.0000} & & \textbf{1.0000} & & \textbf{0.9582} & & \textbf{0.4812}  \\
    \textbf{PlainMLP} & & \textbf{1.0000} & & \textbf{1.0000} & & \textbf{0.9638} & & \textbf{0.4951}  \\
    \bottomrule
    \end{tabular}}%
  \label{tab:S2PMLP-Accuracy}%
\end{table}%

As shown in Figure~\ref{fig: EVA-S2PMLP Stage Time}, under a LAN environment, the main components of model training and prediction time are offline computation, online computation, and communication. For large datasets (MNIST and CIFAR), the computation time exceeds communication time due to the high bandwidth and low latency of LANs. Notably, the verification time accounts for a very small proportion in MLP applications, demonstrating the lightweight nature of our verification module.

\begin{figure}[htbp]
    \centering

    \subfigure[S2PMLP-TR Time Breakdown]{\includegraphics[width=0.45\linewidth]{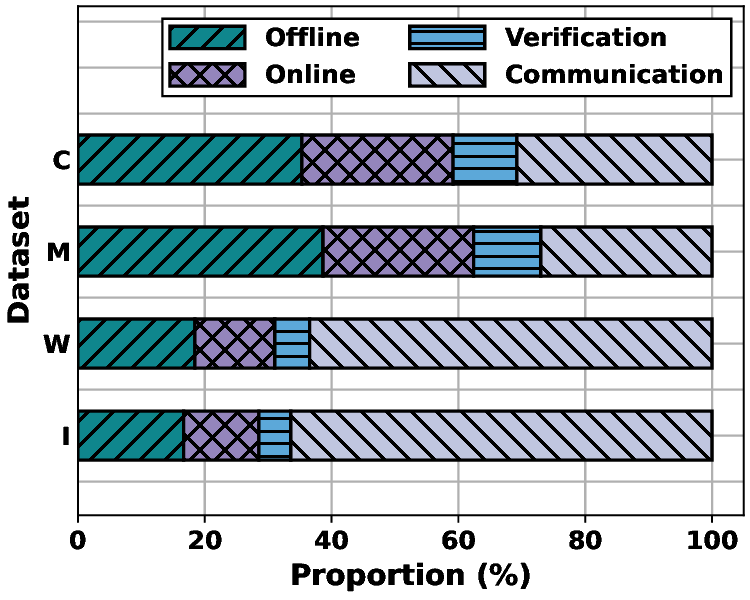}
    \Description{Bar chart showing the time breakdown for S2PMLP-TR in different stages.}
    \label{fig: S2PMLP-TR Stage Time}}
    \hspace{0.3cm}
    \subfigure[S2PMLP-PR Time Breakdown]{\includegraphics[width=0.45\linewidth]{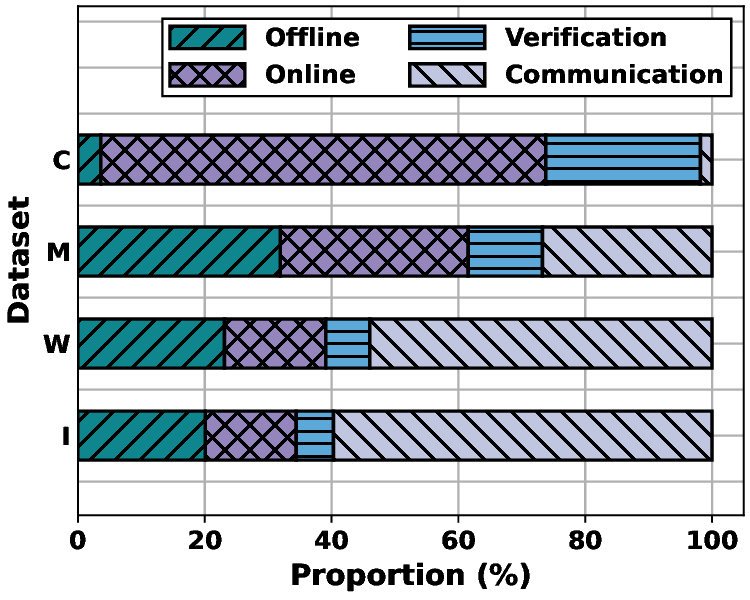}
    \Description{Bar chart showing the time breakdown for S2PMLP-PR in different stages.}
    \label{fig: S2PMLP-PR Stage Time}}
    \caption{Time Breakdown of S2PMLP-TR and S2PMLP-PR in Different Stages}
    \label{fig: EVA-S2PMLP Stage Time}
    \vspace{-0.2cm}
\end{figure}

\section{Conclusion and future work on EVA-S2PMLP}\label{Conclusion-S2PMLP}

\noindent In this chapter, we proposed \texttt{EVA-S2PMLP}, a privacy-preserving framework for secure two-party multilayer perceptron (MLP) training and inference, which achieves superior overall performance across multiple datasets compared to state-of-the-art privacy-preserving frameworks such as FATE, CrypTen, and SecretFlow. We introduced several atomic protocols—S2PHP, S2PDRL, S2PRL, and S2PSM—for secure matrix operations and nonlinear activation functions, and conducted a thorough experimental evaluation of their communication cost, computational efficiency, and numerical precision.
The S2PRL and S2PSM protocols in \texttt{EVA-S2PMLP} adopt exact implementations of nonlinear functions, such as ReLU and Softmax, providing significant precision advantages over approximation-based counterparts used in most mainstream frameworks. Our S2PHP (matrix multiplication) and S2PDRL (dropout-based regularization) protocols introduce a configurable splitting parameter $\rho$ to balance computation and communication cost; in this work, we set $\rho=2$ to maximize efficiency while maintaining precision. Furthermore, we extended the underlying vector verification mechanism to support matrix-level verification, ensuring correctness in the presence of floating-point computations. However, this extension introduces additional computational and communication overhead, highlighting the need for more lightweight verification techniques in future work.
Overall, \texttt{EVA-S2PMLP} achieves near-baseline accuracy in privacy-preserving MLP tasks while significantly reducing communication rounds and improving runtime efficiency during both training and inference. Experimental results on datasets ranging from small-scale (Iris, Wine) to large-scale (MNIST, CIFAR) confirm the robustness, scalability, and practical effectiveness of our approach. These results suggest that \texttt{EVA-S2PMLP} holds strong potential for real-world privacy-preserving machine learning applications in distributed environments such as federated healthcare and financial analytics.

\noindent In future work, we plan to: (1) design more efficient matrix-level verification protocols to reduce overhead in high-dimensional MLPs; (2) extend the framework to support active adversaries and malicious threat models, further enhancing its security guarantees; and (3) evaluate its deployment under realistic WAN settings involving multiple independent institutions. Additionally, generalizing the current 2-party protocol design to arbitrary $n$-party settings remains an important direction toward achieving broader applicability and system scalability.

\section{CRediT authorship contribution statement}
\noindent Shizhao Peng: Conceptualization, Methodology, Software, Validation, Formal analysis, Visualization, Writing-original draft, Writing-review \& editing. 
Shoumo Li: Software, Validation.
Tianle Tao: Software, Writing-review \& editing. 
\section{Acknowledgments}
This work was supported by the National Science and Technology Major Project of the Ministry of Science and Technology of China Grant No.2022XAGG0148.

\bibliographystyle{unsrt}

\bibliography{reference}

\begin{thebibliography}{10}

\bibitem{mohassel2017secureml}
Payman Mohassel and Yupeng Zhang.
\newblock Secureml: A system for scalable privacy-preserving machine learning.
\newblock In {\em 2017 IEEE symposium on security and privacy (SP)}, pages 19--38. IEEE, 2017.

\bibitem{mohassel2018aby3}
Payman Mohassel and Peter Rindal.
\newblock Aby3: A mixed protocol framework for machine learning.
\newblock In {\em Proceedings of the 2018 ACM SIGSAC conference on computer and communications security}, pages 35--52, 2018.

\bibitem{wagh2019securenn}
Sameer Wagh, Divya Gupta, and Nishanth Chandran.
\newblock Securenn: 3-party secure computation for neural network training.
\newblock {\em Proceedings on Privacy Enhancing Technologies}, 2019.

\bibitem{wagh2020falcon}
Sameer Wagh, Shruti Tople, Fabrice Benhamouda, Eyal Kushilevitz, Prateek Mittal, and Tal Rabin.
\newblock Falcon: Honest-majority maliciously secure framework for private deep learning.
\newblock {\em arXiv preprint arXiv:2004.02229}, 2020.

\bibitem{Patra_Suresh_2020}
Arpita Patra and Ajith Suresh.
\newblock Blaze: Blazing fast privacy-preserving machine learning.
\newblock In {\em Proceedings 2020 Network and Distributed System Security Symposium}, San Diego, CA, 2020. Internet Society.

\bibitem{dahl2018private}
Morten Dahl, Jason Mancuso, Yann Dupis, Ben Decoste, Morgan Giraud, Ian Livingstone, Justin Patriquin, and Gavin Uhma.
\newblock Private machine learning in tensorflow using secure computation.
\newblock {\em arXiv preprint arXiv:1810.08130}, 2018.

\bibitem{Chandran_Gupta_Rastogi_Sharma_Tripathi_2019}
Nishanth Chandran, Divya Gupta, Aseem Rastogi, Rahul Sharma, and Shardul Tripathi.
\newblock Ezpc: Programmable and efficient secure two-party computation for machine learning.
\newblock In {\em 2019 IEEE european symposium on security and privacy (EuroS\&P)}, page 496–511. IEEE, 2019.
\newblock Citation Key: chandran2019ezpc.

\bibitem{knott2021crypten}
Brian Knott, Shobha Venkataraman, Awni Hannun, Shubho Sengupta, Mark Ibrahim, and Laurens van~der Maaten.
\newblock Crypten: Secure multi-party computation meets machine learning.
\newblock {\em Advances in Neural Information Processing Systems}, 34:4961--4973, 2021.

\bibitem{ma2023secretflow}
Junming Ma, Yancheng Zheng, Jun Feng, Derun Zhao, Haoqi Wu, Wenjing Fang, Jin Tan, Chaofan Yu, Benyu Zhang, and Lei Wang.
\newblock $\{$SecretFlow-SPU$\}$: A performant and $\{$User-Friendly$\}$ framework for $\{$Privacy-Preserving$\}$ machine learning.
\newblock In {\em 2023 USENIX Annual Technical Conference (USENIX ATC 23)}, pages 17--33, 2023.

\bibitem{Byali_Chaudhari_Patra_Suresh_2020}
Megha Byali, Harsh Chaudhari, Arpita Patra, and Ajith Suresh.
\newblock Flash: Fast and robust framework for privacy-preserving machine learning.
\newblock {\em Proceedings on Privacy Enhancing Technologies}, 2020.
\newblock Citation Key: byali2020flash.

\bibitem{Rachuri_Suresh_Chaudhari_2020}
Sai~Rahul Rachuri, Ajith Suresh, and Harsh Chaudhari.
\newblock Trident: Efficient 4pc framework for privacy preserving machine learning.
\newblock In {\em Network and distributed system security symposium}, page 1–18. Internet Society, 2020.
\newblock Citation Key: rachuri2020trident.

\bibitem{Koti_Pancholi_Patra_Suresh_2021}
Nishat Koti, Mahak Pancholi, Arpita Patra, and Ajith Suresh.
\newblock {SWIFT}: Super-fast and robust {Privacy-Preserving} machine learning.
\newblock In {\em 30th USENIX security symposium (USENIX security 21)}, page 2651–2668, 2021.
\newblock Citation Key: koti2021swift.

\bibitem{keller2020mp}
Marcel Keller.
\newblock Mp-spdz: A versatile framework for multi-party computation.
\newblock In {\em Proceedings of the 2020 ACM SIGSAC conference on computer and communications security}, pages 1575--1590, 2020.

\bibitem{Liu_Fan_Chen_Xu_Yang_2021}
Yang Liu, Tao Fan, Tianjian Chen, Qian Xu, and Qiang Yang.
\newblock Fate: An industrial grade platform for collaborative learning with data protection.
\newblock {\em Journal of Machine Learning Research}, 22(226):1–6, 2021.

\bibitem{Ziller_2021}
Alexander Ziller, Andrew Trask, Antonio Lopardo, Benjamin Szymkow, Bobby Wagner, Emma Bluemke, Jean-Mickael Nounahon, Jonathan Passerat-Palmbach, Kritika Prakash, Nick Rose, Théo Ryffel, Zarreen~Naowal Reza, and Georgios Kaissis.
\newblock {\em PySyft: A Library for Easy Federated Learning}, page 111–139.
\newblock Springer International Publishing, Cham, 2021.

\bibitem{benaissa2021tenseal}
Ayoub Benaissa, Bilal Retiat, Bogdan Cebere, and Alaa~Eddine Belfedhal.
\newblock Tenseal: A library for encrypted tensor operations using homomorphic encryption.
\newblock {\em arXiv preprint arXiv:2104.03152}, 2021.

\bibitem{liu2024pencil}
Xuanqi Liu, Zhuotao Liu, Qi~Li, Ke~Xu, and Mingwei Xu.
\newblock Pencil: Private and extensible collaborative learning without the non-colluding assumption.
\newblock {\em arXiv preprint arXiv:2403.11166}, 2024.

\bibitem{du2001privacy}
Wenliang Du and Mikhail~J Atallah.
\newblock Privacy-preserving cooperative statistical analysis.
\newblock In {\em Seventeenth Annual Computer Security Applications Conference}, pages 102--110. IEEE, 2001.

\bibitem{bogdanov2008sharemind}
Dan Bogdanov, Sven Laur, and Jan Willemson.
\newblock Sharemind: A framework for fast privacy-preserving computations.
\newblock In {\em Computer Security-ESORICS 2008: 13th European Symposium on Research in Computer Security, M{\'a}laga, Spain, October 6-8, 2008. Proceedings 13}, pages 192--206. Springer, 2008.

\bibitem{braun2022motion}
Lennart Braun, Daniel Demmler, Thomas Schneider, and Oleksandr Tkachenko.
\newblock Motion--a framework for mixed-protocol multi-party computation.
\newblock {\em ACM Transactions on Privacy and Security}, 25(2):1--35, 2022.

\bibitem{tan2021cryptgpu}
Sijun Tan, Brian Knott, Yuan Tian, and David~J Wu.
\newblock Cryptgpu: Fast privacy-preserving machine learning on the gpu.
\newblock In {\em 2021 IEEE Symposium on Security and Privacy (SP)}, pages 1021--1038. IEEE, 2021.

\bibitem{miller2021simple}
Jason~Mathew Miller, Logan~H Harbour, Robert~W Carlsen, Andrew~E Slaughter, Brandon~Samuel Biggs~Jr, and Cody~J Permann.
\newblock Simple, secure, internet delivery of moose-based applications.
\newblock Technical report, Idaho National Lab.(INL), Idaho Falls, ID (United States), 2021.

\bibitem{furukawa2017high}
Jun Furukawa, Yehuda Lindell, Ariel Nof, and Or~Weinstein.
\newblock High-throughput secure three-party computation for malicious adversaries and an honest majority.
\newblock In {\em Annual international conference on the theory and applications of cryptographic techniques}, pages 225--255. Springer, 2017.

\bibitem{guo2020efficient}
Chun Guo, Jonathan Katz, Xiao Wang, and Yu~Yu.
\newblock Efficient and secure multiparty computation from fixed-key block ciphers.
\newblock In {\em 2020 IEEE Symposium on Security and Privacy (SP)}, pages 825--841. IEEE, 2020.

\bibitem{10.1145/3195970.3196023}
Bita~Darvish Rouhani, M.~Sadegh Riazi, and Farinaz Koushanfar.
\newblock Deepsecure: scalable provably-secure deep learning.
\newblock In {\em Proceedings of the 55th Annual Design Automation Conference}, DAC '18, pages 1--6, New York, NY, USA, 2018. Association for Computing Machinery.

\bibitem{kolesnikov2008improved}
Vladimir Kolesnikov and Thomas Schneider.
\newblock Improved garbled circuit: Free xor gates and applications.
\newblock In {\em Automata, Languages and Programming: 35th International Colloquium, ICALP 2008, Reykjavik, Iceland, July 7-11, 2008, Proceedings, Part II 35}, pages 486--498. Springer, 2008.

\bibitem{libOTe}
Lance~Roy Peter~Rindal.
\newblock {libOTe: an efficient, portable, and easy to use Oblivious Transfer Library}.
\newblock \url{https://github.com/osu-crypto/libOTe}, 2016.

\bibitem{keller2018overdrive}
Marcel Keller, Valerio Pastro, and Dragos Rotaru.
\newblock Overdrive: Making spdz great again.
\newblock In {\em Annual International Conference on the Theory and Applications of Cryptographic Techniques}, pages 158--189. Springer, 2018.

\bibitem{baum2019using}
Carsten Baum, Daniele Cozzo, and Nigel~P Smart.
\newblock Using topgear in overdrive: a more efficient zkpok for spdz.
\newblock In {\em International Conference on Selected Areas in Cryptography}, pages 274--302. Springer, 2019.

\bibitem{mishra2020delphi}
Pratyush Mishra, Ryan Lehmkuhl, Akshayaram Srinivasan, Wenting Zheng, and Raluca~Ada Popa.
\newblock Delphi: A cryptographic inference system for neural networks.
\newblock In {\em Proceedings of the 2020 Workshop on Privacy-Preserving Machine Learning in Practice}, pages 27--30, 2020.

\bibitem{nikolaenko2013privacy}
Valeria Nikolaenko, Udi Weinsberg, Stratis Ioannidis, Marc Joye, Dan Boneh, and Nina Taft.
\newblock Privacy-preserving ridge regression on hundreds of millions of records.
\newblock In {\em 2013 IEEE symposium on security and privacy}, pages 334--348. IEEE, 2013.

\bibitem{gascon2016privacy}
Adri{\`a} Gasc{\'o}n, Phillipp Schoppmann, Borja Balle, Mariana Raykova, Jack Doerner, Samee Zahur, and David Evans.
\newblock Privacy-preserving distributed linear regression on high-dimensional data.
\newblock {\em Cryptology ePrint Archive}, 2016.

\bibitem{giacomelli2018privacy}
Irene Giacomelli, Somesh Jha, Marc Joye, C~David Page, and Kyonghwan Yoon.
\newblock Privacy-preserving ridge regression with only linearly-homomorphic encryption.
\newblock In {\em Applied Cryptography and Network Security: 16th International Conference, ACNS 2018, Leuven, Belgium, July 2-4, 2018, Proceedings 16}, pages 243--261. Springer, 2018.

\bibitem{gilad2019secure}
Ran Gilad-Bachrach, Kim Laine, Kristin Lauter, Peter Rindal, and Mike Rosulek.
\newblock Secure data exchange: A marketplace in the cloud.
\newblock In {\em Proceedings of the 2019 ACM SIGSAC Conference on Cloud Computing Security Workshop}, pages 117--128, 2019.

\bibitem{rathee2020cryptflow2}
Deevashwer Rathee, Mayank Rathee, Nishant Kumar, Nishanth Chandran, Divya Gupta, Aseem Rastogi, and Rahul Sharma.
\newblock Cryptflow2: Practical 2-party secure inference.
\newblock In {\em Proceedings of the 2020 ACM SIGSAC Conference on Computer and Communications Security}, pages 325--342, 2020.

\bibitem{evans2018pragmatic}
David Evans, Vladimir Kolesnikov, Mike Rosulek, et~al.
\newblock A pragmatic introduction to secure multi-party computation.
\newblock {\em Foundations and Trends{\textregistered} in Privacy and Security}, 2(2-3):70--246, 2018.

\bibitem{goldreich2004foundations}
Oded Goldreich.
\newblock {\em Foundations of Cryptography, Volume 2}.
\newblock Cambridge university press Cambridge, 2004.

\bibitem{lindell2017simulate}
Yehuda Lindell.
\newblock How to simulate it--a tutorial on the simulation proof technique.
\newblock {\em Tutorials on the Foundations of Cryptography: Dedicated to Oded Goldreich}, pages 277--346, 2017.

\bibitem{du2004privacy}
Wenliang Du, Yunghsiang~S Han, and Shigang Chen.
\newblock Privacy-preserving multivariate statistical analysis: Linear regression and classification.
\newblock In {\em Proceedings of the 2004 SIAM international conference on data mining}, pages 222--233. SIAM, 2004.

\bibitem{Shores_2018}
Thomas~S. Shores.
\newblock {\em Applied Linear Algebra and Matrix Analysis}.
\newblock Undergraduate Texts in Mathematics. Springer International Publishing, Cham, 2018.

\bibitem{10.5555/874063.875553}
R.~Canetti.
\newblock Universally composable security: A new paradigm for cryptographic protocols.
\newblock In {\em Proceedings of the 42nd IEEE Symposium on Foundations of Computer Science}, FOCS '01, page 136, USA, 2001. IEEE Computer Society.

\bibitem{Zarowski_2004}
Christopher~J Zarowski.
\newblock {\em An introduction to numerical analysis for electrical and computer engineers}.
\newblock John Wiley \& Sons, 2004.
\newblock Citation Key: zarowski2004introduction.

\bibitem{EVA-S3PC}
Shizhao Peng, Tianrui Liu, Tianle Tao, Derun Zhao, Hao Sheng, and Haogang Zhu.
\newblock Eva-s3pc: Efficient, verifiable, accurate secure matrix multiplication protocol assembly and its application in regression, 2024.

\end{thebibliography}

\appendix

\section{Algorithms of EVA-S3PC}
\label{APP: Algorithms In EVA-S3PC}
This section introduces the S2PM and S2PHM protocols in EVA-S3PC\cite{EVA-S3PC}.
\subsection{S2PM}
The problem definition of S2PM is as follows: 

\begin{problem}[Secure 2-Party Matrix Multiplication]\label{Problem-S2PM}
    Alice has an $n\times s$ matrix $A$ and Bob has an $s\times m$ matrix $B$. They want to conduct the multiplication, such that Alice gets $V_a$ and Bob gets $V_b$, where $V_a+V_b=A\times B$.
\end{problem}

\subsubsection{Description of S2PM}
The S2PM includes three stages: CS pre-processing stage  in Algorithm \ref{alg:S2PM-Preprocessing}, online computation stage in Algorithm \ref{alg:S2PM-Computing}, and result verification stage in Algorithm \ref{alg:S2PM-Verification}.\\

\textbf{Pre-processing Stage.}
In Algorithm \ref{alg:S2PM-Preprocessing}, CS generates random private matrices $R_a$ for Alice and $R_b$ for Bob, where $rank(R_a) < s$ and $rank(R_b) < s$. Subsequently, CS computes $S_t=R_a \cdot R_b$ and generates random matrices $r_a$ for Alice and $r_b$ for Bob, where $r_a + r_b = S_t$. Finally, CS sends a set of matrices $(R_a, r_a, S_t)$ to Alice and $(R_b, r_b, S_t)$ to Bob.

\begin{breakablealgorithm}
    \caption{S2PM CS Pre-processing Stage}
    \label{alg:S2PM-Preprocessing}
    \begin{algorithmic}[1] 
        \Require $n, s, m$
        \Ensure Alice $\Leftarrow$ $(R_a,r_a,S_t)$ and Bob $\Leftarrow$ $(R_b,r_b,S_t)$
        \State $R_a \gets$ generate a random matrix \Comment{$R_a \in \mathbb{R}^{n \times s}, rank(R_a)=min(n,s)-1$}
        \State $R_b \gets$ generate a random matrix \Comment{$R_b \in \mathbb{R}^{s \times m}, rank(R_b)=min(s,m)-1$}
        \State $S_t = R_a \times R_b$ \Comment{$S_t \in \mathbb{R}^{n \times m}$}
        \State $r_a,r_b \gets$ generate random matrices \Comment{$r_a,r_b \in \mathbb{R}^{n \times m}, r_a + r_b = S_t$}
        \State Alice $\Leftarrow(R_a, r_a, S_t)$
        \State Bob $\Leftarrow(R_b, r_b, S_t)$
        \State \Return $(R_a, r_a, S_t)$, $(R_b, r_b, S_t)$
    \end{algorithmic}
\end{breakablealgorithm}

\textbf{Online Stage.}
In Algorithm \ref{alg:S2PM-Computing}, Alice computes $\hat{A} = A + R_a$ and sends $\hat{A}$ to Bob while Bob computes $\hat{B} = B + R_b$ and sends $\hat{B}$ to Alice. Bob then generates a random matrix $V_b$, computes $VF_b=V_b-\hat{A}\times B$, $T = r_b - VF_b$, and then sends  $(VF_b,T)$ to Alice. Finally, Alice computes the matrix $V_a = T + r_a - (R_a \times \hat{B})$, $VF_a = V_a + R_a \times \hat{B}$, and sends $VF_a$ to Bob.

\begin{breakablealgorithm}
    \caption{S2PM Online Computing Stage}
    \label{alg:S2PM-Computing}
    \begin{algorithmic}[1] 
        \Require $A \in \mathbb{R}^ {n \times s}$ and $B \in \mathbb{R}^{s\times m}$
        \Ensure Alice $\Leftarrow(V_a,VF_a)$ and Bob $\Leftarrow(V_b,VF_b)$
        \State $\hat{A} = A + R_a$ and send $\hat{A} \Rightarrow$ Bob \Comment{$\hat{A}\in \mathbb{R}^ {n \times s}$}
        \State $\hat{B} = B + R_b$ and send $\hat{B} \Rightarrow$ Alice \Comment{$\hat{B}\in \mathbb{R}^ {s \times m}$}
        \State $V_b\gets$ generate a random matrix \Comment{$V_b\in \mathbb{R}^ {n \times m}$}
        \State $VF_b = V_b-\hat{A}\times B$ \Comment{$VF_b\in \mathbb{R}^ {n \times m}$}
        \State $T = r_b - VF_b$ \Comment{$T\in \mathbb{R}^ {n \times m}$}
        \State send $(VF_b,T)\Rightarrow$ Alice
        \State $V_a = T + r_a - (R_a \times \hat{B})$ \Comment{$V_a\in \mathbb{R}^ {n \times m}$}
        \State $VF_a=V_a+R_a\times \hat{B}$ and send $VF_a\Rightarrow$ Bob \Comment{$VF_a\in \mathbb{R}^ {n \times m}$}
        \State \Return $(V_a,VF_a)$, $(V_b,VF_b)$ 
    \end{algorithmic}
\end{breakablealgorithm}

\textbf{Verification Stage.}
In Algorithm \ref{alg:S2PM-Verification}, Alice and Bob perform the same steps for $l$ rounds of verification. In each round, a vector $\hat{\delta_a}$ whose elements are all randomly composed of 0 or 1 is generated for the computation of $E_r=(VF_a+VF_b-S_t)\times \hat{\delta_a}$. Accept if $E_r = (0,0,\cdots,0)^T$ holds for all $l$ rounds, reject otherwise.

\begin{breakablealgorithm}
    \caption{S2PM Result Verification Stage}
    \label{alg:S2PM-Verification}
    \begin{algorithmic}[1]
        \Require $VF_a, VF_b, S_t \in \mathbb{R}^ {n \times m}$ and $l > 0$
        \Ensure Accept if verified, Reject otherwise
        \For{$i := 1$ to $l$}
            \State $\hat{\delta_a}\gets$ generate a vector randomly composed of 0 or 1 \Comment{$\hat{\delta_a} \in \mathbb{R}^{m\times 1}$} 
            \State  $E_r=(VF_a+VF_b-S_t)\times \hat{\delta_a}$ \Comment{$E_r \in \mathbb{R}^{n\times 1}$}
            \If{$E_r\neq (0,0,\cdots,0)^T$}
                \State \Return Rejected; 
            \EndIf
        \EndFor
        \State \Return Accepted
    \end{algorithmic}
\end{breakablealgorithm}

\subsection{S2PHM}\label{S2PHM}
The problem definition of S2PHM is as follows:
\begin{problem}[Secure 2-Party Matrix Hybrid Multiplication Problem]
    Alice has private matrices $(A_1,A_2)$ and Bob has private matrices $(B_1,B_2)$, where $(A_1,B_1) \in \mathbb{R}^{n\times s}$,$(A_2,B_2) \in \mathbb{R}^{s\times m}$. They want to conduct the hybrid multiplication $f[(A_1, A_2), (B_1, B_2)] = (A_1 + B_1) \cdot (A_2+B_2)$ in which Alice gets $V_a$ and Bob gets $V_b$ such that $V_a + V_b =  (A_1 + B_1) \cdot (A_2+B_2)$.
\end{problem}

\subsubsection{Description of S2PHM}
In S2PHM, Alice and Bob compute $V_{a0} = A_1 \times A_2, V_{b0} = B_1 \times B_2$ respectively, and then jointly compute $V_{a1} + V_{b1} = A_1 \times B_2$  and $V_{b2}$, $V_{a2} \in \mathbb{R}^{n \times m}$ with S2PM protocol. Finally, Alice sums $V_a = V_{a0} + V_{a1} + V_{a2}$ and Bob sums $V_b = V_{b0} + V_{b1} + V_{b2}$.

\begin{breakablealgorithm}
    \caption{S2PHM}%
    \label{alg:S2PHM}
    \begin{algorithmic}[1] 
        \Require $(A_1,B_1) \in \mathbb{R}^{n\times s}$, $(A_2,B_2) \in \mathbb{R}^{s\times m}$
        \Ensure $V_a, V_b \in\mathbb{R}^ {n \times m}$ 
        \State $V_{a0} = A_1 \times A_2$ \Comment{$V_{a0} \in \mathbb{R}^{n \times m}$}
        \State $V_{b0} = B_1 \times B_2$ \Comment{$V_{b0} \in \mathbb{R}^{n \times m}$}
        \State $V_{a1}, V_{b1}\gets$ S2PM$(A_1, B_2)$ \Comment{$V_{a1}, V_{b1} \in \mathbb{R}^{n \times m}$}
        \State $V_{b2}, V_{a2}\gets$ S2PM$(B_1, A_2)$ \Comment{$V_{a2}, V_{b2} \in \mathbb{R}^{n \times m}$}
        \State $V_a = V_{a0} + V_{a1} + V_{a2}$ \Comment{$V_a \in \mathbb{R}^{n \times m}$}
        \State $V_b = V_{b0} + V_{b1} + V_{b2}$ \Comment{$V_b \in \mathbb{R}^{n \times m}$}
        \State \Return $V_a,V_b$ 
    \end{algorithmic}
\end{breakablealgorithm}

\end{sloppypar}
\end{document}